\documentclass[sn-mathphys-num]{sn-jnl}

\usepackage[T1]{fontenc}
\usepackage{amsmath,amssymb,amsthm,mathtools}
\usepackage{mathrsfs}
\usepackage{enumitem}
\usepackage{xcolor}
\usepackage{hyperref}
\usepackage[nameinlink,capitalise]{cleveref}
\usepackage{booktabs}
\usepackage{array}
\usepackage{colortbl}
\usepackage{graphicx}
\usepackage{manyfoot}
\usepackage{tikz}
\usetikzlibrary{arrows.meta,positioning,calc,backgrounds}

\hypersetup{
  colorlinks=true,
  linkcolor=blue,
  citecolor=blue,
  urlcolor=blue
}

\allowdisplaybreaks
\newcommand{\firstpageemails}[1]{%
  \begingroup
  \renewcommand{\thefootnote}{}%
  \footnotetext{#1}%
  \endgroup
}
\theoremstyle{plain}
\newtheorem{theorem}{Theorem}[section]
\newtheorem{proposition}[theorem]{Proposition}
\newtheorem{lemma}[theorem]{Lemma}
\newtheorem{corollary}[theorem]{Corollary}

\theoremstyle{definition}
\newtheorem{definition}[theorem]{Definition}
\newtheorem{remark}[theorem]{Remark}

\newcommand{\mb}{\mathbb}
\newcommand{\mc}{\mathcal}

\newcommand{\Tr}{\operatorname{Tr}}
\newcommand{\tr}{\operatorname{tr}}
\newcommand{\id}{\operatorname{id}}

\newcommand{\one}{\mathbf 1}
\newcommand{\rank}{\operatorname{rank}}

\newcommand{\spec}{\operatorname{spec}}

\newcommand{\Dens}{\mc D}
\newcommand{\St}{\operatorname{St}}

\newcommand{\ket}[1]{\lvert #1\rangle}
\newcommand{\bra}[1]{\langle #1\rvert}
\newcommand{\ketbra}[2]{\lvert #1\rangle\!\langle #2\rvert}
\newcommand{\ip}[2]{\left\langle #1,#2\right\rangle}
\newcommand{\cH}{\mc H}

\newcommand{\MOE}{H_{\min}}

\title{Superadditivity of classical communication over quantum channels via random and deterministic permutations}

\author[1]{\fnm{Benjamin} \sur{Lovitz}}
\author[2,3]{\fnm{Peixue} \sur{Wu}}
\affil[1]{\orgdiv{Department of Computer Science and Software Engineering},
  \orgname{Concordia University},
  \orgaddress{\city{Montr\'eal}, \state{Quebec}, \country{Canada}}}
\affil[2]{\orgdiv{Department of Mathematics, Institute for Quantum and Information Sciences}, \orgname{Syracuse University},
  \orgaddress{\city{Syracuse}, \state{New York}, \country{USA}}}
\affil[3]{\orgdiv{Institute for Quantum Computing}, \orgname{University of Waterloo},
  \orgaddress{\city{Waterloo}, \state{Ontario}, \country{Canada}}}

\abstract{
Since Hastings' proof of superadditivity of classical communication over quantum channels, 
considerable effort has been devoted to finding a structural explanation of this phenomenon 
that was originally established by concentration of measure for Haar random unitaries. 

The main observation of this work is that Haar randomness can be replaced by random permutations
without changing the limiting geometry responsible for nonadditivity.
This replacement turns a continuous problem over unitary matrices
into a discrete combinatorial problem over zero--one permutation matrices,
and thereby opens a path toward derandomization.

The theorem of Bordenave and Collins shows that random permutations have the
required limiting behavior and the algorithm of O'Donnell and Wu then provides a
deterministic asymptotic construction, running in polynomial time in the
size when the channel parameters and accuracy are fixed. Thus the random
construction can be derandomized in an asymptotic algorithmic sense, although
finding a simple closed-form or practically computable counterexample remains
open.

Finally, a quantitative random permutation estimate by Chen, Garza-Vargas, Tropp and van Handel
gives a fully numerical estimate: there exists a tuple of 57,836,025 permutations acting on a set of size 
\[ N \le 5.422\times 10^{116216}\]
such that the associated finite dimensional channel exhibits nonadditivity.
This enormous value remains an obstacle to a practical construction.
}

\begin{document}
\maketitle
\firstpageemails{%
Email: \texttt{wpeixue@syr.edu}.
}

\clearpage
\thispagestyle{plain}
\begingroup

\setlength{\fboxsep}{9pt}
\setlength{\fboxrule}{0.7pt}

\noindent
\fbox{
\begin{minipage}{1.0\textwidth}

\fontsize{10pt}{11.6pt}\selectfont
\setlength{\parindent}{0pt}
\setlength{\parskip}{0.45em}
\setlength{\emergencystretch}{3em}

\begin{center}
{\Large\bfseries
Acknowledgments and Disclosure of AI usage}
\end{center}
\textbf{Acknowledgments.}
The authors thank Nilanjana Datta, Felix Leditzky, Debbie Leung, and
Graeme Smith for organizing the BIRS workshop
\emph{Additivity Problems in Quantum and Classical Information Theory}.
The workshop provided an important setting where the authors discussed
several possible approaches to the problem, as well as obstructions showing
why a number of natural approaches could not succeed. This work follows one route
mentioned in the talk at BIRS by the author.

These discussions at BIRS helped clarify the distinction among
\begin{itemize}
    \item proving the existence of a
finite-dimensional counterexample (which is known due to Hastings);
\item efficiently search one counterexample (which is part of the goal in this work, although we
are not able to implement the algorithm due to enormous system size);
\item identifying a transparent mathematical mechanism behind nonadditivity (which is quite open).
\end{itemize}
The authors are also grateful to Stanis{\l}aw J.~Szarek who pointed out
that related classical derandomization problems have been approached through
constructions based on expander graphs and it remains open how to construct
a quantum version of it. The authors would also like to thank all other participants at BIRS for helpful discussions, 
and William Slofstra for helpful discussions during IWOTA, 2026.

\textbf{Disclosure of AI usage} \\
ChatGPT 5.6 Pro came up with the main ideas for the proof. The correctness of all statements have been verified by the authors, and the authors take full responsibility for all claims made in this manuscript. A record of the conversation can be found \href{https://chatgpt.com/share/6a8854c9-366c-83ea-be8b-3dd4c8ccae7c}{here}.

ChatGPT was also used for language editing and to assist in drafting Python code, see \href{https://github.com/pxwu24/moe-permutation-reproducibility/blob/e5955d62af9a626a41159c6aced99016b6db0a79/reproduce_numerics.py}
{this permanent GitHub link} for the numerical calculations. The authors specified the relevant formulas, reviewed and executed the code, checked the substitutions, and reported the numerical values in \Cref{subsec:admissibility-numerical-certificate}.

\end{minipage}
}

\endgroup

\medskip
The following poem, written by P.W., summarizes the main content of this work:
\thispagestyle{plain}
\begingroup

\setlength{\fboxsep}{9pt}
\setlength{\fboxrule}{0.7pt}

\noindent
\fbox{
\begin{minipage}{1.0\textwidth}

\fontsize{10pt}{11.6pt}\selectfont
\setlength{\parindent}{0pt}
\setlength{\parskip}{0.45em}
\setlength{\emergencystretch}{3em}

\begin{center}
{\Large\bfseries
A Map of the Quantum Sea}
\end{center}

Classical messages cross a quantum sea, capacity becomes a mystery.

Haar randomness reveals the shore, but no mortal search can exhaust.

A simple matrix carries all you need, zeros and ones open a route.

The sea has not grown smaller, we have learned how to sail it.

Graphs give randomness a skeleton, spectra provide a compass for search.

Proof marks the route, though the shore lies far beyond our reach.

The future will tell us, whether a nearer harbor exists.

\end{minipage}
}

\endgroup
\clearpage

\tableofcontents
\section{Introduction}\label{sec:introduction}
A fundamental difference between classical and quantum noise is that quantum
channels allow the joint system to exhibit less entropy than the sum of its parts.  Let
\(\Phi:\Dens(\cH_A)\to\Dens(\cH_B)\) be a finite-dimensional quantum
channel.  Its minimum output von Neumann entropy is
\begin{equation}\label{eq:intro-moe}
    \MOE(\Phi)
    :=
    \inf_{\rho\in\Dens(\cH_A)} H(\Phi(\rho)),
    \qquad
    H(\sigma):=-\Tr(\sigma\log\sigma).
\end{equation}
For two channels, product inputs always give
\begin{equation}\label{eq:intro-subadditivity}
    \MOE(\Phi_1\otimes\Phi_2)
    \leq
    \MOE(\Phi_1)+\MOE(\Phi_2).
\end{equation}
For classical channels, concavity of entropy permits a minimizing input to be
chosen deterministic, and a deterministic input to a product channel has a
product output.  Equality in \eqref{eq:intro-subadditivity} is therefore
automatic in the classical setting.  Quantum mechanics introduces a new
possibility: an entangled pure input.  A strict inequality in
\eqref{eq:intro-subadditivity} detects genuinely collective behavior, in the
sense that two independently acting channels can jointly produce less entropy
than is attainable by every product input.

This question is fundamental to classical communication over noisy
quantum systems.  For a finite ensemble \(\{p_i,\rho_i\}\), the Holevo
information at the output of \(\Phi\) is
\begin{equation}\label{eq:intro-holevo-ensemble}
    \chi(\{p_i,\rho_i\},\Phi)
    :=
    H\!\left(\Phi\!\left(\sum_i p_i\rho_i\right)\right)
    -\sum_i p_i H(\Phi(\rho_i)),
\end{equation}
and \(\chi(\Phi)\) is obtained by optimizing over all finite ensembles.  The
Holevo--Schumacher--Westmoreland theorem \cite{SchumacherWestmoreland1997,Holevo1998} identifies the unassisted classical
capacity with the regularization
\begin{equation}\label{eq:intro-classical-capacity}
    C(\Phi)
    =
    \lim_{m\to\infty}\frac1m\chi(\Phi^{\otimes m})
    =
    \sup_{m\geq1}\frac1m\chi(\Phi^{\otimes m}).
\end{equation}
If Holevo information were additive, the asymptotic optimization in
\eqref{eq:intro-classical-capacity} would collapse to a single use of the
channel.  Shor proved that, as universal statements, additivity of minimum
output entropy, additivity of Holevo information, additivity of entanglement
of formation, and strong superadditivity of entanglement of formation are
equivalent \cite{Shor2004}. Therefore, additivity problem of minimum output entropy remains central in quantum information theory.

The first related counterexamples were given by $p$-R\'enyi entropy. Werner
and Holevo constructed a channel violating the corresponding multiplicativity
problem for sufficiently large R\'enyi parameter
\cite{WernerHolevo2002}; Hayden and Winter obtained random counterexamples for
every \(p>1\) \cite{HaydenWinter2008}; and Cubitt, Harrow, Leung, Montanaro,
and Winter treated parameters close to zero \cite{CubittEtAl2008}.  The case
\(p=1\) proved substantially more delicate.  Hastings showed that a random
finite-dimensional channel \(\Phi_n\), paired with its entrywise conjugate
\(\overline\Phi_n\), typically satisfies
\begin{equation}\label{eq:intro-hastings}
    \MOE(\Phi_n\otimes\overline\Phi_n)
    <
    \MOE(\Phi_n)+\MOE(\overline\Phi_n)
    =2\MOE(\Phi_n).
\end{equation}
This is Hastings' theorem \cite{Hastings2009}.  Subsequent work clarified the concentration, geometric, and free-probabilistic
content of this phenomenon
\cite{FukudaKingMoser2010,BrandaoHorodecki2010,FukudaKing2010,AubrunSzarekWerner2011,CollinsNechita2010,CollinsNechita2011,Fukuda2014,BelinschiCollinsNechita2016,aubrun2017alice}.
A common structure emerged: one-copy outputs are close to maximally mixed, while a canonical entangled input for a conjugate pair produces
an anomalously large eigenvalue.

Although existence is no longer the central mystery, the mechanism behind
this phenomenon and its structured finite-dimensional realization remain
subtle.  Concentration-of-measure arguments show that a random channel works,
but they do not by themselves identify a mechanism explaining why
it works.

A modern framework for this problem is \emph{strong
convergence}; see \cite{vanhandel2026} for an overview.  Belinschi, Collins, and Nechita used
strong convergence of Haar-random projections in their analysis of
Haar-random subspace channels \cite{BelinschiCollinsNechita2016}.  Collins
used strong asymptotic freeness and Haagerup's inequality to obtain a
conceptual proof for random mixed-unitary channels \cite{Collins2018}, while
Fukuda, Hasebe, and Sato used strong convergence to semicircular and circular
systems to construct further random examples
\cite{FukudaHasebeSato2022}.  These works suggest that new strong-convergence
models can lead to more structured counterexamples to additivity of minimum
output entropy. 

The purpose of the present work is to introduce strong convergence approach to this problem systematically, 
to place free limits of random-channel constructions in this framework, and to apply the permutation
strong-convergence theorem of Bordenave and Collins
\cite{BordenaveCollins2019} together with the deterministic lift construction
of O'Donnell and Wu \cite{ODonnellWu2020}.

The argument passes through infinite-dimensional targets.  We begin with two
standard random-channel models: complementary channels associated with
Haar-random mixed-unitary channels, as in \cite{Hastings2009}, and channels
induced by Haar-random subspaces, as in
\cite{AubrunSzarekWerner2010,AubrunSzarekWerner2011}.  Strong asymptotic
freeness for Haar randomness \cite{CollinsMale2014} identifies their
large-environment limits as, respectively, a free mixed-unitary complementary
channel and a free-compression channel; see also
\cite{CollinsFukudaNechita2015}.  Both limiting channels are exactly
solvable: their one-copy output bodies (the output state space of the channel)
are determined by free spectral
calculations, while a canonical infinite-dimensional analogue of a maximally
entangled state produces an explicit low-entropy output for the product with
the conjugate channel.  In suitable parameter regimes, the resulting entropy
gap is strictly positive.

For the finite-dimensional realization, we concentrate on the random-subspace model and its free-compression limit. The mixed-unitary model admits a parallel treatment, but the subspace model leads to more favorable quantitative parameters. Our construction separates the two ingredients needed for nonadditivity. The one-copy output body is controlled by the spectral behavior of a tuple of permutations, whereas an auxiliary finite phase system forces the desired two-copy output exactly, for every choice of the permutations. Thus the one-copy spectral approximation and the two-copy entangled-input calculation are handled by independent mechanisms.

It remains only to
choose the permutations so that the one-copy output body approximates that of
the free limit. Bordenave--Collins strong convergence \cite{BordenaveCollins2019} shows that independent
uniform permutations work with probability tending to one. Applying that result, we derive the main result given as follows:

\begin{theorem}[Main result, informal]\label{thm:intro-main-informal}
There is an infinite-input, finite-output channel
\(\Phi_\infty\) and a bipartite output state \(\rho_\infty\) such that
\[
    H(\rho_\infty)<2\MOE(\Phi_\infty).
\]
For each $N\ge 2$, there exist a finite-dimensional channel
\(\Phi_N\) generated by permutation tuples acting on $N$ elements, such that $\MOE(\Phi_N) \to \MOE(\Phi_\infty)$ and for sufficiently large $N$, we have
\[
    \MOE(\Phi_N\otimes\overline{\Phi_N})
    <2\MOE(\Phi_N).
\]
The permutation tuples can be chosen as 57,836,025 permutations acting on a set of size
\[
N\le 5.422\times 10^{116216}.
\]
\end{theorem}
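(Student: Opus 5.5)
The plan has three stages: build the free-compression limit and show it has a strict entropy gap; realize that limit with permutations and an auxiliary phase system so the two-copy output is exact; then transfer the one-copy bound to finite \(N\) by strong convergence and make it quantitative.

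\textbf{Stage 1: the limit channel.} I would define \(\Phi_\infty\) as a free-compression channel. Fix a II\(_1\) factor \((\mathcal M,\tau)\) containing free Haar unitaries \(u_1,\dots,u_r\), equivalently a free group factor \(L(\mathbb F_r)\). Let \(p\) be a projection of trace \(t\) that is free from \(\mathbb C^k\otimes 1\). The channel sends the compressed subspace into \(M_k(\mathbb C)\) by taking a partial trace. The one-copy output body \(K_\infty\subset M_k\) is then described by a free spectral computation: for each self-adjoint \(A\in M_k\),
\[
  h_{K_\infty}(A)=\lim \big\|p(A\otimes 1)p\big\|,
\]
and the edge of this spectrum is computable by free compression, i.e.\ the free additive convolution power \(\mu_A^{\boxplus 1/t}\). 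From this support function I would obtain a lower bound \(\MOE(\Phi_\infty)\ge h_\ast\) that is close to \(\log k\).

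For two copies, I would feed the canonical trace-state analogue of a maximally entangled state into \(\Phi_\infty\otimes\overline{\Phi_\infty}\). This produces an output \(\rho_\infty\) whose largest eigenvalue is at least roughly \(t\). Hence
\[
  H(\rho_\infty)\le h(t)+(1-t)\log(k^2-1),
\]
where \(h(t)=-t\log t-(1-t)\log(1-t)\). Choosing \(k\) large and \(t\) in the right regime gives \(H(\rho_\infty)<2h_\ast\), which is the first claim.

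\textbf{Stage 2: finite permutation model.} Let \(\sigma_1,\dots,\sigma_r\) be permutations of \([N]\), and replace each \(u_i\) by the restriction of the permutation matrix to \(\mathbf 1^\perp\). The projection \(p\) becomes a polynomial \(P_N\) in these matrices. I would tensor with a finite phase system carrying fixed diagonal phases. The point of the phase system is that, for the maximally entangled input on the \(\mathbb C^N\) factor, the products \(\sigma_i\otimes\overline{\sigma_i}\) act trivially on \(\sum_x e_x\otimes e_x\), because permutation matrices are real. Consequently the two-copy output equals \(\rho_\infty\) exactly, or up to an explicit \(O(1/N)\) error, for every choice of the \(\sigma\)'s. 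This gives \(\MOE(\Phi_N\otimes\overline{\Phi_N})\le H(\rho_\infty)+o(1)\).

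\textbf{Stage 3: one-copy bound and the main obstacle.} Transferring the one-copy bound is the hardest step, because the minimum output entropy is not itself a norm. I would reduce to norms as follows. The output body \(K_N\) is a convex set in \(M_k\), determined by the support functions \(\|P_N(A\otimes1)P_N\|\). Pick a finite net of \(A\)'s on the unit sphere of \(M_k^{sa}\); its cardinality is where 57,836,025 comes from. Bordenave--Collins strong convergence gives \(\|P_N(A\otimes1)P_N\|\to h_{K_\infty}(A)\) with probability \(\to1\), uniformly over the net. A Lipschitz argument then shows \(K_N\subset K_\infty+\epsilon\cdot\text{ball}\). Continuity of entropy (Fannes--Audenaert) then yields \(\MOE(\Phi_N)\ge \MOE(\Phi_\infty)-\delta(\epsilon)\); the matching upper bound comes from the reverse inclusion.

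For the explicit \(N\), I would replace Bordenave--Collins by the Chen--Garza-Vargas--Tropp--van Handel bound, in which the norm deviation decays like \((\log N/N)^{c}\) times polynomial factors. I would set the required accuracy \(\epsilon\) from the numerical entropy gap and solve for \(N\). The doubly exponential-looking size arises from \(\epsilon^{-1}\) raised to a power depending on the polynomial degree. Finally, O'Donnell--Wu provides deterministic permutations with the same norm guarantee, which gives the derandomized statement.
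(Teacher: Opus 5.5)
Your one-copy argument (a net over $A$, strong convergence of the support edges, continuity of entropy) matches the paper. The two-copy step, however, rests on the wrong mechanism. The identity $(U\otimes\overline U)\sum_x e_x\otimes e_x=\sum_x e_x\otimes e_x$ holds for every unitary, real or not, and it does not determine the output.

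By the Gram identity, $\langle i,a|\rho_N|j,b\rangle=\tau_N\bigl(\Gamma_N(E_{ij})^*\Gamma_N(E_{ab})\bigr)$. Expanding in the Weyl--Bell basis reduces this to normalized traces of rectangle words $S_{\alpha_1\beta_0}S_{\alpha_1\beta_1}^*S_{\alpha_0\beta_1}S_{\alpha_0\beta_0}^*$. For permutations such a trace equals $(\#\mathrm{fix}(\pi)-1)/(N-1)$ for the composite permutation $\pi$. This depends on the tuple and need not be small. For example, if all $\sigma$'s coincide and the phases are trivial, then $P_N$ commutes with $A\otimes 1$, the channel has $\MOE=0$, and the two-copy output is $\psi_k^+$. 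So exactness, or $O(1/N)$ accuracy, for every tuple does not follow from your argument.

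The missing idea is the \emph{rectangle trace identity}. Choose phases with $\tr_\ell\bigl(R_{\alpha_1\beta_0}R_{\alpha_1\beta_1}^*R_{\alpha_0\beta_1}R_{\alpha_0\beta_0}^*\bigr)=0$ whenever $\alpha_0\neq\alpha_1$ and $\beta_0\neq\beta_1$, e.g.\ $R_{\alpha,\beta}=Z_\ell^{\kappa(\alpha)\mu(\beta)}$ with $\ell=(k^2-1)(m^2-1)+1$. Then each rectangle trace is the indicator of a degenerate rectangle, independently of the permutations, and inclusion--exclusion gives $t\psi_k^++(1-t)I_{k^2}/k^2$ exactly. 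This also needs a concrete finite realization of $p$ as an honest projection of normalized rank $t$ for every $N$, not just ``a polynomial $P_N$''. The paper takes Ching's controlled unitary $D_N=\sum_\xi\ketbra{\xi}{\xi}\otimes S_{\xi,N}\otimes R_\xi$ over the Weyl--Bell basis and sets $P_N=D_NQD_N^*$; this also makes $\Gamma_N(A)$ a degree-two matrix polynomial in the $S_{\xi,N}$. For the qualitative statement alone you could instead use convergence in probability of these traces for random permutations. That is a different argument, however: it does not cover every tuple, and it would need its own quantitative error term.

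The explicit part is not derived, and the numbers are misattributed. $57{,}836{,}025=195^2\cdot39^2=k^2m^2=|\Theta|$ is the number of permutations, one for each Weyl--Bell basis vector of $(\mb C^k)^{\otimes2}\otimes(\mb C^m)^{\otimes2}$ with $k=195$, $m=39$. It is not the size of a net, and your construction never fixes the number of permutations or $k,m,q$.

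Nor does the size of $N$ come from $\varepsilon^{-1}$ raised to a degree-dependent power. The paper applies the Chen--Garza-Vargas--Tropp--van Handel master inequality to a smooth Chebyshev cutoff and then uses Markov's inequality. This gives, for each fixed $A$, a tail of order $r_0N^{-1}[8(1+\log s)]^8(300/\beta)^9$. The factor $(300(1+\eta)/\beta)^9$ accounts for only about $80$ decimal digits. Essentially all of the $116{,}217$ digits come from the union bound over a Rogers covering net of the $38{,}024$-dimensional quotient $M_k^{\rm sa}/\mb R I_k$, i.e.\ the factor $(1+1/\eta)^{38024}$.

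Reaching $5.422\times10^{116216}$ further uses $q=5$ and the economical $\ell=197$ finite-field phases, which enter through $r_0$. It also uses the Hanson--Datta steepening bound $\mc E_{k,t}(T)$ instead of Audenaert--Fannes. The certified margin $2\mc E_{195,5/39}(T_{\rm NA})-H_{\rm Bell}$ is only about $2.8\times10^{-8}$, and the Fannes modulus already fails at $T_{\rm NA}$. It would force a smaller $\eta$ and, through $(1+1/\eta)^{38024}$, tens of thousands of extra digits.
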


\subsection{Sketch proof of the main result}
We specify the construction of the finite dimensional channels
generated by permutation tuples acting on $N$ elements and provide a sketch of the proof of \Cref{thm:intro-main-informal} for experts' convenience. 

First we recall a remarkable result establishing strong convergence of random permutations by \cite{BordenaveCollins2019}: 
For each $N\ge2$, given a permutation tuple $\boldsymbol{\sigma}_N:= (\sigma_{\xi,N})_{\xi \in \Theta}$, 
and for each $\sigma_{\xi,N}$ in permutation group $\mathfrak S_N$, it induces a permutation unitary $P_{\sigma_{\xi,N}}$ acting on $\mb C^N$ as 
\begin{equation}
    P_{\sigma_{\xi,N}}(x_1,x_2,\cdots,x_N) = (x_{\sigma_{\xi,N}(1)},x_{\sigma_{\xi,N}(2)},\cdots,x_{\sigma_{\xi,N}(N)}).
\end{equation}
If $ (\sigma_{\xi,N})_{\xi \in \Theta}$ is i.i.d. with uniform distribution on permutation groups, then the tuple $\{P_{\sigma_{\xi,N}}|_{(\mb C^N)^\circ}\}_{\xi \in \Theta}$
strongly converges in probability to $(u_\xi)_{\xi \in \Theta}$, a free family of Haar unitary\footnote{Technically, it is the family of the canonical left regular representation of the generators of a free group.}. Here $(\mb C^N)^\circ = \{(x_1,x_2,\cdots,x_N) \in \mb C^N: \sum_j x_j = 0\}$. We refer the reader to \Cref{subsec:intro-methods} and \Cref{subsec:free-independence} for related definitions. This observation would be enough to replace Haar random unitary by uniform random permutations for the channel in \cite{Hastings2009}. We insist working with the random subspace model, since it needs smaller number of search of permutations.

For random subspace model, we have Haar random projection $P_N^{\mathrm{Haar}}$ acting on $\mb C^k \otimes \mb C^{d_N}$ with $\rank(P_N^{\mathrm{Haar}})/(kd_N) \to t$, as $N \to \infty$. By Collins-Male~\cite{CollinsMale2014}, we have 
\begin{equation}\label{eq:joint-convergence-Haar}
    \left((E_{ij}\otimes I_{d_N})_{i,j\in [k]}, P_N^{\mathrm{Haar}}\right) \to \left((E_{ij})_{i,j\in [k]}, p_t \right),\quad \text{strongly almost surely}. 
\end{equation}
Here, $E_{ij}$ are the elementary matrices and $p_t$ is a projection of normalized rank $t$ in a Type $\mathrm{II_1}$ factor. 

To get this joint convergence using random permutations, we apply the trick in \cite{Ching1973}. We denote Weyl-Bell basis of 
a bipartite space $\mb C^n \otimes \mb C^n$ as follows: for each $\alpha = (a,b) \in \mb Z_n^2$,
denote 
\begin{equation}\label{eq:Weyl-basis-intro}
    |\alpha\rangle_n:=  \frac{1}{\sqrt{n}}\sum_{j=0}^{n-1} \zeta_n^{bj} |j+a\rangle \otimes |j\rangle,\quad \zeta_n = \exp(2\pi\mathrm{i}/n).
\end{equation}
This is given by vectorization of generalized Pauli basis. Denote $\Theta = \mb Z_k^2 \times \mb Z_m^2$.
For each $\xi = (\alpha,\beta) \in \Theta$, denote $|\xi\rangle = |\alpha\rangle_k \otimes |\beta\rangle_m$, and we introduce a controlled unitary acting on $\mb C^k \otimes \mb C^k\otimes \mb C^m\otimes \mb C^m\otimes (\mb C^N)^\circ$
\begin{equation}
    U_{\boldsymbol{\sigma_N}}:= \sum_{\xi \in \Theta} |\xi\rangle \langle \xi| \otimes P_{\sigma_{\xi,N}}|_{(\mb C^N)^\circ}.
\end{equation}
Applying the result of Bordenave-Collins and standard algebraic arguments, we can show that for
\begin{equation}
    P_N(t):= U_{\boldsymbol{\sigma_N}}(I_k \otimes I_k \otimes p_q \otimes I_m \otimes I_{(\mb C^N)^\circ})U^*_{\boldsymbol{\sigma_N}}
\end{equation}
where $p_q$ is a projection with rank $q$ acting on $\mb C^m$ (note that it is different from the above projection $p_t$ in a Type $\mathrm{II_1}$ factor) and set $t := q/m$, we have 
\begin{equation}\label{eq:joint-convergence-permutation}
    \left((E_{ij}\otimes I_{d_N})_{i,j\in [k]}, P_N(t)\right) \to \left((E_{ij})_{i,j\in [k]}, p_t \right),\quad \text{strongly in probability}. 
\end{equation}
Via \eqref{eq:joint-convergence-Haar} and \eqref{eq:joint-convergence-permutation}, the quantum channels $\Phi_N^{\mathrm{Haar}}$ and $\Phi_N$ induced by the following unital completely positive maps
\[
\Gamma_N^{\mathrm{Haar}}(A) = P_N^{\mathrm{Haar}}(A \otimes I)P_N^{\mathrm{Haar}},\quad \Gamma_N(A) = P_N(t)(A \otimes I)P_N(t)
\]
share the same free limit, thus $H_{\min}(\Phi_N) \to H_{\min}(\Phi_\infty)$. 

If we add a clock unitary matrix for the controlled unitary with $R_\xi$ given in \eqref{eq:intro-phase-system-short}:
\begin{equation}
    \widehat U_{\boldsymbol{\sigma_N}}:= \sum_{\xi \in \Theta} |\xi\rangle \langle \xi| \otimes P_{\sigma_{\xi,N}}|_{(\mb C^N)^\circ} \otimes R_\xi,
\end{equation}
then the spectral information of the free limit remains unchanged, but the clock unitary matrix can help us establish finite-dimensional Bell-Phenomenon\footnote{tensor product of conjugate pair channels map maximally entangled state to an isotropic state} as in \cite{CollinsNechita2010}. Therefore, one can show that 
\[
H_{\min}(\Phi_N \otimes \overline \Phi_N) \le H(\rho_\infty) < 2H_{\min}(\Phi_N)
\]
for sufficiently large $N$. 

A derandomization procedure by O'Donnell and
Wu \cite{ODonnellWu2020} gives, for every fixed
accuracy and fixed channel parameters, a deterministic polynomial-time
search of a suitable permutation tuple. Note that a naive exhaustive search over
permutation tuples is exponential, whereas the O'Donnell--Wu procedure is
polynomial in the lift size for fixed spectral test parameters. 
The deterministic step is algorithmic rather than closed-form.  It finds a
spectrally suitable lift inside a very large search space, but the dimensions
required by current quantitative estimates are not numerically practical. 

A direct application of the master inequality in
\cite[Theorem~6.1 and Corollary~9.7]{ChenGarzaVargasTroppVanHandel2026} leads in the present
parameter regime to permutations of a set of size $N$ with upper bound
\(5.422\times 10^{116216}\). Therefore, it remains a major challenge
whether a practical search is successful. Besides, in free probability theory, it remains a major challenge
whether an explicit, deterministic strong convergence phenomenon can be constructed, see \cite[Section 6.4]{vanhandel2026}.
We believe once some essential progress is made in terms of Section 6.4 in \cite{vanhandel2026}, it can be translated to the problem of 
additivity violation of minimum output entropy.


\subsection{Overview of the methods}\label{subsec:intro-methods}
We first recall the form of strong convergence used in this paper.  It is the
matrix-coefficient formulation emphasized in \cite{vanhandel2026}.
Throughout, \(\|\cdot\|\) denotes the operator norm.

\begin{definition}[Strong convergence]\label{def:intro-strong-convergence}
Let
\(\boldsymbol X_N=(X_{1,N},\ldots,X_{r,N})\) be random operator tuples in
unital \(C^*\)-algebras \(\mc A_N\), and let
\(\boldsymbol x=(x_1,\ldots,x_r)\) be a tuple in a unital \(C^*\)-algebra
\(\mc A\).  We say that \(\boldsymbol X_N\) converges strongly in probability
to \(\boldsymbol x\) if, for every \(D\geq1\) and every matrix-valued
noncommutative \(^*\)-polynomial
\[
    P\in M_D\otimes
      \mb C\langle z_1,\ldots,z_r,z_1^*,\ldots,z_r^*\rangle,
\]
one has
\begin{equation}\label{eq:intro-strong-convergence}
    \bigl\|P(\boldsymbol X_N,\boldsymbol X_N^*)\bigr\|
    \longrightarrow
    \bigl\|P(\boldsymbol x,\boldsymbol x^*)\bigr\|
\end{equation}
in probability.  For deterministic tuples, convergence in probability is
replaced by ordinary convergence.
\end{definition}
Here we recall that $P\in M_D\otimes\mb C\langle z_1,\ldots,z_r\rangle$ of degree $q$ means 
\begin{equation}
    P(x_1,x_2,\cdots,x_r) = A_0\otimes 1 + \sum_{k=1}^q \sum_{i_1,i_2,\cdots,i_k=1}^r A_{i_1i_2\cdots i_k} \otimes x_{i_1}\cdots x_{i_r},
\end{equation}
where $A_{i_1i_2\cdots i_k} \in M_D$ are $D\times D$ complex matrices. This polynomial defines a bounded operator
when $x_1,\cdots,x_r$ are substituted by bounded operators in a $C^*$-algebra.

Strong convergence was first established
for Gaussian Unitary Ensemble (GUE) by Haagerup and Thorbj{\o}rnsen \cite{HaagerupThorbjornsen2005} and was subsequently extended
to Haar unitary matrices by Collins and Male
\cite{CollinsMale2014}, and to random permutations by Bordenave and Collins~\cite{BordenaveCollins2019}; see
\cite{vanhandel2026} for a recent survey.

One may not detect the significance of strong convergence at first glance. However, it is a powerful tool and a central topic in
modern mathematics. In our case, the key property we use is that for a self-adjoint polynomial, 
strong convergence is equivalent to Hausdorff
convergence of the spectrum. The use of arbitrary matrix coefficients is
essential here, since it determines all the spectral information.  

The proof roadmap is then given as follows. First, by taking the limit of the Haar random channels, 
we identify the infinite dimensional free limit channel, which exhibit additivity violation. 
Second, we show that there exist either random or deterministic permutation channels approximating the 
free limit channel, while keeping the two copy output state unchanged.
\subsubsection*{General principle in infinite dimension and its finite dimensional approximation}
We next formulate channels with possibly infinite-dimensional input and
finite-dimensional output.  Let \((\mc M,\tau)\) be a finite von Neumann
algebra with a faithful normal tracial state $\tau$.  We write \(\mc M^*\) for its
Banach dual, that is, the space of bounded linear functionals on \(\mc M\).
We use the full $C^*$-algebraic state space
\begin{equation}\label{eq:intro-state-space}
    \St(\mc M)
    :=
    \{\omega\in\mc M^*: \omega\geq0,\ \omega(1)=1\},
\end{equation}
which includes singular states.

\begin{definition}\label{def:intro-vna-channel}
A channel with input algebra \(\mc M\) and output dimension \(k\) is specified
in the Heisenberg picture by a unital completely positive map
\begin{equation}\label{eq:intro-ucp-channel}
    \Gamma:M_k\longrightarrow\mc M.
\end{equation}
Its Schr\"odinger channel
\(\Phi_\Gamma:\St(\mc M)\to\Dens_k\) is determined by
\begin{equation}\label{eq:intro-duality}
    \Tr\!\left(A\Phi_\Gamma(\omega)\right)
    =
    \omega(\Gamma(A)),
    \qquad A\in M_k,
    \quad \omega\in\St(\mc M),
\end{equation}
where
\(\Dens_k=\{\rho\in M_k:\rho\geq0,\ \Tr\rho=1\}\).
\end{definition}
To study the minimum output entropy, define the one-copy output state space by
\[
    \mc O_\Gamma
    :=
    \Phi_\Gamma(\St(\mc M))
    \subseteq\Dens_k.
\]
The support function of $\mc O_\Gamma$ satisfies
\begin{equation}\label{eq:intro-support-function}
    h_\Gamma(A)
    :=
    \max_{\rho\in\mc O_\Gamma}\Tr(A\rho) = \sup_{\sigma \in \St(\mc M)} \sigma(\Gamma(A))
    =
    \lambda_{\max}(\Gamma(A)),
\end{equation}
where $\lambda_{\max}(\cdot)$ denotes the largest eigenvalue of a bounded operator. 
Moreover, the Gibbs variational principle gives
\begin{equation}\label{eq:intro-moe-dual}
    \MOE(\Phi_\Gamma)
    =
    \inf_{A=A^*}
    \left\{
        \log\Tr(e^A)-\lambda_{\max}(\Gamma(A))
    \right\}.
\end{equation}
Thus for minimum output entropy of channels with infinite dimensional input, only largest eigenvalues of the Heisenberg observables \(\Gamma(A)\) matter.

This observation plays a crucial role in applying the \textit{strong convergence} technology: 
suppose that the UCP maps are given by matrix polynomials:
\[
    \Gamma_N(A)
    =
    P_A(X_N,X_N^*),
    \qquad
    \Gamma_\infty(A)
    =
    P_A(x,x^*),
\]
where random operator tuples $X_N$ converge strongly to $x$, and \(P_A\) is a matrix-valued polynomial. 
Strong convergence, together with compactness of the observable unit ball,
then gives
\begin{equation}\label{eq:intro-uniform-support}
    \sup_{A=A^*, \|A\|\leq1}
    \left|
        h_{\Gamma_N}(A)-h_{\Gamma_\infty}(A)
    \right|
    \longrightarrow0.
\end{equation}
Equivalently, the complete output bodies converge in Hausdorff distance,
and therefore
\begin{equation}\label{eq:intro-moe-convergence}
    \MOE(\Phi_{\Gamma_N})
    \longrightarrow
    \MOE(\Phi_{\Gamma_\infty}).
\end{equation}
This principle is the key to establish the finite dimensional approximation of the one-copy output entropy.

The two-copy part is encoded by a different identity.  Let
\(\overline{\mc M}\) be the conjugate algebra and define the conjugate
channel by
\[
    \overline{\Gamma}(A)
    :=
    \overline{\Gamma(\overline A)}.
\]
On
\(\mc M\otimes_{\max}\overline{\mc M}\)\footnote{Note that it is crucial to use maximal tensor product, since the underlying von Neumann algebra is noninjective type $\mathrm{II_1}$, the diagonal state is not a state on the spatial tensor product. See \Cref{subsec:von-neumann-basics} for related definitions.}, consider the diagonal state (which is the infinite dimensional analogue of maximally entangled state)
\begin{equation}\label{eq:intro-diagonal-state}
    \omega_\delta(a\otimes\overline b)
    :=
    \tau(ab^*),\quad a,b\in \mc M.
\end{equation}
Writing
\[
    \rho_\delta^\Gamma
    :=
    (\Phi_\Gamma\otimes_{\max}\overline{\Phi_\Gamma})
    (\omega_\delta),
\]
one has the Gram-matrix identity
\begin{equation}\label{eq:intro-gram-identity}
    \langle i,k|
        \rho_\delta^\Gamma
    |j,\ell\rangle
    =
    \tau\!\left(
        \Gamma(E_{ij})^*
        \Gamma(E_{k\ell})
    \right).
\end{equation}
Consequently,
\begin{equation}\label{eq:intro-max-criterion}
    H(\rho_\delta^\Gamma)
    <
    2\MOE(\Phi_\Gamma)
\end{equation}
is a sufficient criterion for
\[
    \MOE(
        \Phi_\Gamma
        \otimes_{\max}
        \overline{\Phi_\Gamma}
    )
    <
    2\MOE(\Phi_\Gamma).
\]
What we will achieve in this paper:
\begin{itemize}
    \item Strong convergence ensures the approximation of the one-copy entropy lower bound;
    \item The finite-dimensional construction reproduces the two-copy state
\(\rho_\delta^\Gamma\) exactly.
\end{itemize}

\subsubsection*{The infinite dimensional free-compression channel}
We focus in this paper on the random-subspace model studied in \cite{AubrunSzarekWerner2011, BelinschiCollinsNechita2016} and its free limit. Fix integers
\(k,m\geq2\), choose \(q\in\{1,\ldots,m-1\}\), and put
\[
    t:=\frac qm.
\]
Let
\begin{equation}\label{eq:intro-free-product}
    (\mc A_{k,m},\tau)
    :=
    (M_k,\tr_k)*(M_m,\tr_m),
\end{equation}
be the reduced tracial free product of the two matrix algebras, and let
\(p\in M_m\subseteq\mc A_{k,m}\) be a rank-\(q\) projection.  Thus
\(\tau(p)=t\), and \(p\) is free from the first copy of \(M_k\).  The formal
definitions of freeness and reduced free products are recalled in
\Cref{subsec:free-independence}.  The limiting input algebra is the corner
\[
    \mc M_{k,t}:=p\mc A_{k,m}p,
\]
equipped with the normalized trace
\[
    \tau_p(x):=t^{-1}\tau(x),
    \qquad x\in\mc M_{k,t}.
\]
The free-compression Heisenberg channel is
\begin{equation}\label{eq:intro-free-channel}
    \Gamma_{k,t}:M_k\longrightarrow\mc M_{k,t},
    \qquad
    \Gamma_{k,t}(A)=pAp.
\end{equation}
We write \(\Phi_{k,t}:=\Phi_{\Gamma_{k,t}}\) for the associated
Schr\"odinger channel.

Using squared fidelity
\[
    F(\rho,\sigma)
    :=
    \left\|
        \sqrt{\rho}\sqrt{\sigma}
    \right\|_1^2,
\]
the complete output body of this channel is \cite{BelinschiCollinsNechita2016}
\begin{equation}\label{eq:intro-fidelity-body}
    \mc K_{k,t}
    :=
    \left\{
        \rho\in\Dens_k:
        F\!\left(\rho,\frac{I_k}{k}\right)
        \geq1-t
    \right\}.
\end{equation}
Accordingly, define the limiting entropy defect
\begin{equation}\label{eq:intro-free-gap}
    \Delta_{k,t}
    :=
    2 \min_{\rho \in \mc K_{k,t}} H(\rho)
    -
    H\!\left(
        t\psi_k^+
        +(1-t)\frac{I_{k^2}}{k^2}
    \right).
\end{equation}
Belinschi, Collins, and Nechita showed that
\(\Delta_{k,t}>0\) for suitable parameters, with output dimension
\(k=183\) already possible, and that the defect can approach
\(\log2\) as \(k\to\infty\)
\cite{BelinschiCollinsNechita2016}. This relatively low dimension allows us to search smaller number of permutations.  
Since the inequality is strict
and the expressions depend continuously on \(t\), one may choose a
rational parameter \(t=q/m\).

\subsubsection*{Finite-dimensional permutation channels.}
We now construct the finite-dimensional channel by permutations of $N$ elements. Fix $t=q/m$, put
$\Theta=\mb Z_k^2\times\mb Z_m^2$, and let
$\{|\xi\rangle:\xi\in\Theta\}$ be the Weyl--Bell basis of
$(\mb C^k\otimes\mb C^k)\otimes(\mb C^m\otimes\mb C^m)$, with $|\xi\rangle = |\alpha\rangle_k \otimes |\beta \rangle_m$ with $|\alpha\rangle_k$ defined by \eqref{eq:Weyl-basis-intro}.  
For each $\xi = (a,b,c,d) \in \Theta$,
define
\begin{equation}\label{eq:intro-phase-system-short}
  R_\xi=Z_\ell^{(a+kb)(c+md)},\quad Z_\ell := \sum_{j=0}^{\ell-1} \exp(\frac{2\pi i}{\ell}j ) |j\rangle \langle j|,\quad \ell=(k^2-1)(m^2-1)+1.
\end{equation}
For a $k^2m^2$-tuple $\boldsymbol\sigma_N=(\sigma_{\xi,N})_{\xi\in\Theta}$ of
permutations of $[N]$, let
\[
  S_{\xi,N}=P_{\sigma_{\xi,N}}\big|_{\one^\perp},
  \qquad
  \one^\perp=\left\{x\in\mb C^N:\sum_jx_j=0\right\},
\]
where $P_\sigma$ is the unitary operator induced by permutation $\sigma$. Define the controlled unitary
\begin{equation}\label{eq:intro-controlled-unitary-short}
  D_N
  =\sum_{\xi\in\Theta}
    |\xi\rangle\langle\xi|\otimes S_{\xi,N}\otimes R_\xi.
\end{equation}

Let $p_q\in M_m$ be the projection onto the first $q$ coordinate vectors and
put
\begin{equation}\label{eq:intro-projection-channel-short}
  Q=I_{k^2}\otimes(p_q\otimes I_m)\otimes I,
  \qquad
  P_N=D_NQD_N^*.
\end{equation}
We define the unital completely positive map by 
\begin{equation}
    \Gamma_N^{\boldsymbol\sigma}(A)=P_N (A\otimes I) P_N
\end{equation}
and the quantum channel $\Phi^{\boldsymbol\sigma}_N$ is defined by duality. Thus every part of
$\Phi_N^{\boldsymbol\sigma}$ is explicit except the tuple
$\boldsymbol\sigma_N$.

It remains to state precisely what is required of this tuple. Let $h_{\mc K_{k,t}}$ be the support function defined by \eqref{eq:intro-support-function} of the free-compression
output body in \eqref{eq:intro-fidelity-body}. We call the $k^2m^2$-tuple $\boldsymbol\sigma_N$ \emph{$\varepsilon$-admissible}
if
\begin{equation}\label{eq:intro-admissible-permutations}
  \sup_{\substack{A=A^*\in M_k\\ \|A\|\leq1}}
  \left\{
    \lambda_{\max}\!\left(\Gamma_N^{\boldsymbol\sigma}(A)\right)
    -h_{\mc K_{k,t}}(A)
  \right\}
  \leq\varepsilon.
\end{equation}
This one-sided condition is exactly what is needed to keep every output of
$\Phi_N^{\boldsymbol\sigma}$ within trace-norm distance $\varepsilon$ of
$\mc K_{k,t}$.  Consequently, by standard Audenaert-Fannes inequality~\cite{Audenaert2007},
\begin{equation}\label{eq:intro-admissible-moe}
  \MOE(\Phi_N^{\boldsymbol\sigma})
  \geq H^*(k,t)-\omega_k(\varepsilon),
  \qquad
  \omega_k(\varepsilon)
  :=h_2(\varepsilon/2)+\frac{\varepsilon}{2}\log(k-1).
\end{equation}
On the other hand, the phase system in
\eqref{eq:intro-phase-system-short} gives, for every permutation tuple,
\begin{equation}\label{eq:intro-exact-bell-short}
  (\Phi_N^{\boldsymbol\sigma}\otimes
   \overline{\Phi_N^{\boldsymbol\sigma}})(\psi_{r_N}^+)
  =t\psi_k^++(1-t)\frac{I_{k^2}}{k^2}.
\end{equation}
Hence, if $2\omega_k(\varepsilon)<\Delta_{k,t}$, every
$\varepsilon$-admissible tuple yields
\[
  \MOE(\Phi_N^{\boldsymbol\sigma}\otimes
  \overline{\Phi_N^{\boldsymbol\sigma}})
  <2\MOE(\Phi_N^{\boldsymbol\sigma}).
\]
After conjugation by $D_N$, the operators
$\Gamma_N^{\boldsymbol\sigma}(A)$ are evaluations of a fixed degree-two
matrix-valued $*$-polynomial in the tuple $(S_{\xi,N})_{\xi\in\Theta}$.
Thus Bordenave--Collins implies that independent uniform permutations are
$\varepsilon$-admissible with probability tending to one, while the uniform
spectral theorem of O'Donnell--Wu produces arbitrarily large admissible
tuples deterministically in polynomial time for every fixed $k,m,q$ and
$\varepsilon$ \cite{BordenaveCollins2019,ODonnellWu2020}.  The free-group
realization, the phase identity, and the verification of
\eqref{eq:intro-admissible-permutations} are given in the later construction
sections.

\subsection{Discussions and organization}\label{subsec:intro-contributions}
We point out several directions that are open. 
\begin{itemize}
    \item \textbf{Superadditivity of classical capacity}. First of all, our result provides a route for finite dimensional approximation of channels on von Neumann algebra (noninjective type $\mathrm{II}_1$), and we can actually apply this trick to the channels studied in \cite{CollinsYoun2022} to find finite dimensional approximations. This will provide some other explicit constructions for one-shot additivity violation of minimum output entropy. However, it remains unclear whether the regularized minimum output entropy is additive, equivalently, whether we have 
\begin{equation}
    C(\Phi \otimes \Psi) = C(\Phi) + C(\Psi)
\end{equation}
where $C$ is the classical capacity of quantum channels~\eqref{eq:intro-classical-capacity}.
\item \textbf{Explicit closed-form formula.} The algorithm discovered in \cite{ODonnellWu2020} produces a deterministic search. One particular achievement of that algorithm is the strong convergence property needed in our analysis. However, it remains not clear, whether a closed form or algebraic formula exists. Existing constructions of explicit expander graphs using Kazhdan's property (T) and group representations~\cite{margulis1988explicit, lubotzky1988ramanujan} seem to be a fundamental challenge to establish strong convergence, as remarked in \cite[before Remark 6.1]{vanhandel2026}. 
\item \textbf{Other mechanism beyond complex conjugate pair.} Our constructions, including all the other known random constructions use the same mechanism given by tensor product of complex conjugate pair. It would be very interesting to known other mechanism for additivity violation of minimum output entropy. 
\item \textbf{Optimal violation}. It seems that several random constructions, including the deterministic construction in this paper, provide a violation $\log 2$ for $2 H_{\min}(\Phi) - H_{\min}(\Phi \otimes \overline \Phi) $ and it is achieved when both the input and output dimensions tend to infinity. It would be interesting to show a larger violation using a different strong convergence phenomenon. 
\item \textbf{Explicit estimate of the size of the system, and implementation.} The construction does not provide a practical size to implement the algorithm. One may need to use the special property of the model, to get a sharp estimates on the size of the system. Moreover, it remains a challenge to efficiently implement the channel using quantum circuits. 
\end{itemize}
The remainder of the paper is organized as follows. First, we review the operator-algebraic preliminaries,
the channel duality, and the strong-convergence transfer principle.  We then
analyze the free-compression channel and construct its explicit free-group
realization and auxiliary phase system.  Finally, we apply
Bordenave--Collins random permutations and the O'Donnell--Wu deterministic
lift algorithm to get nonadditivity via random and deterministic permutations, and using Chen, Garza-Vargas, Tropp and van Handel to get a numerically explicit estimate on the size of the system.

\section{Preliminaries}\label{sec:preliminaries}

Throughout, all $C^*$-algebras are complex and unital, all states are
normalized, and all tensor products are taken over $\mb C$.  When a von
Neumann algebra is used in a $C^*$-tensor product, it is understood as its
underlying $C^*$-algebra.  We write
\[
  \tr_n:=\frac1n\Tr
\]
for the normalized trace on $M_n$.

\subsection{\texorpdfstring{$C^*$}{C-star}-algebras, von Neumann algebras,
  states, and channels}
\label{subsec:cstar-basics}\label{subsec:von-neumann-basics}

\paragraph{$C^*$-algebras and states.}
A complex algebra $\mc A$ is a $*$-algebra if it is equipped with an
involution $a\mapsto a^*$ satisfying
\[
  (ab)^*=b^*a^*,
  \qquad
  (\alpha a+\beta b)^*=\overline\alpha a^*+\overline\beta b^*.
\]

\begin{definition}[$C^*$-algebra]\label{def:cstar-algebra}
A \emph{$C^*$-algebra} is a Banach $*$-algebra $\mc A$ whose norm satisfies
\begin{equation}\label{eq:cstar-identity}
  \|a^*a\|=\|a\|^2,
  \qquad a\in\mc A.
\end{equation}
\end{definition}

The basic examples are $M_n$ and $\mc B(\mc H)$, equipped with their operator
norms.  By the Gelfand--Naimark theorem, every $C^*$-algebra admits a faithful
$*$-representation on a Hilbert space; see, for example,
\cite{BrownOzawa2008}.

An element $a\in\mc A$ is \emph{positive}, written $a\geq0$, if
$a=b^*b$ for some $b\in\mc A$.  A linear functional
$\omega:\mc A\to\mb C$ is positive if $\omega(a)\geq0$ whenever $a\geq0$.
The state space of $\mc A$ is
\begin{equation}\label{eq:cstar-state-space}
  \St(\mc A)
  :=
  \{\omega\in\mc A^*: \omega\geq0,\ \omega(\one)=1\},
\end{equation}
where $\mc A^*$ is the Banach dual.  Equipped with the weak-$*$ topology,
$\St(\mc A)$ is compact and convex.  For $\mc A=M_n$, every state has the
form
\[
  \omega_\rho(A)=\Tr(\rho A)
\]
for a unique density matrix
\[
  \rho\in\Dens_n
  :=
  \{\rho\in M_n:\rho\geq0,\ \Tr\rho=1\}.
\]

\paragraph{von Neumann algebras and traces.}
Let $\mc H$ be a Hilbert space.  For $\mc S\subseteq\mc B(\mc H)$, its
commutant is
\[
  \mc S'
  :=
  \{T\in\mc B(\mc H):TS=ST\text{ for every }S\in\mc S\}.
\]

\begin{definition}[von Neumann algebra]\label{def:von-neumann-algebra}
A \emph{von Neumann algebra} on $\mc H$ is a unital $*$-subalgebra
$\mc M\subseteq\mc B(\mc H)$ satisfying
\[
  \mc M=\mc M''.
\]
Equivalently, $\mc M$ is closed in the weak operator topology, or in the
strong operator topology.
\end{definition}

A positive functional $\omega$ on $\mc M$ is \emph{normal} if
\[
  \omega\!\left(\sup_i x_i\right)=\sup_i\omega(x_i)
\]
for every bounded increasing net of positive elements.  The normal
functionals form the predual $\mc M_*$.  Thus
\[
  \St_{\rm n}(\mc M):=\St(\mc M)\cap\mc M_*
\]
is the normal state space.  In this paper we use the full state space
$\St(\mc M)$, which may also contain singular states.  This convention is
important because, for $x=x^*\in\mc M$,
\begin{equation}\label{eq:state-spectral-edge}
  \max_{\omega\in\St(\mc M)}\omega(x)
  =
  \max\spec(x).
\end{equation}

A state $\tau$ on $\mc M$ is \emph{tracial} if
\[
  \tau(xy)=\tau(yx),
  \qquad x,y\in\mc M.
\]
We call $(\mc M,\tau)$ a \emph{tracial von Neumann algebra} if $\tau$ is
faithful, normal, and tracial.  Its GNS Hilbert space is
$L_2(\mc M,\tau)$, the completion of $\mc M$ for
\[
  \ip{x}{y}_{2,\tau}:=\tau(x^*y),
  \qquad
  \|x\|_{2,\tau}:=\tau(x^*x)^{1/2}.
\]
Left multiplication gives a faithful normal representation
\[
  \ell_\tau(a)x:=ax,
  \qquad a,x\in\mc M,
\]
and the vector $\Omega_\tau:=\one$ satisfies
\[
  \tau(a)=\ip{\Omega_\tau}{\ell_\tau(a)\Omega_\tau}_{2,\tau}.
\]
We generally identify $\mc M$ with its left-multiplication representation.

For later finite-dimensional identifications, put
\[
  \mc H_n:=L_2(M_n,\tr_n),
  \qquad
  \ell_n(A)X:=AX.
\]
If $\tr_{\mc B(\mc H_n)}:=n^{-2}\Tr_{\mc H_n}$ is the normalized operator
trace on $\mc B(\mc H_n)$, then
\begin{equation}\label{eq:left-representation-trace}
  \tr_{\mc B(\mc H_n)}(\ell_n(A))=\tr_n(A),
  \qquad A\in M_n.
\end{equation}
Indeed, under vectorization, $\ell_n(A)$ is unitarily equivalent to
$A\otimes I_n$.

\paragraph{Spatial and maximal tensor products.}
Let $\mc A$ and $\mc B$ be $C^*$-algebras, and let
$\mc A\odot\mc B$ denote their algebraic tensor product.  Choose faithful
representations
\[
  \pi:\mc A\to\mc B(\mc H),
  \qquad
  \rho:\mc B\to\mc B(\mc K).
\]
The \emph{minimal}, or \emph{spatial}, $C^*$-norm is
\[
  \|x\|_{\min}
  :=
  \|(\pi\odot\rho)(x)\|_{\mc B(\mc H\otimes\mc K)},
  \qquad x\in\mc A\odot\mc B.
\]
It is independent of the chosen faithful representations.  Its completion is
written $\mc A\otimes_{\min}\mc B$.

The \emph{maximal} $C^*$-norm is
\begin{equation}\label{eq:maximal-tensor-norm}
  \left\|\sum_{j=1}^r a_j\otimes b_j\right\|_{\max}
  :=
  \sup_{(\pi_0,\rho_0)}
  \left\|
    \sum_{j=1}^r \pi_0(a_j)\rho_0(b_j)
  \right\|,
\end{equation}
where the supremum is over pairs of $*$-representations of $\mc A$ and
$\mc B$ on a common Hilbert space whose ranges commute.  The corresponding
completion is denoted by $\mc A\otimes_{\max}\mc B$.  Since
$\|x\|_{\min}\leq\|x\|_{\max}$, the identity on the algebraic tensor product
extends to a canonical quotient map
\begin{equation}\label{eq:max-to-min-quotient}
  q_{\max,\min}:
  \mc A\otimes_{\max}\mc B
  \longrightarrow
  \mc A\otimes_{\min}\mc B.
\end{equation}
If either factor is nuclear, this map is an isomorphism.  In particular,
there is no distinction between the minimal and maximal tensor products when
one factor is a matrix algebra; see \cite{BrownOzawa2008}.

If $\mc M\subseteq\mc B(\mc H)$ and
$\mc N\subseteq\mc B(\mc K)$ are von Neumann algebras, their
\emph{spatial von Neumann tensor product} is
\begin{equation}\label{eq:spatial-von-neumann-tensor}
  \mc M\overline\otimes\mc N
  :=
  (\mc M\odot\mc N)''
  \subseteq
  \mc B(\mc H\otimes\mc K).
\end{equation}
The norm closure of $\mc M\odot\mc N$ in this representation is
$\mc M\otimes_{\min}\mc N$.  If $(\mc M,\tau)$ and $(\mc N,\varphi)$ are
tracial, then $\tau\otimes\varphi$ extends uniquely to a faithful normal
tracial state on $\mc M\overline\otimes\mc N$.

Thus $\mc M\overline\otimes\mc N$ and
$\mc M\otimes_{\max}\mc N$ encode different completions of the same
algebraic tensor product.  The spatial tensor product is generated by
operators acting on separate Hilbert-space factors, whereas the maximal
tensor product is universal for arbitrary commuting representations on one
Hilbert space.  Every state on the minimal tensor product pulls back through
\eqref{eq:max-to-min-quotient} to a state on the maximal tensor product, but a
state on the maximal tensor product need not factor through the minimal one.
The distinction disappears in finite dimensions.

\paragraph{Completely positive maps and channels.}
For a linear map $\Gamma:\mc A\to\mc B$, let
\[
  \Gamma^{(r)}
  :=
  \id_{M_r}\otimes\Gamma:
  M_r(\mc A)\to M_r(\mc B).
\]
The map $\Gamma$ is \emph{completely positive} if $\Gamma^{(r)}$ is positive
for every $r\geq1$, and it is \emph{unital completely positive}, abbreviated
UCP, if it is completely positive and $\Gamma(\one)=\one$.

\begin{definition}[Channel with finite-dimensional output]
\label{def:prelim-vna-channel}
Let $\mc M$ be a von Neumann algebra.  A channel with input algebra $\mc M$
and output dimension $k$ is specified in the Heisenberg picture by a UCP map
\begin{equation}\label{eq:prelim-ucp-channel}
  \Gamma:M_k\longrightarrow\mc M.
\end{equation}
Its Schr\"odinger-picture channel
$\Phi_\Gamma:\St(\mc M)\to\Dens_k$ is the unique affine map satisfying
\begin{equation}\label{eq:prelim-channel-duality}
  \Tr\!\left(A\Phi_\Gamma(\omega)\right)
  =
  \omega(\Gamma(A)),
  \qquad
  A\in M_k,
  \quad
  \omega\in\St(\mc M).
\end{equation}
\end{definition}

Indeed, $\omega\circ\Gamma$ is a state on $M_k$, and is therefore represented
by a unique density matrix.  When $\mc M=M_n$, this is the usual
finite-dimensional Schr\"odinger channel, and $\Gamma$ is its
Hilbert--Schmidt adjoint.

The complete one-copy output body is
\begin{equation}\label{eq:prelim-output-body}
  \mc O_\Gamma
  :=
  \Phi_\Gamma(\St(\mc M))
  \subseteq\Dens_k.
\end{equation}
It is compact and convex.  For $A=A^*\in M_k$, its support function is
\begin{equation}\label{eq:prelim-support-function}
  h_\Gamma(A)
  :=
  \max_{\rho\in\mc O_\Gamma}\Tr(A\rho)
  =
  \max_{\omega\in\St(\mc M)}\omega(\Gamma(A))
  =
  \max\spec(\Gamma(A)),
\end{equation}
where the last equality follows from \eqref{eq:state-spectral-edge}.  Hence
\begin{equation}\label{eq:prelim-moe-dual}
  \MOE(\Phi_\Gamma)
  =
  \min_{\rho\in\mc O_\Gamma}H(\rho)
  =
  \inf_{A=A^*}
  \left\{
    \log\Tr(e^A)-h_\Gamma(A)
  \right\}
\end{equation}
by the Gibbs variational principle.

Let
\[
  \Gamma_i:M_{k_i}\to\mc M_i,
  \qquad i=1,2,
\]
be UCP maps.  Functoriality of the minimal and maximal tensor products for
completely positive maps yields UCP maps
\begin{align}
  \Gamma_1\otimes_{\max}\Gamma_2:
  M_{k_1k_2}
  &\longrightarrow
  \mc M_1\otimes_{\max}\mc M_2,
  \label{eq:max-product-heisenberg}\\
  \Gamma_1\overline\otimes\Gamma_2:
  M_{k_1k_2}
  &\longrightarrow
  \mc M_1\overline\otimes\mc M_2,
  \label{eq:spatial-product-heisenberg}
\end{align}
characterized on elementary tensors by
\[
  (\Gamma_1\otimes\Gamma_2)(A\otimes B)
  =
  \Gamma_1(A)\otimes\Gamma_2(B).
\]
We define the corresponding product channels by duality:
\begin{align}
  \Phi_{\Gamma_1}\otimes_{\max}\Phi_{\Gamma_2}
  &:=
  \Phi_{\Gamma_1\otimes_{\max}\Gamma_2},
  \label{eq:max-product-schrodinger}\\
  \Phi_{\Gamma_1}\otimes_{\rm sp}\Phi_{\Gamma_2}
  &:=
  \Phi_{\Gamma_1\overline\otimes\Gamma_2}.
  \label{eq:spatial-product-schrodinger}
\end{align}
The first channel acts on
$\St(\mc M_1\otimes_{\max}\mc M_2)$, while the second acts on
$\St(\mc M_1\overline\otimes\mc M_2)$.  These definitions include entangled
input states; they are not merely the pointwise tensor product of the two
affine maps on product states.  In finite dimensions, both reduce to the
usual tensor-product channel.

\paragraph{Conjugate algebras and the diagonal state.}
For a $C^*$-algebra $\mc A$, let $\overline{\mc A}$ denote its conjugate
$C^*$-algebra, and write $a\mapsto\overline a$ for the canonical
conjugate-linear $*$-isomorphism.  The map
$\overline a\mapsto a^*$ identifies $\overline{\mc A}$ linearly and
$*$-isomorphically with the opposite algebra $\mc A^{\rm op}$.  For
matrices, the bar denotes entrywise complex conjugation in the standard
basis.  If
$\Gamma:M_k\to\mc M$ is UCP, its conjugate is the UCP map
\begin{equation}\label{eq:prelim-conjugate-channel}
  \overline\Gamma:M_k\longrightarrow\overline{\mc M},
  \qquad
  \overline\Gamma(A)
  :=
  \overline{\Gamma(\overline A)}.
\end{equation}
We write $\overline{\Phi_\Gamma}:=\Phi_{\overline\Gamma}$.

Let $(\mc M,\tau)$ be tracial.  Besides the left action $\ell_\tau$, the
conjugate algebra acts on $L_2(\mc M,\tau)$ by
\begin{equation}\label{eq:right-conjugate-action}
  r_\tau(\overline b)x:=xb^*,
  \qquad b,x\in\mc M.
\end{equation}
The representations $\ell_\tau(\mc M)$ and
$r_\tau(\overline{\mc M})$ commute.  By the universal property of the
maximal tensor product, they therefore induce a $*$-representation
\[
  \pi_\delta:
  \mc M\otimes_{\max}\overline{\mc M}
  \longrightarrow
  \mc B(L_2(\mc M,\tau)),
  \qquad
  \pi_\delta(a\otimes\overline b)
  =
  \ell_\tau(a)r_\tau(\overline b).
\]
The vector state associated with $\Omega_\tau=\one$ is the
\emph{diagonal state}
\begin{equation}\label{eq:prelim-diagonal-state}
  \omega_\delta(z)
  :=
  \ip{\Omega_\tau}{\pi_\delta(z)\Omega_\tau}_{2,\tau},
  \qquad
  z\in\mc M\otimes_{\max}\overline{\mc M}.
\end{equation}
On elementary tensors,
\begin{equation}\label{eq:prelim-diagonal-state-elementary}
  \omega_\delta(a\otimes\overline b)
  =
  \tau(ab^*),
  \qquad a,b\in\mc M.
\end{equation}
The maximal tensor product is the natural domain here: the construction uses
two commuting representations on the same Hilbert space and imposes no
spatial-factorization assumption.  If $\mc M=M_n$ and $\tau=\tr_n$, then
under vectorization $\Omega_\tau$ corresponds to
\[
  |\Omega_n\rangle
  =
  \frac1{\sqrt n}\sum_{j=1}^n e_j\otimes e_j,
\]
so $\omega_\delta$ is the usual maximally entangled vector state.

Finally, for a UCP map $\Gamma:M_k\to\mc M$, define
\begin{equation}\label{eq:prelim-diagonal-output}
  \rho_\delta^\Gamma
  :=
  (\Phi_\Gamma\otimes_{\max}\overline{\Phi_\Gamma})(\omega_\delta)
  \in\Dens_{k^2}.
\end{equation}
If $(E_{ij})_{i,j=1}^k$ are the standard matrix units, then
\begin{equation}\label{eq:prelim-gram-identity}
  \langle i,k|\rho_\delta^\Gamma|j,\ell\rangle
  =
  \tau\!\left(
    \Gamma(E_{ij})^*\Gamma(E_{k\ell})
  \right).
\end{equation}
Thus $\rho_\delta^\Gamma$ is the Gram matrix, with respect to $\tau$, of the
family $(\Gamma(E_{ij}))_{i,j=1}^k$.  This is the identity used in the
introduction to compute the two-copy output independently of the one-copy
spectral approximation.

\subsection{Free groups, free independence, and group von Neumann algebras}
\label{subsec:free-independence}\label{subsec:group-von-neumann}

\paragraph{Free groups and their regular representations.}
Let $\Theta$ be a finite set.

\begin{definition}[Free group]\label{def:free-group}
The \emph{free group on the generating set $\Theta$} is
\[
  \mb F_\Theta
  :=
  \langle g_\theta:\theta\in\Theta\rangle,
\]
that is, the group generated by the symbols $g_\theta$, with no relations
other than those required by the group axioms.  Its elements are the empty
word $e$ and the reduced words
\begin{equation}\label{eq:reduced-free-group-word}
  g_{\theta_1}^{\varepsilon_1}\cdots
  g_{\theta_r}^{\varepsilon_r},
  \qquad
  \varepsilon_j\in\{1,-1\},
\end{equation}
where adjacent letters never cancel:
$(\theta_{j+1},\varepsilon_{j+1})\neq
(\theta_j,-\varepsilon_j)$.
\end{definition}

More generally, let $\Gamma$ be a countable discrete group with identity
$e$.  The left regular representation is
\[
  \lambda_\Gamma:\Gamma\to\mc U(\ell_2(\Gamma)),
  \qquad
  \lambda_\Gamma(g)\delta_h=\delta_{gh}.
\]
It satisfies
\[
  \lambda_\Gamma(g)\lambda_\Gamma(h)=\lambda_\Gamma(gh),
  \qquad
  \lambda_\Gamma(g)^*=\lambda_\Gamma(g^{-1}).
\]
The reduced group $C^*$-algebra and the group von Neumann algebra are,
respectively,
\begin{align}
  C_r^*(\Gamma)
  &:={\overline{\operatorname{span}}}^{\|\cdot\|}
    \{\lambda_\Gamma(g):g\in\Gamma\},
  \label{eq:reduced-group-cstar}\\
  \mc L(\Gamma)
  &:={\{\lambda_\Gamma(g):g\in\Gamma\}}''
  ={\overline{C_r^*(\Gamma)}}^{\rm WOT}.
  \label{eq:group-von-neumann-algebra}
\end{align}

\begin{definition}[Group von Neumann algebra]\label{def:group-von-neumann}
The von Neumann algebra $\mc L(\Gamma)$ in
\eqref{eq:group-von-neumann-algebra} is called the \emph{group von Neumann
algebra} of $\Gamma$.
\end{definition}

The distinguished vector $\delta_e$ defines the canonical trace
\begin{equation}\label{eq:canonical-group-trace}
  \tau_\Gamma(x)
  :=
  \ip{\delta_e}{x\delta_e}_{\ell_2(\Gamma)},
  \qquad x\in\mc L(\Gamma).
\end{equation}
For every finite Fourier sum,
\begin{equation}\label{eq:canonical-trace-coefficient}
  \tau_\Gamma\!\left(\sum_{g\in F}c_g\lambda_\Gamma(g)\right)
  =c_e.
\end{equation}
In particular,
\begin{equation}\label{eq:trace-group-unitary}
  \tau_\Gamma(\lambda_\Gamma(g))
  =
  \begin{cases}
    1,&g=e,\\
    0,&g\neq e.
  \end{cases}
\end{equation}
The trace $\tau_\Gamma$ is faithful, normal, and tracial, and the map
$\lambda_\Gamma(g)\mapsto\delta_g$ extends to a canonical unitary
\[
  L_2(\mc L(\Gamma),\tau_\Gamma)
  \cong
  \ell_2(\Gamma).
\]
If $|\Theta|\geq2$, then $\mb F_\Theta$ is ICC (Infinite Conjugacy Class) and
$\mc L(\mb F_\Theta)$ is a type $\mathrm{II}_1$ factor.

\paragraph{Free independence.}
Let $(\mc M,\tau)$ be a tracial von Neumann algebra.  For a unital
$*$-subalgebra $\mc A\subseteq\mc M$, write
\[
  \mc A^\circ
  :=
  \{a\in\mc A:\tau(a)=0\}.
\]

\begin{definition}[Free independence]\label{def:freeness}
Unital $*$-subalgebras $\mc A_i\subseteq\mc M$, $i\in I$, are
\emph{freely independent}, or simply \emph{free}, if
\begin{equation}\label{eq:freeness-definition}
  \tau(x_1x_2\cdots x_r)=0
\end{equation}
whenever $x_j\in\mc A_{i_j}^\circ$ and
$i_j\neq i_{j+1}$ for every $1\leq j<r$.  A family of elements is free if
the unital $*$-algebras that they generate are free.
\end{definition}

A unitary $u\in\mc M$ is a \emph{Haar unitary} if
\[
  \tau(u^n)=0,
  \qquad n\in\mb Z\setminus\{0\}.
\]
For the free group, put
\[
  u_\theta
  :=
  \lambda_{\mb F_\Theta}(g_\theta)
  \in\mc L(\mb F_\Theta).
\]
Each $u_\theta$ is a Haar unitary, and the family
$(u_\theta)_{\theta\in\Theta}$ is free.  Indeed, every nonempty reduced word
$w\in\mb F_\Theta$ is different from the identity, so
\begin{equation}\label{eq:reduced-word-zero-trace}
  \tau_{\mb F_\Theta}
  \bigl(\lambda_{\mb F_\Theta}(w)\bigr)=0.
\end{equation}
The defining moment identities for freeness therefore reduce to the
uniqueness of reduced words in a free group.

\paragraph{Reduced tracial free products.}
Given tracial von Neumann algebras $(\mc M_1,\tau_1)$ and
$(\mc M_2,\tau_2)$, their reduced tracial free product
\[
  (\mc M,\tau)
  =
  (\mc M_1,\tau_1)*(\mc M_2,\tau_2)
\]
is characterized, up to a trace-preserving isomorphism, by trace-preserving
embeddings of $\mc M_1$ and $\mc M_2$ into $\mc M$ whose images are free and
generate $\mc M$ as a von Neumann algebra.  It can be constructed on the
free-product Hilbert space
\begin{equation}\label{eq:free-product-Hilbert-space}
  \mb C\Omega
  \oplus
  \bigoplus_{r\geq1}
  \bigoplus_{\substack{i_1,\ldots,i_r\in\{1,2\}\\
    i_j\neq i_{j+1}}}
  \mc H_{i_1}^\circ\otimes\cdots\otimes\mc H_{i_r}^\circ,
\end{equation}
where
\[
  \mc H_i^\circ
  :=
  L_2(\mc M_i,\tau_i)\ominus\mb C\one;
\]
see \cite{VoiculescuDykemaNica1992}.  The vacuum vector $\Omega$ induces the
free-product trace.

The group construction is compatible with free products: for countable
discrete groups $\Gamma_1$ and $\Gamma_2$,
\begin{equation}\label{eq:group-vn-free-product}
  (\mc L(\Gamma_1),\tau_{\Gamma_1})
  *
  (\mc L(\Gamma_2),\tau_{\Gamma_2})
  \cong
  (\mc L(\Gamma_1*\Gamma_2),\tau_{\Gamma_1*\Gamma_2}).
\end{equation}
Indeed, an alternating product of nonidentity elements from the two factors
is a nonidentity reduced word in $\Gamma_1*\Gamma_2$, and therefore has zero
canonical trace.

\paragraph{The free-compression notation.}
Fix integers $k,m\geq2$, choose $q\in\{1,\ldots,m-1\}$, and put
$t=q/m$.  We use the reduced tracial free product
\begin{equation}\label{eq:prelim-matrix-free-product}
  (\mc A_{k,m},\tau)
  :=
  (M_k,\tr_k)*(M_m,\tr_m).
\end{equation}
Let $p\in M_m\subseteq\mc A_{k,m}$ be a rank-$q$ projection.  Then
$\tau(p)=t$, the first copy of $M_k$ is free from the second copy of $M_m$
and hence from $p$, and
\begin{equation}\label{eq:prelim-free-compression-corner}
  \mc M_{k,t}:=p\mc A_{k,m}p,
  \qquad
  \tau_p(x):=t^{-1}\tau(x),
\end{equation}
is a tracial von Neumann algebra with unit $p$.  The compression map
\[
  M_k\longrightarrow\mc M_{k,t},
  \qquad
  A\longmapsto pAp,
\]
is UCP when the unit of the codomain is understood to be $p$.

For the permutation realization, we use the index set
\begin{equation}\label{eq:subspace-free-group-index-set}
  \Theta_{k,m}
  :=
  \mb Z_k^2\times\mb Z_m^2
\end{equation}
and the free group
\begin{equation}\label{eq:subspace-free-group}
  \mb F_{\Theta_{k,m}}
  =
  \left\langle
    g_{a,b,c,d}:
    (a,b,c,d)\in\mb Z_k^2\times\mb Z_m^2
  \right\rangle.
\end{equation}
We write
\[
  u_{a,b,c,d}
  :=
  \lambda_{\mb F_{\Theta_{k,m}}}(g_{a,b,c,d})
  \in\mc L(\mb F_{\Theta_{k,m}}).
\]
These free Haar unitaries provide the limiting operators approximated by the
standard representations of the permutation tuples in the finite-dimensional
construction.  Operator norms of polynomials in these unitaries are the same
whether they are computed in $C_r^*(\mb F_{\Theta_{k,m}})$ or in its weak
closure $\mc L(\mb F_{\Theta_{k,m}})$, while the latter supplies the
canonical trace used in the free-probabilistic calculations.

\subsection{A concrete realization of matrix-algebra free products}
\label{subsec:matrix-free-product-realization}

We record a concrete realization of the reduced free product
$(M_k,\tr_k)*(M_m,\tr_m)$ that will be used in the finite-dimensional
construction.  The realization is a vectorized form of the controlled-unitary
construction of Ching~\cite{Ching1973}.

For $n\geq2$, let $\omega_n=e^{2\pi\mathrm i/n}$ and define the Weyl
unitaries on the standard basis $(e_j)_{j\in\mb Z_n}$ by
\begin{equation}\label{eq:Weyl-unitary}
  X_ne_j=e_{j+1},
  \qquad
  Z_ne_j=\omega_n^j e_j,
  \qquad
  W_{a,b}^{(n)}:=X_n^aZ_n^b .
\end{equation}
They form an orthonormal basis of $L_2(M_n,\tr_n)$:
\begin{equation}\label{eq:Weyl-orthonormality}
  \tr_n\!\left((W_\alpha^{(n)})^*W_{\alpha'}^{(n)}\right)
  =\delta_{\alpha,\alpha'},
  \qquad
  \alpha,\alpha'\in\mb Z_n^2.
\end{equation}
Define the normalized vectorization
\begin{equation}\label{eq:Weyl-Bell-vectorization}
  \operatorname{vec}_n(A)
  :=\frac1{\sqrt n}\sum_{i,j\in\mb Z_n}A_{ij}e_i\otimes e_j.
\end{equation}
Then
\begin{equation}\label{eq:Weyl-Bell-vectorization-identities}
  \ip{\operatorname{vec}_n(A)}{\operatorname{vec}_n(B)}
  =\tr_n(A^*B),
  \qquad
  \operatorname{vec}_n(CA)
  =(C\otimes I_n)\operatorname{vec}_n(A).
\end{equation}
For $\alpha\in\mb Z_n^2$, put
\begin{equation}\label{eq:Weyl-Bell-single}
  \ket{\alpha}_n:=\operatorname{vec}_n(W_\alpha^{(n)}).
\end{equation}
Thus $(\ket{\alpha}_n)_{\alpha\in\mb Z_n^2}$ is an orthonormal basis of
$\mb C^n\otimes\mb C^n$.  With
$\Theta=\mb Z_k^2\times\mb Z_m^2$, write $\xi=(\alpha,\beta)$ and set
\begin{equation}\label{eq:Weyl-Bell}
  \ket{\xi}:=\ket{\alpha}_k\otimes\ket{\beta}_m.
\end{equation}

Let $(\mc V,\varphi)$ be a tracial von Neumann algebra containing a free
Haar family $(v_{\alpha,\beta})_{(\alpha,\beta)\in\Theta}$.  On
\begin{equation}
  \mc N(v)
  :=
  \mc B(\mb C^k\otimes\mb C^k)
  \overline\otimes
  \mc B(\mb C^m\otimes\mb C^m)
  \overline\otimes\mc V,
\end{equation}
with its product trace, define
\begin{equation}\label{eq:controlled-D}
  D(v)
  :=
  \sum_{\alpha\in\mb Z_k^2}
  \sum_{\beta\in\mb Z_m^2}
  \ketbra{\alpha}{\alpha}_k
  \otimes\ketbra{\beta}{\beta}_m
  \otimes v_{\alpha,\beta}.
\end{equation}
The summands have orthogonal supports, so $D(v)$ is unitary.  Define
\begin{align}
  \pi_k^v(A)
  &:=(A\otimes I_k)\otimes I_{m^2}\otimes\one,
  \label{eq:pi-k-definition}\\
  \pi_m^v(B)
  &:=D(v)\bigl[I_{k^2}\otimes(B\otimes I_m)\otimes\one\bigr]D(v)^*.
  \label{eq:pi-m-definition}
\end{align}
The following theorem is essentially given in \cite{Ching1973}:
\begin{theorem}[Controlled-unitary realization of the free product]
\label{thm:explicit-matrix-free-product}
The maps $\pi_k^v$ and $\pi_m^v$ are trace-preserving normal unital
$*$-embeddings.  Their ranges are freely independent, and
\begin{equation}\label{eq:generated-free-product-isomorphism}
  W^*\!\left(\pi_k^v(M_k),\pi_m^v(M_m)\right)
  \cong
  (M_k,\tr_k)*(M_m,\tr_m)
\end{equation}
through a trace-preserving isomorphism carrying the two canonical free
factors to the displayed ranges.
\end{theorem}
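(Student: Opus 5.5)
Since $D(v)$ is unitary, $\pi_k^v$ and $\pi_m^v$ are plainly unital normal $*$-homomorphisms, and injective because $M_k,M_m$ are simple. Everything therefore reduces to computing traces of alternating words, which I plan to do by moving to an $L_2$ picture.

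\textbf{Trace preservation.} Let $\psi=\tr_{k^2}\otimes\tr_{m^2}\otimes\varphi$ be the product trace. Then
\[
\psi(\pi_k^v(A))=\tr_{k^2}(A\otimes I_k)=\tr_k(A).
\]
Since conjugation by a unitary preserves a trace,
\[
\psi(\pi_m^v(B))=\tr_{m^2}(B\otimes I_m)=\tr_m(B).
\]

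\textbf{Freeness.} I would identify $\mb C^k\otimes\mb C^k\cong L_2(M_k,\tr_k)$ via $\operatorname{vec}_k$, with $\ket{\alpha}_k\leftrightarrow W^{(k)}_\alpha$, and do the same for $m$. Then $\pi_k^v(A)$ acts as left multiplication $\ell_k(A)$ by \eqref{eq:Weyl-Bell-vectorization-identities}. Now take an alternating word $x_1y_1x_2y_2\cdots$ with
\[
x_j=\pi_k^v(A_j),\quad \tr_k(A_j)=0,\qquad y_j=\pi_m^v(B_j),\quad \tr_m(B_j)=0,
\]
where the endpoints may be either type. Expand every $A$ in the Weyl basis, $A=\sum_{\alpha\neq 0}a_\alpha W^{(k)}_\alpha$; the coefficient of $\alpha=0$ vanishes because $A$ is traceless. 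Expand every $B$ likewise with $\beta\neq0$. Left multiplication by $W_\alpha$ permutes the basis $\ket{\alpha'}_k$ up to phases, $\alpha'\mapsto\alpha+\alpha'$.

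Next, write $\pi_m^v(W_\beta)=D\,(I\otimes \ell_m(W_\beta)\otimes\one)\,D^*$. The product $\pi_k(W_{\alpha_1})\pi_m(W_{\beta_1})\cdots$ then becomes a sum over basis states $\ket{\alpha',\beta'}$. As the word acts on a basis vector, the control label is shifted by nonzero $\alpha_j$ and $\beta_j$ alternately. Each occurrence of $D$ or $D^*$ contributes a factor $v_{\alpha'',\beta''}^{\pm1}$ evaluated at the current label.

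The trace is the average over the starting label of the diagonal coefficient tensored with $\varphi$ of the resulting word in the $v$'s. Between consecutive $D^*$ and $D$ factors (around an $x$), the $\alpha$-label changes by a nonzero amount while $\beta$ is unchanged. Between $D$ and $D^*$ (around a $y$), the label is fixed, but the $\beta$-register is shifted by $\ell_m(W_\beta)$. The resulting $v$-word therefore has the form $v_{\xi_1}v_{\xi_2}^{-1}v_{\xi_3}v_{\xi_4}^{-1}\cdots$, with consecutive indices always distinct. Hence it is a reduced nonempty word in the free generators, which is nonidentity, and its $\varphi$-trace vanishes by freeness (as in \eqref{eq:reduced-word-zero-trace}).

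There is one exception: a word beginning or ending with $x$ has an outer $\pi_k$ with no $D$ on the outside. I would handle this by cyclicity, rotating the word so it begins with a $y$, or by noting directly that a pure $x$-word has zero trace. After rotation, the nontrivial letters $v^{\pm1}$ come in pairs. Distinctness then requires that the $\beta$-shift from each $B_j$ be nonzero and the $\alpha$-shift from each $A_j$ be nonzero, which is exactly what tracelessness guarantees. I expect this bookkeeping of labels and endpoints, showing that no cancellation can occur, to be the main obstacle. The cleanest route is probably to compute $\psi(w)=\langle \Omega,w\,\Omega\rangle$ with the cyclic vector in $L_2$.

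\textbf{Isomorphism.} Once trace preservation and freeness are established, the universal property of the reduced free product applies: two free, trace-preserving embeddings generating a von Neumann algebra with a faithful trace give a trace-preserving isomorphism onto $(M_k,\tr_k)*(M_m,\tr_m)$. The restriction of the faithful product trace $\psi$ to $W^*(\pi_k^v(M_k),\pi_m^v(M_m))$ supplies that faithful trace, and the isomorphism carries the canonical free factors to the displayed ranges, proving \eqref{eq:generated-free-product-isomorphism}.
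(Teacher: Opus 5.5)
Your proposal is correct and is essentially the paper's proof. Expanding in the Weyl--Bell basis, each centered $\pi_m^v$-factor contributes a block $v_{\alpha,\beta'}v_{\alpha,\beta}^*$ with $\beta'\neq\beta$, and adjacent blocks are joined across a centered $\pi_k^v$-factor that changes $\alpha$ but not $\beta$. Hence every summand carries a nonempty reduced word and has trace zero, and trace preservation together with the universal property of the reduced free product does the rest.

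One small correction: words beginning or ending with a $\pi_k^v$-factor are not an exception, because an outer $\pi_k^v$ only shifts the label and contributes no $v$-letters, so your main argument already covers them. You should also drop the cyclic-rotation fix. Rotating, e.g., $x_1y_1x_2$ to $y_1x_2x_1$ makes two first-copy factors adjacent and destroys the alternating centered form.
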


\begin{proof}
Only freeness requires verification.  For $A\in M_k$ and $B\in M_m$, put
\begin{equation}
  a_{\alpha',\alpha}(A)
  :=\,{}_k\!\bra{\alpha'}(A\otimes I_k)\ket{\alpha}_k,
  \qquad
  b_{\beta',\beta}(B)
  :=\,{}_m\!\bra{\beta'}(B\otimes I_m)\ket{\beta}_m.
\end{equation}
Because every Weyl matrix is unitary,
\begin{equation}\label{eq:Weyl-diagonal-coefficients}
  a_{\alpha,\alpha}(A)=\tr_k(A),
  \qquad
  b_{\beta,\beta}(B)=\tr_m(B).
\end{equation}
Hence a centered element of the first copy changes the $\alpha$-label, and a
centered element of the second copy changes the $\beta$-label.  Moreover,
\begin{equation}\label{eq:controlled-m-action}
\begin{aligned}
  \pi_m^v(B)
  \bigl(\ket{\alpha}_k\otimes\ket{\beta}_m\otimes\zeta\bigr)
  =\sum_{\beta'}b_{\beta',\beta}(B)
  \ket{\alpha}_k\otimes\ket{\beta'}_m
  \otimes v_{\alpha,\beta'}v_{\alpha,\beta}^*\zeta.
\end{aligned}
\end{equation}
Thus every nonzero transition produced by a centered element of the second
copy contributes a two-letter block
$v_{\alpha,\beta'}v_{\alpha,\beta}^*$ with $\beta'\neq\beta$.  In an
alternating centered product, consecutive such blocks are separated by a
centered element of the first copy, which changes $\alpha$ while preserving
the intervening $\beta$-label.  Therefore no cancellation occurs either
inside a block or between adjacent blocks.  Every nonzero summand in the
matrix-coefficient expansion carries a nonempty reduced word in the free
Haar family, and consequently has trace zero.  If there is no element from
the second copy, the alternating product consists of one centered element
from the first copy and is traceless directly.  This proves freeness.  The
trace identities follow from normalized matrix traces and invariance under
unitary conjugation, and the universal characterization of the reduced
tracial free product gives \eqref{eq:generated-free-product-isomorphism}.
\end{proof}


\section{The finite-dimensional channel and its free limit}
\label{sec:finite-dimensional-channel}

Fix $k,m\geq2$, choose $q\in\{1,\ldots,m-1\}$, and put
\begin{equation}
  t:=\frac qm.
\end{equation}
This section gives the finite projection and the associated channel, identifies
its infinite-dimensional counterpart with the free-compression channel, and
verifies the two-copy Bell-state output exactly.  The one-copy approximation
is deferred to \Cref{sec:single-channel-output-convergence}.

\subsection{The finite projection and its channel}
\label{subsec:finite-projection-channel}
\label{subsec:closed-form-R-xi}
\label{subsec:finite-subspace-channel}

Write
\begin{equation}
  \Theta=\mc A_k\times\mc B_m,
  \qquad
  \mc A_k:=\mb Z_k^2,
  \qquad
  \mc B_m:=\mb Z_m^2.
\end{equation}
For $\alpha=(a,b)\in\mc A_k$ and $\beta=(c,d)\in\mc B_m$, define
\begin{equation}\label{eq:alpha-beta-integer-labels}
  \kappa(\alpha):=a+kb,
  \qquad
  \mu(\beta):=c+md.
\end{equation}
Set
\begin{equation}\label{eq:explicit-ell}
  \ell:=(k^2-1)(m^2-1)+1,
\end{equation}
let $\omega_\ell=e^{2\pi\mathrm i/\ell}$, and let
$Z_\ell e_s=\omega_\ell^s e_s$ on $\mb C^\ell$.  The auxiliary phase
unitaries are
\begin{equation}\label{eq:R-xi-closed-form}
  R_{\alpha,\beta}
  :=Z_\ell^{\,\kappa(\alpha)\mu(\beta)},
  \qquad
  (\alpha,\beta)\in\Theta.
\end{equation}

Let $N\geq2$ and let
$\boldsymbol\sigma_N=(\sigma_{\xi,N})_{\xi\in\Theta}$ be an arbitrary tuple
of permutations of $[N]$.  Put
\begin{equation}
  \ket{+}_N:=\frac1{\sqrt N}\sum_{j=1}^N e_j,
  \qquad
  \mc K_N:=\ket{+}_N^\perp,
  \qquad
  S_{\xi,N}:=P_{\sigma_{\xi,N}}\big|_{\mc K_N}.
\end{equation}
Define
\begin{equation}\label{eq:finite-Hn-vectorized}
\begin{aligned}
  \mc H_N^{\rm fin}
  &:=(\mb C^k\otimes\mb C^k)
    \otimes(\mb C^m\otimes\mb C^m)
    \otimes\mc K_N\otimes\mb C^\ell\\
  &\cong\mb C^k\otimes\mc E_N,
  \qquad
  d_N:=\dim\mc E_N=km^2(N-1)\ell.
\end{aligned}
\end{equation}
For $\xi\in\Theta$, put
\begin{equation}\label{eq:T-xi-n}
  T_{\xi,N}:=S_{\xi,N}\otimes R_\xi,
\end{equation}
and define the controlled unitary
\begin{equation}\label{eq:Dn-vectorized}
  D_N^{\boldsymbol\sigma}
  :=\sum_{\xi\in\Theta}
    \ketbra{\xi}{\xi}\otimes T_{\xi,N}.
\end{equation}
On $\mc H_N^{\rm fin}$, let
\begin{align}
  \pi_{k,N}(A)
  &:=(A\otimes I_k)\otimes I_{m^2}
    \otimes I_{\mc K_N}\otimes I_\ell,
  \label{eq:finite-pi-vectorized}\\
  \rho_{m,N}(B)
  &:=I_{k^2}\otimes(B\otimes I_m)
    \otimes I_{\mc K_N}\otimes I_\ell,\notag\\
  \eta_{m,N}^{\boldsymbol\sigma}(B)
  &:=D_N^{\boldsymbol\sigma}\rho_{m,N}(B)
    (D_N^{\boldsymbol\sigma})^* .\notag
\end{align}
Choose the rank-$q$ projection
\begin{equation}\label{eq:Pt-explicit}
  p_q:=\sum_{j=0}^{q-1}\ketbra{e_j}{e_j}\in M_m.
\end{equation}
We henceforth take the projection $p$ in the abstract free-compression model
to be the copy of $p_q$ in the second free factor.  This entails no loss of
generality, since all rank-$q$ projections in $M_m$ are unitarily conjugate.
Set
\begin{equation}\label{eq:Pn-definition-vectorized}
\begin{aligned}
  Q_N
  &:=I_{k^2}\otimes(p_q\otimes I_m)
    \otimes I_{\mc K_N}\otimes I_\ell,\\
  P_N^{\boldsymbol\sigma}
  &:=D_N^{\boldsymbol\sigma}Q_N(D_N^{\boldsymbol\sigma})^*
    =\eta_{m,N}^{\boldsymbol\sigma}(p_q).
\end{aligned}
\end{equation}
The projection has rank
\begin{equation}\label{eq:rn-rank}
  r_N:=\rank P_N^{\boldsymbol\sigma}
  =k^2qm(N-1)\ell
  =t k d_N.
\end{equation}

Let
\begin{equation}
  \mc M_N^{\boldsymbol\sigma}
  :=P_N^{\boldsymbol\sigma}
    \mc B(\mc H_N^{\rm fin})
    P_N^{\boldsymbol\sigma},
  \qquad
  \tau_N^{\boldsymbol\sigma}(X)
  :=\frac1{r_N}\Tr(X).
\end{equation}
Its unit is $P_N^{\boldsymbol\sigma}$.  The finite-dimensional Heisenberg
channel is
\begin{equation}\label{eq:finite-Heisenberg-channel}
  \Gamma_N^{\boldsymbol\sigma}:M_k\longrightarrow
  \mc M_N^{\boldsymbol\sigma},
  \qquad
  \Gamma_N^{\boldsymbol\sigma}(A)
  :=P_N^{\boldsymbol\sigma}\pi_{k,N}(A)P_N^{\boldsymbol\sigma}.
\end{equation}
It is UCP.  We define its Schr\"odinger-picture channel by duality:
\begin{equation}\label{eq:finite-channel-duality}
\begin{aligned}
  \Phi_N^{\boldsymbol\sigma}:
  \St(\mc M_N^{\boldsymbol\sigma})&\longrightarrow\Dens_k,\\
  \Tr\!\left(A\Phi_N^{\boldsymbol\sigma}(\omega)\right)
  &=\omega\!\left(\Gamma_N^{\boldsymbol\sigma}(A)\right),
  \qquad A\in M_k.
\end{aligned}
\end{equation}
Equivalently, choose an isometry
$J_N:\mb C^{r_N}\to\mb C^k\otimes\mc E_N$ with
$J_NJ_N^*=P_N^{\boldsymbol\sigma}$.  Under the induced identification
$\mc M_N^{\boldsymbol\sigma}\cong M_{r_N}$,
\begin{equation}\label{eq:Phi-n-definition}
  \Phi_N^{\boldsymbol\sigma}(X)
  =\Tr_{\mc E_N}(J_NXJ_N^*),
  \qquad X\in M_{r_N}.
\end{equation}
Thus \eqref{eq:finite-channel-duality}, rather than the choice of $J_N$, is
the intrinsic definition of the channel.
When the permutation tuple is fixed, we suppress the superscript $\boldsymbol\sigma$ and write simply $D_N$, $P_N$, $\mc M_N$, $\Gamma_N$, and $\Phi_N$.

\subsection{Comparison with the infinite-dimensional free compression channel}
\label{subsec:finite-infinite-comparison}

Let $u_\xi=\lambda_{\mb F_\Theta}(g_\xi)$ be the canonical free Haar
unitaries from \Cref{subsec:group-von-neumann}, and put
\begin{equation}
  v_\xi:=u_\xi\otimes R_\xi
  \in\mc L(\mb F_\Theta)\overline\otimes M_\ell.
\end{equation}
The assignment $g_\xi\mapsto R_\xi$ extends to a unitary representation
$\rho_R:\mb F_\Theta\to\mc U(\ell)$.  Hence, for every word
$w\in\mb F_\Theta$,
\begin{equation}
  v(w)=\lambda_{\mb F_\Theta}(w)\otimes\rho_R(w).
\end{equation}
Its trace is zero unless $w=e$.  Consequently, $(v_\xi)_{\xi\in\Theta}$ is
a free Haar family with the same joint $*$-distribution as $(u_\xi)_\xi$.

Apply \Cref{thm:explicit-matrix-free-product} to this family.  Thus, in
\begin{equation}
  \mc N_\infty
  :=\mc B(\mb C^k\otimes\mb C^k)
    \overline\otimes
    \mc B(\mb C^m\otimes\mb C^m)
    \overline\otimes
    \mc L(\mb F_\Theta)
    \overline\otimes M_\ell,
\end{equation}
with product trace $\varphi_\infty$, define
\begin{equation}\label{eq:controlled-D-vectorized-free-limit}
  D_\infty
  :=\sum_{\xi\in\Theta}
    \ketbra{\xi}{\xi}\otimes u_\xi\otimes R_\xi.
\end{equation}
Let
\begin{align}
  \pi_{k,\infty}(A)
  &:=(A\otimes I_k)\otimes I_{m^2}\otimes\one\otimes I_\ell,\\
  \pi_{m,\infty}(B)
  &:=D_\infty
    \bigl[I_{k^2}\otimes(B\otimes I_m)\otimes\one\otimes I_\ell\bigr]
    D_\infty^*,
\end{align}
and put
\begin{equation}
  \mc A_\infty
  :=W^*\!\left(\pi_{k,\infty}(M_k),\pi_{m,\infty}(M_m)\right).
\end{equation}
The infinite-dimensional projection and corner are
\begin{align}
  Q_\infty
  &:=I_{k^2}\otimes(p_q\otimes I_m)\otimes\one\otimes I_\ell,
  \label{eq:section3-vectorized-Q-infinity}\\
  P_\infty
  &:=\pi_{m,\infty}(p_q)
  =D_\infty Q_\infty D_\infty^*,
  \label{eq:section3-vectorized-p-infinity}\\
  \mc M_\infty
  &:=P_\infty\mc A_\infty P_\infty,
  \qquad
  \tau_\infty:=t^{-1}\varphi_\infty\big|_{\mc M_\infty}.
\end{align}
Define
\begin{equation}\label{eq:infinite-Heisenberg-channel}
  \Gamma_\infty:M_k\longrightarrow\mc M_\infty,
  \qquad
  \Gamma_\infty(A):=P_\infty\pi_{k,\infty}(A)P_\infty,
\end{equation}
and let $\Phi_\infty:=\Phi_{\Gamma_\infty}$ be its Schr\"odinger-picture
channel.

\begin{proposition}[Identification with free compression]
\label{prop:free-model-identification}
There is a trace-preserving normal $*$-isomorphism
\begin{equation}
  \Psi:(\mc A_{k,m},\tau)\longrightarrow(\mc A_\infty,\varphi_\infty)
\end{equation}
that carries the canonical copies of $M_k$ and $M_m$ to
$\pi_{k,\infty}(M_k)$ and $\pi_{m,\infty}(M_m)$, respectively.  In
particular, $\Psi(p)=P_\infty$, and $\Psi$ restricts to an isomorphism
\begin{equation}
  (\mc M_{k,t},\tau_p)
  \cong
  (\mc M_\infty,\tau_\infty)
\end{equation}
intertwining $\Gamma_{k,t}$ and $\Gamma_\infty$.  Consequently,
\begin{equation}\label{eq:free-model-output-identification}
  \mc O_{\Gamma_\infty}=\mc K_{k,t},
  \qquad
  \MOE(\Phi_\infty)=H^*(k,t).
\end{equation}
\end{proposition}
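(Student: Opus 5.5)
The plan is to deduce the proposition directly from \Cref{thm:explicit-matrix-free-product}, applied to the free Haar family $v_\xi=u_\xi\otimes R_\xi$ in $(\mc V,\varphi)=(\mc L(\mb F_\Theta)\overline\otimes M_\ell,\tau_{\mb F_\Theta}\otimes\tr_\ell)$. First we confirm that $(v_\xi)$ is a free Haar family. For a nonempty reduced word $w$, the trace factorizes as
\[
(\tau_{\mb F_\Theta}\otimes\tr_\ell)\bigl(\lambda(w)\otimes\rho_R(w)\bigr)=\tau_{\mb F_\Theta}(\lambda(w))\,\tr_\ell(\rho_R(w)),
\]
and this is zero by \eqref{eq:reduced-word-zero-trace}. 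Hence every $*$-moment of $(v_\xi)$ coincides with the corresponding moment of $(u_\xi)$, which gives the freeness moment identities.

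With this in hand, $D_\infty$ is exactly $D(v)$ of \eqref{eq:controlled-D}, after reordering the tensor factors so that $\mc V$ sits last. Likewise $\pi_{k,\infty}=\pi_k^v$ and $\pi_{m,\infty}=\pi_m^v$. \Cref{thm:explicit-matrix-free-product} then supplies a trace-preserving normal $*$-isomorphism $\Psi:(M_k,\tr_k)*(M_m,\tr_m)\to(\mc A_\infty,\varphi_\infty)$ sending the canonical copies to these ranges. In particular, $\Psi(p)=\pi_{m,\infty}(p_q)=P_\infty$, because $p$ was chosen as the copy of $p_q$.

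Any $*$-isomorphism maps the corner $p\mc A_{k,m}p$ onto $\Psi(p)\mc A_\infty\Psi(p)=\mc M_\infty$. Since $\varphi_\infty(P_\infty)=\tau(p)=t$, the normalized traces $t^{-1}\tau$ and $t^{-1}\varphi_\infty$ correspond under $\Psi$. For intertwining, $\Psi(pAp)=P_\infty\pi_{k,\infty}(A)P_\infty=\Gamma_\infty(A)$, so $\Gamma_\infty=\Psi\circ\Gamma_{k,t}$.

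For the consequences, a unital $*$-isomorphism of corners induces a bijection $\omega\mapsto\omega\circ\Psi$ between the full state spaces, including singular states. By the duality \eqref{eq:prelim-channel-duality}, this bijection carries $\Phi_{k,t}$ to $\Phi_\infty$, so $\mc O_{\Gamma_\infty}=\mc O_{\Gamma_{k,t}}$. Alternatively, the support functions agree, since spectra are preserved under $*$-isomorphism and \eqref{eq:prelim-support-function} applies. The Belinschi--Collins--Nechita result \eqref{eq:intro-fidelity-body} identifies $\mc O_{\Gamma_{k,t}}=\mc K_{k,t}$. Finally, \eqref{eq:prelim-moe-dual} gives $\MOE(\Phi_\infty)=\min_{\mc K_{k,t}}H=H^*(k,t)$, which is how $H^*(k,t)$ is to be read.

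The only step needing care is the first: showing that tensoring with the finite-dimensional representation $\rho_R$ preserves free Haar-ness. This reduces to the trace factorization above together with the fact that $\rho_R$ is a genuine group homomorphism, which holds because $\mb F_\Theta$ is free, so any assignment on generators extends. The remaining steps are bookkeeping of tensor-factor orderings.
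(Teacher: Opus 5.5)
Your proposal is correct and follows essentially the same route as the paper. You check that $v_\xi=u_\xi\otimes R_\xi$ is a free Haar family via $\tau_{\mb F_\Theta}(\lambda_{\mb F_\Theta}(w))\,\tr_\ell(\rho_R(w))=0$ for $w\neq e$, apply \Cref{thm:explicit-matrix-free-product} to obtain $\Psi$ with $\Psi(p)=P_\infty$, restrict to the corners, and then invoke the Belinschi--Collins--Nechita description of the output body. Your explicit pullback $\omega\mapsto\omega\circ\Psi$ on the full state spaces simply spells out the step the paper leaves implicit when it says the last identities follow from the free-compression formulas.
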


\begin{proof}
The first assertion is \Cref{thm:explicit-matrix-free-product}.  Since the
rank-$q$ projection $p$ belongs to the second free factor, it is carried to
$\pi_{m,\infty}(p_q)=P_\infty$.  Restricting $\Psi$ to the corresponding
corners gives the claimed channel intertwining.  The last identities follow
from the free-compression formulas in the introduction.
\end{proof}

The construction is summarized by the following diagram:
\begin{equation}\label{eq:finite-free-comparison-diagram}
\begin{array}{c@{\qquad}c@{\qquad}c}
  (S_{\xi,N})_{\xi\in\Theta}
  &\rightsquigarrow&
  (u_\xi)_{\xi\in\Theta}
  \\[1mm]
  \Big\downarrow && \Big\downarrow
  \\[-1mm]
  D_N^{\boldsymbol\sigma}
  =\displaystyle\sum_\xi\ketbra{\xi}{\xi}\otimes S_{\xi,N}\otimes R_\xi
  &\rightsquigarrow&
  D_\infty
  =\displaystyle\sum_\xi\ketbra{\xi}{\xi}\otimes u_\xi\otimes R_\xi
  \\[2mm]
  \Big\downarrow && \Big\downarrow
  \\[-1mm]
  P_N^{\boldsymbol\sigma}=D_N^{\boldsymbol\sigma}Q_N(D_N^{\boldsymbol\sigma})^*
  &\rightsquigarrow&
  P_\infty=D_\infty Q_\infty D_\infty^*
  \\[2mm]
  \Big\downarrow && \Big\downarrow
  \\[-1mm]
  \Gamma_N^{\boldsymbol\sigma}:M_k\longrightarrow\mc M_N^{\boldsymbol\sigma}
  &\rightsquigarrow&
  \begin{gathered}
    \Gamma_\infty:M_k\longrightarrow\mc M_\infty
  \end{gathered}
  \\[2mm]
  \Big\downarrow\ {\rm duality} && \Big\downarrow\ {\rm duality}
  \\[-1mm]
  \Phi_N^{\boldsymbol\sigma}:\St(\mc M_N^{\boldsymbol\sigma})\longrightarrow\Dens_k
  &\rightsquigarrow&
  \begin{gathered}
    \Phi_\infty:\St(\mc M_\infty)\longrightarrow\Dens_k
  \end{gathered}
\end{array}
\end{equation}
The arrows in this diagram do not denote norm convergence of maps whose
codomains vary with $N$.  They record strong convergence of the defining
operator tuples and, consequently, convergence of the corresponding
matrix-valued polynomial norms and output-body support functions.

Here $\rightsquigarrow$ records the formal replacement
$S_{\xi,N}\mapsto u_\xi$.  The strong spectral approximation that turns this
formal correspondence into convergence of the one-copy output bodies is the
subject of \Cref{sec:single-channel-output-convergence}.

\subsection{The exact finite-dimensional Bell-state phenomenon}
\label{subsec:finite-Bell-phenomenon}
\label{sec:exact-Bell-output}
\label{subsec:exact-Bell-general-R}

The role of the auxiliary phases is contained in one elementary identity.

\begin{lemma}[Rectangle trace identity]
\label{lem:rectangle-trace-explicit-R}
For $\alpha_0,\alpha_1\in\mc A_k$ and
$\beta_0,\beta_1\in\mc B_m$,
\begin{equation}\label{eq:rectangle-trace-indicator}
\begin{aligned}
  \tr_\ell\!\left(
    R_{\alpha_1,\beta_0}
    R_{\alpha_1,\beta_1}^*
    R_{\alpha_0,\beta_1}
    R_{\alpha_0,\beta_0}^*
  \right) \quad = \quad
  \mathbf 1_{\{\alpha_0=\alpha_1\ \mathrm{or}\ \beta_0=\beta_1\}}.
\end{aligned}
\end{equation}
Equivalently,
\begin{equation}\label{eq:rectangular-trace-orthogonality}
  \tr_\ell\!\left(
    R_{\alpha_1,\beta_0}
    R_{\alpha_1,\beta_1}^*
    R_{\alpha_0,\beta_1}
    R_{\alpha_0,\beta_0}^*
  \right)=0
\end{equation}
whenever $\alpha_0\neq\alpha_1$ and $\beta_0\neq\beta_1$.
\end{lemma}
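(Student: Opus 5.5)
Since every $R_\xi$ is a power of the single diagonal unitary $Z_\ell$, the four factors commute. The product therefore collapses to one power of $Z_\ell$, and the proof reduces to computing that exponent and checking when it vanishes modulo $\ell$.

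Writing $\kappa_i:=\kappa(\alpha_i)$ and $\mu_i:=\mu(\beta_i)$, the product is $Z_\ell^{E}$ with
\[
  E=\kappa_1\mu_0-\kappa_1\mu_1+\kappa_0\mu_1-\kappa_0\mu_0
  =-(\kappa_1-\kappa_0)(\mu_1-\mu_0).
\]
The normalized trace is
\[
  \tr_\ell(Z_\ell^E)=\frac1\ell\sum_{s=0}^{\ell-1}\omega_\ell^{sE},
\]
a geometric sum. It equals $1$ if $\ell\mid E$ and $0$ otherwise. So the lemma is equivalent to the claim that $\ell\mid(\kappa_1-\kappa_0)(\mu_1-\mu_0)$ holds exactly when $\alpha_0=\alpha_1$ or $\beta_0=\beta_1$.

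The "if" direction is immediate, since then one factor is zero. For the converse we use that $\kappa:\mb Z_k^2\to\{0,\dots,k^2-1\}$, $(a,b)\mapsto a+kb$ (with representatives $a,b\in\{0,\dots,k-1\}$), is a bijection; likewise $\mu$ onto $\{0,\dots,m^2-1\}$. Hence $\alpha_0\ne\alpha_1$ gives $1\le|\kappa_1-\kappa_0|\le k^2-1$, and similarly $1\le|\mu_1-\mu_0|\le m^2-1$. The product is then a nonzero integer of absolute value at most
\[
  (k^2-1)(m^2-1)=\ell-1<\ell,
\]
so it is not divisible by $\ell$, and the trace vanishes. This is exactly why $\ell$ was chosen as in \eqref{eq:explicit-ell}.

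The only delicate point is bookkeeping. The exponents $\kappa(\alpha)\mu(\beta)$ must be read as ordinary integers built from the fixed representatives in $\{0,\dots,k-1\}$ and $\{0,\dots,m-1\}$, not as residues mod $k$ or $m$; this is legitimate because $Z_\ell^{n}$ depends only on $n \bmod \ell$. With that convention fixed, the injectivity of $\kappa$ and $\mu$ and the size bound together give \eqref{eq:rectangle-trace-indicator}, and \eqref{eq:rectangular-trace-orthogonality} is its restatement.
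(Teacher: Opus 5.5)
Your proof is correct and follows the paper's argument: collapse the product to $Z_\ell^{(\kappa(\alpha_1)-\kappa(\alpha_0))(\mu(\beta_0)-\mu(\beta_1))}$, observe that the exponent vanishes when one label pair agrees, and otherwise note that it is a nonzero integer of absolute value at most $(k^2-1)(m^2-1)=\ell-1$, so the normalized trace is zero. Your geometric-sum evaluation and your point about fixing integer representatives for $\kappa$ and $\mu$ spell out steps the paper leaves implicit. Strictly, those representatives are needed because $a+kb$ is not well defined from residues mod $k$, not because $Z_\ell^n$ depends only on $n \bmod \ell$.
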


\begin{proof}
The product inside the trace is
\begin{equation}
  Z_\ell^{\,[\kappa(\alpha_1)-\kappa(\alpha_0)]
                 [\mu(\beta_0)-\mu(\beta_1)]}.
\end{equation}
If one pair of labels agrees, the exponent is zero.  Otherwise it is a
nonzero integer of absolute value at most
$(k^2-1)(m^2-1)=\ell-1$, hence is not divisible by $\ell$.  The normalized
trace of the corresponding nontrivial power of $Z_\ell$ is therefore zero.
\end{proof}

Let $\varphi_N$ denote the normalized trace on
$\mc B(\mc H_N^{\rm fin})$.  The identity
$\rank P_N^{\boldsymbol\sigma}=t\dim\mc H_N^{\rm fin}$ implies
\begin{equation}\label{eq:corner-trace-relation}
  \tau_N^{\boldsymbol\sigma}(X)=t^{-1}\varphi_N(X),
  \qquad X\in\mc M_N^{\boldsymbol\sigma}.
\end{equation}

\begin{lemma}[Exact compressed moment]
\label{lem:exact-compressed-moment}
For every $A_1,A_2\in M_k$,
\begin{equation}\label{eq:compressed-moment-general-R}
\begin{aligned}
  &\varphi_N\!\left(
    \pi_{k,N}(A_1)P_N^{\boldsymbol\sigma}
    \pi_{k,N}(A_2)P_N^{\boldsymbol\sigma}
  \right) = 
  t^2\tr_k(A_1A_2)
  +t(1-t)\tr_k(A_1)\tr_k(A_2).
\end{aligned}
\end{equation}
Equivalently,
\begin{equation}\label{eq:corner-channel-second-moment}
  \tau_N^{\boldsymbol\sigma}\!\left(
    \Gamma_N^{\boldsymbol\sigma}(A_1)
    \Gamma_N^{\boldsymbol\sigma}(A_2)
  \right)
  =t\tr_k(A_1A_2)
   +(1-t)\tr_k(A_1)\tr_k(A_2).
\end{equation}
Both identities hold for every $N$ and every permutation tuple
$\boldsymbol\sigma_N$.
\end{lemma}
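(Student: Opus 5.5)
The plan is to compute the left-hand side of \eqref{eq:compressed-moment-general-R} directly in the product basis $\ket{\alpha}_k\otimes\ket{\beta}_m$ of $(\mb C^k\otimes\mb C^k)\otimes(\mb C^m\otimes\mb C^m)$, following the computation in the proof of \Cref{thm:explicit-matrix-free-product}. The key point is that the normalized trace factorizes over the environment $\mc K_N\otimes\mb C^\ell$. \Cref{lem:rectangle-trace-explicit-R} then kills exactly those terms on which the permutation word could be nontrivial. The second identity \eqref{eq:corner-channel-second-moment} will follow from the first.

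\textbf{Step 1: matrix coefficients.} Write $a^{(i)}_{\alpha',\alpha}:={}_k\bra{\alpha'}(A_i\otimes I_k)\ket{\alpha}_k$ and $b_{\beta',\beta}:={}_m\bra{\beta'}(p_q\otimes I_m)\ket{\beta}_m$. Exactly as in \eqref{eq:controlled-m-action}, we have
\[
P_N^{\boldsymbol\sigma}\bigl(\ket{\alpha,\beta}\otimes\zeta\bigr)=\sum_{\beta'}b_{\beta',\beta}\ket{\alpha,\beta'}\otimes T_{\alpha,\beta'}T_{\alpha,\beta}^*\zeta,
\]
while $\pi_{k,N}(A)$ acts only on the $\alpha$-label, with coefficients $a_{\alpha',\alpha}$. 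Expanding the trace of $\pi_{k,N}(A_1)P\pi_{k,N}(A_2)P$ over diagonal entries forces the label path to be a closed rectangle $(\alpha_0,\beta_0)\to(\alpha_0,\beta_1)\to(\alpha_1,\beta_1)\to(\alpha_1,\beta_0)\to(\alpha_0,\beta_0)$. We obtain
\[
\varphi_N(\cdots)=\frac1{k^2m^2}\sum_{\alpha_0,\alpha_1,\beta_0,\beta_1}a^{(1)}_{\alpha_0\alpha_1}b_{\beta_1\beta_0}a^{(2)}_{\alpha_1\alpha_0}b_{\beta_0\beta_1}\,\tr_{\mc K_N}(W_S)\,\tr_\ell(W_R),
\]
where $W_S$ is the four-letter word $S_{\alpha_0\beta_0}S^*_{\alpha_0\beta_1}S_{\alpha_1\beta_1}S^*_{\alpha_1\beta_0}$ (up to cyclic order and conventions) and $W_R$ is the analogous $R$-word. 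The factorization holds because $T_\xi=S_{\xi,N}\otimes R_\xi$ and the normalized trace on $\mc K_N\otimes\mb C^\ell$ is a product.

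\textbf{Step 2: use the phases.} By \Cref{lem:rectangle-trace-explicit-R}, $\tr_\ell(W_R)$ vanishes unless $\alpha_0=\alpha_1$ or $\beta_0=\beta_1$. In either of those cases $W_S$ collapses to the identity: for instance, when $\alpha_0=\alpha_1$ the letters pair off as $S_{\alpha\beta_0}S^*_{\alpha\beta_1}S_{\alpha\beta_1}S^*_{\alpha\beta_0}=I$. Hence the $S$-trace equals $1$ on every surviving term, whatever the permutations are. This is the only step where care is needed: one has to check that the ordering of letters produced by the expansion is precisely the rectangle word appearing in the lemma, so that both the vanishing and the cancellation apply. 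I expect this bookkeeping to be the main (mild) obstacle.

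\textbf{Step 3: inclusion--exclusion.} The surviving sum equals (terms with $\alpha_0=\alpha_1$) $+$ (terms with $\beta_0=\beta_1$) $-$ (terms with both). We use \eqref{eq:Weyl-diagonal-coefficients}, namely $a_{\alpha\alpha}=\tr_k(A)$ and $b_{\beta\beta}=t$, together with the completeness sums $\sum_{\beta_0,\beta_1}b_{\beta_1\beta_0}b_{\beta_0\beta_1}=\Tr((p_q\otimes I_m)^2)=qm$ and $\sum_{\alpha_0,\alpha_1}a^{(1)}_{\alpha_0\alpha_1}a^{(2)}_{\alpha_1\alpha_0}=k\Tr(A_1A_2)$. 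The three contributions are then $t\,\tr_kA_1\tr_kA_2$, $t^2\tr_k(A_1A_2)$ and $-t^2\tr_kA_1\tr_kA_2$, which sum to \eqref{eq:compressed-moment-general-R}.

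Finally, \eqref{eq:corner-channel-second-moment} follows from $P^2=P$ and cyclicity of the trace, which give $\Gamma_N(A_1)\Gamma_N(A_2)$ the same trace as $\pi_{k,N}(A_1)P\pi_{k,N}(A_2)P$. Dividing by $t$ via \eqref{eq:corner-trace-relation} completes the argument.
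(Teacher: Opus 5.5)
Your proposal is correct and follows essentially the same route as the paper: expand in the Weyl--Bell basis, factor the normalized trace over $\mc K_N\otimes\mb C^\ell$, and use the rectangle trace identity to discard the nondegenerate rectangles, since on the degenerate ones the permutation word is the identity. You then conclude by inclusion--exclusion and cyclicity, as the paper does. The differences are cosmetic: the paper first proves the mixed moment for arbitrary $B_1,B_2\in M_m$ and only then sets $B_1=B_2=p_q$. Also, on the bookkeeping you flagged, the expansion yields exactly the lemma's word $T_{\alpha_1\beta_0}T_{\alpha_1\beta_1}^*T_{\alpha_0\beta_1}T_{\alpha_0\beta_0}^*$, which is the inverse of your $W_S$; this is immaterial because only the symmetric indicator enters.
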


\begin{proof}
For $A\in M_k$ and $B\in M_m$, let
\begin{equation}
  a_{\alpha',\alpha}(A)
  :=\,{}_k\!\bra{\alpha'}(A\otimes I_k)\ket{\alpha}_k,
  \qquad
  b_{\beta',\beta}(B)
  :=\,{}_m\!\bra{\beta'}(B\otimes I_m)\ket{\beta}_m.
\end{equation}
The Weyl basis gives
\begin{align}
  a_{\alpha,\alpha}(A)&=\tr_k(A),
  &
  \sum_{\alpha_0,\alpha_1}
  a_{\alpha_0,\alpha_1}(A_1)
  a_{\alpha_1,\alpha_0}(A_2)
  &=k^2\tr_k(A_1A_2),
  \label{eq:Weyl-sums-k}\\
  b_{\beta,\beta}(B)&=\tr_m(B),
  &
  \sum_{\beta_0,\beta_1}
  b_{\beta_0,\beta_1}(B_1)
  b_{\beta_1,\beta_0}(B_2)
  &=m^2\tr_m(B_1B_2).
  \label{eq:Weyl-sums-m}
\end{align}
Moreover, with normalized trace on $\mc K_N\otimes\mb C^\ell$,
\begin{equation}
\begin{aligned}
  &\tr\!\left(
    T_{(\alpha_1,\beta_0),N}
    T_{(\alpha_1,\beta_1),N}^*
    T_{(\alpha_0,\beta_1),N}
    T_{(\alpha_0,\beta_0),N}^*
  \right)\\
  &\qquad=
  \mathbf 1_{\{\alpha_0=\alpha_1\ \mathrm{or}\ \beta_0=\beta_1\}}.
\end{aligned}
\end{equation}
Indeed, the product is the identity when either equality holds; otherwise
its $M_\ell$-trace vanishes by
\eqref{eq:rectangular-trace-orthogonality}, independently of the permutation
factor.  Expanding in the Weyl--Bell basis and applying inclusion--exclusion
therefore gives, for arbitrary $B_1,B_2\in M_m$,
\begin{equation}\label{eq:exact-fourth-moment-general-R}
\begin{aligned}
  &\varphi_N\!\left(
    \pi_{k,N}(A_1)\eta_{m,N}^{\boldsymbol\sigma}(B_1)
    \pi_{k,N}(A_2)\eta_{m,N}^{\boldsymbol\sigma}(B_2)
  \right)\\
  &=\tr_k(A_1A_2)\tr_m(B_1)\tr_m(B_2)\\
  &\quad+\tr_k(A_1)\tr_k(A_2)\tr_m(B_1B_2)\\
  &\quad-\tr_k(A_1)\tr_k(A_2)\tr_m(B_1)\tr_m(B_2).
\end{aligned}
\end{equation}
Taking $B_1=B_2=p_q$, using $p_q^2=p_q$ and $\tr_m(p_q)=t$, yields
\eqref{eq:compressed-moment-general-R}.  Equation
\eqref{eq:corner-channel-second-moment} follows from
\eqref{eq:corner-trace-relation} and cyclicity of the trace.
\end{proof}
The following theorem is a finite dimensional analogue of Bell phenomenon in large limit \cite{CollinsNechita2010}:
\begin{theorem}[Exact Bell-state phenomenon]
\label{thm:exact-isotropic-Bell-general-R}
For every $N\geq2$ and every permutation tuple
$\boldsymbol\sigma_N$,
\begin{equation}\label{eq:exact-isotropic-Bell-general-R}
  (\Phi_N^{\boldsymbol\sigma}\otimes
   \overline{\Phi_N^{\boldsymbol\sigma}})(\psi_{r_N}^+)
  =t\psi_k^++(1-t)\frac{I_{k^2}}{k^2}.
\end{equation}
Thus the finite-dimensional channel reproduces exactly the diagonal-state
output of the free-compression channel.
\end{theorem}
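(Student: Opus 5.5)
We compute the output state through the Gram-matrix identity \eqref{eq:prelim-gram-identity}, specialised to the finite algebra $\mc M_N^{\boldsymbol\sigma}\cong M_{r_N}$ with its normalized trace $\tau_N^{\boldsymbol\sigma}$. In finite dimensions the diagonal state on $\mc M_N^{\boldsymbol\sigma}\otimes\overline{\mc M_N^{\boldsymbol\sigma}}$ is the maximally entangled vector state $\psi_{r_N}^+$; this is the remark following \eqref{eq:prelim-diagonal-state-elementary}, applied through the identification $J_N$. Hence
\[
  \langle i,k|(\Phi_N^{\boldsymbol\sigma}\otimes\overline{\Phi_N^{\boldsymbol\sigma}})(\psi_{r_N}^+)|j,\ell\rangle
  =\tau_N^{\boldsymbol\sigma}\!\left(\Gamma_N^{\boldsymbol\sigma}(E_{ij})^*\Gamma_N^{\boldsymbol\sigma}(E_{k\ell})\right).
\]
Before using this, I would check the conventions directly. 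For $X\in M_{r_N}$ and matrix units, $\langle\Omega|\,X\otimes\overline Y\,|\Omega\rangle=\tr(XY^*)$. This matches the conjugate channel convention $\overline\Gamma(A)=\overline{\Gamma(\overline A)}$, and it gives the index placement $\langle i,k|\cdot|j,\ell\rangle$ with $E_{ij}$ in the first slot and $E_{k\ell}$ in the second.

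Next, since $\Gamma_N^{\boldsymbol\sigma}$ is a compression, $\Gamma_N^{\boldsymbol\sigma}(E_{ij})^*=\Gamma_N^{\boldsymbol\sigma}(E_{ji})$. \Cref{lem:exact-compressed-moment}, in the form \eqref{eq:corner-channel-second-moment}, then gives
\[
  \tau_N^{\boldsymbol\sigma}\!\left(\Gamma_N^{\boldsymbol\sigma}(E_{ji})\Gamma_N^{\boldsymbol\sigma}(E_{k\ell})\right)
  =t\,\tr_k(E_{ji}E_{k\ell})+(1-t)\tr_k(E_{ji})\tr_k(E_{k\ell})
  =\frac{t}{k}\delta_{ik}\delta_{j\ell}+\frac{1-t}{k^2}\delta_{ij}\delta_{k\ell}.
\]

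Finally, compare with the target state. The entries of $I_{k^2}/k^2$ are $\delta_{ij}\delta_{k\ell}/k^2$, so the $(1-t)$ term matches. The entries of $\psi_k^+=\frac1k\sum_{a,b}|a,a\rangle\langle b,b|$ are $\langle i,k|\psi_k^+|j,\ell\rangle=\frac1k\delta_{ik}\delta_{j\ell}$, so the $t$ term matches as well. This gives \eqref{eq:exact-isotropic-Bell-general-R} for every $N$ and every tuple $\boldsymbol\sigma_N$, since \Cref{lem:exact-compressed-moment} holds uniformly in both.

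The main obstacle is bookkeeping rather than substance. One must justify that the finite-dimensional Schrödinger channel $\Phi_N^{\boldsymbol\sigma}\otimes\overline{\Phi_N^{\boldsymbol\sigma}}$ applied to $\psi_{r_N}^+$ is exactly the $\max$-product channel evaluated at $\omega_\delta$. This holds because min and max tensor norms coincide in finite dimensions. One must also fix the basis of $\mathbb C^{r_N}$ so that the conjugation defining $\overline{\Phi_N^{\boldsymbol\sigma}}$ and the vector $\psi_{r_N}^+$ refer to the same basis; the answer is basis-independent because $\psi^+$ satisfies $(U\otimes\overline U)\psi^+=\psi^+$. If the index placement comes out transposed, I would recheck the sign of the conjugation. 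Even in that case the result is unaffected, since the isotropic state is invariant under swapping $i\leftrightarrow k$ together with $j\leftrightarrow\ell$.
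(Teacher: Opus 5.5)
Your proposal is correct and follows the paper's proof essentially step for step. You identify the diagonal state with $\psi_{r_N}^+$ via $J_N$, apply the Gram-matrix identity \eqref{eq:prelim-gram-identity}, use $\Gamma_N^{\boldsymbol\sigma}(E_{ij})^*=\Gamma_N^{\boldsymbol\sigma}(E_{ji})$ together with \Cref{lem:exact-compressed-moment}, and match entries against $t\psi_k^++(1-t)I_{k^2}/k^2$. Your additional convention checks, $\langle\Omega|X\otimes\overline Y|\Omega\rangle=\tr(XY^*)$ and basis independence via $(U\otimes\overline U)\psi^+=\psi^+$, are accurate and only make explicit what the paper leaves implicit.
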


\begin{proof}
Under the identification
$\mc M_N^{\boldsymbol\sigma}\cong M_{r_N}$ induced by $J_N$, the diagonal
state associated with $\tau_N^{\boldsymbol\sigma}$ is the maximally
entangled state $\psi_{r_N}^+$.  Let $\rho_N$ denote the left-hand side of
\eqref{eq:exact-isotropic-Bell-general-R}.  By the Gram-matrix identity
\eqref{eq:prelim-gram-identity}, for $1\leq i,j,a,b\leq k$,
\begin{align}
  \langle i,a|\rho_N|j,b\rangle
  &=\tau_N^{\boldsymbol\sigma}\!\left(
    \Gamma_N^{\boldsymbol\sigma}(E_{ij})^*
    \Gamma_N^{\boldsymbol\sigma}(E_{ab})
  \right)\\
  &=t\tr_k(E_{ji}E_{ab})
    +(1-t)\tr_k(E_{ji})\tr_k(E_{ab})\\
  &=\frac tk\delta_{ia}\delta_{jb}
    +\frac{1-t}{k^2}\delta_{ij}\delta_{ab},
\end{align}
where the second line is
\eqref{eq:corner-channel-second-moment}.  These are precisely the matrix
elements of
$t\psi_k^++(1-t)I_{k^2}/k^2$.
\end{proof}

Consequently,
\begin{equation}
  \MOE\!\left(
    \Phi_N^{\boldsymbol\sigma}\otimes
    \overline{\Phi_N^{\boldsymbol\sigma}}
  \right)
  \leq
  H\!\left(t\psi_k^++(1-t)\frac{I_{k^2}}{k^2}\right)
\end{equation}
for every permutation tuple.  The two-copy side of the argument is therefore
exact; only the one-copy approximation remains, and this is treated in
\Cref{sec:single-channel-output-convergence}.


\section{Permutation admissibility and finite-dimensional nonadditivity}
\label{sec:single-channel-output-convergence}

Fix $k,m\geq2$, $q\in\{1,\ldots,m-1\}$, and $t=q/m$.  Let
$(R_\xi)_{\xi\in\Theta}\subseteq\mc U(\ell)$ satisfy the rectangle trace
identity from \Cref{sec:exact-Bell-output}.  Given
\[
  \boldsymbol\sigma_N
  :=(\sigma_{\xi,N})_{\xi\in\Theta}\in S_N^\Theta,
\]
let
\begin{equation}\label{eq:admissibility-finite-channel}
  \Phi_N^{\boldsymbol\sigma}:M_{r_N}\longrightarrow M_k,
  \qquad
  r_N=k^2qm\ell(N-1),
\end{equation}
be the channel of \Cref{sec:finite-dimensional-channel}, and denote its
Heisenberg adjoint by
\begin{equation}\label{eq:admissibility-Heisenberg-map}
  \Gamma_N^{\boldsymbol\sigma}:M_k\longrightarrow M_{r_N}.
\end{equation}
The Bell-state calculation from the preceding section is exact for every
choice of the permutations:
\begin{equation}\label{eq:admissibility-exact-Bell}
  \left(
    \Phi_N^{\boldsymbol\sigma}
    \otimes
    \overline{\Phi_N^{\boldsymbol\sigma}}
  \right)(\psi_{r_N}^+)
  =t\psi_k^++(1-t)\frac{I_{k^2}}{k^2}.
\end{equation}
Consequently, no approximation is needed for the two-copy upper bound.  The
only remaining task is to compare the one-copy output body with the limiting
body $\mc K_{k,t}$.

We measure the relevant one-sided discrepancy by
\begin{equation}\label{eq:admissibility-error}
  \operatorname{err}_N(\boldsymbol\sigma_N)
  :=
  \sup_{\substack{A=A^*\in M_k\\ \|A\|\leq1}}
  \left\{
    \lambda_{\max}\!\left(\Gamma_N^{\boldsymbol\sigma}(A)\right)
    -h_{\mc K_{k,t}}(A)
  \right\}.
\end{equation}
Thus $\boldsymbol\sigma_N$ is $\varepsilon$-admissible, in the terminology of
the introduction, precisely when
$\operatorname{err}_N(\boldsymbol\sigma_N)\leq\varepsilon$.

\subsection{Support polynomials and the admissibility criterion}
\label{subsec:admissibility-support-polynomial}

The support function is a top spectral edge, whereas strong convergence is
formulated in terms of operator norms.  The following elementary shift passes
between the two.

\begin{lemma}[Support functions as shifted operator norms]
\label{lem:support-shifted-norm}
Let $\Gamma:M_k\to\mathcal A$ be a unital completely positive map into a
unital $C^*$-algebra, and let $\mathcal O_\Gamma$ be the output body of its
Schrödinger dual.  If $A=A^*\in M_k$ and $\|A\|\leq1$, then
\begin{equation}\label{eq:support-shifted-norm}
  h_{\mathcal O_\Gamma}(A)
  =\max\sigma\!\left(\Gamma(A)\right)
  =\bigl\|1_{\mathcal A}+\Gamma(A)\bigr\|-1.
\end{equation}
\end{lemma}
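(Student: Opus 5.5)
The first equality is the support-function identity \eqref{eq:prelim-support-function}. I will simply restate that argument in the present generality, with $\mathcal A$ an arbitrary unital $C^*$-algebra. Every $\omega\in\St(\mathcal A)$ gives the state $\omega\circ\Gamma$ on $M_k$, and $\Phi_\Gamma$ maps $\St(\mathcal A)$ onto $\mathcal O_\Gamma$. Hence
\[
  h_{\mathcal O_\Gamma}(A)=\sup_{\omega\in\St(\mathcal A)}\omega(\Gamma(A)).
\]
For the self-adjoint element $x=\Gamma(A)$, \eqref{eq:state-spectral-edge} identifies this supremum with $\max\sigma(x)$. Its proof is not specific to von Neumann algebras: the bound $\omega(x)\le\max\sigma(x)$ follows from positivity of $\max\sigma(x)\one-x$. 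Attainment follows from evaluating at a point of the spectrum, i.e. a character of $C^*(x,\one)$, and extending it to a state on $\mathcal A$ by Hahn--Banach/Krein.

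For the second equality I will use two facts. First, $\Gamma$ is unital and positive, hence contractive on self-adjoint elements. Concretely, $-\one\le A\le\one$ implies
\[
  -\one_{\mathcal A}\le\Gamma(A)\le\one_{\mathcal A},
\]
so $\sigma(\Gamma(A))\subseteq[-1,1]$. Second, by spectral mapping, $y:=\one_{\mathcal A}+\Gamma(A)$ is self-adjoint with $\sigma(y)=1+\sigma(\Gamma(A))\subseteq[0,2]$. Thus $y\ge0$, and for a positive element the norm equals the spectral radius, which here is $\max\sigma(y)$. Therefore
\[
  \|\one_{\mathcal A}+\Gamma(A)\|=1+\max\sigma(\Gamma(A)),
\]
which is the claim.

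There is no real obstacle. The only point requiring care is the shift by $\one$: without it, $\|\Gamma(A)\|$ would return $\max|\sigma|$, which can be attained at the bottom edge of the spectrum. Positivity of $\Gamma$ together with $\|A\|\le1$ is exactly what makes the shifted operator positive, so that its norm sees the top edge.
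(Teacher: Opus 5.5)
Your proposal is correct and follows the same route as the paper: first use duality to write the support function as $\sup_{\omega}\omega(\Gamma(A))=\max\sigma(\Gamma(A))$, then use that a unital positive map is contractive on self-adjoint elements, so $1_{\mathcal A}+\Gamma(A)\geq 0$ and its norm is $1+\max\sigma(\Gamma(A))$. You also spell out why the state--spectral-edge identity holds in an arbitrary unital $C^*$-algebra: you extend a character of $C^*(\Gamma(A),1_{\mathcal A})$ to a state, a step the paper takes for granted.
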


\begin{proof}
By duality,
\[
  h_{\mathcal O_\Gamma}(A)
  =\sup_{\omega\in\St(\mathcal A)}\omega(\Gamma(A))
  =\max\sigma(\Gamma(A)).
\]
A unital positive map is contractive on self-adjoint elements, so
$\sigma(\Gamma(A))\subseteq[-1,1]$.  Hence
$1_{\mathcal A}+\Gamma(A)\geq0$, and its norm is
$1+\max\sigma(\Gamma(A))$.
\end{proof}

We now encode these spectral edges by degree-two polynomials.  Move the
permutation register to the final tensor position and set
\begin{equation}\label{eq:admissibility-coefficient-space}
  \mc E_0
  :=
  (\mb C^k\otimes\mb C^k)
  \otimes
  (\mb C^m\otimes\mb C^m)
  \otimes\mb C^\ell.
\end{equation}
Let $p_q\in M_m$ be the rank-$q$ projection used in the construction and put
\begin{equation}\label{eq:admissibility-Q-G-r0}
  Q:=I_{k^2}\otimes(p_q\otimes I_m)\otimes I_\ell,
  \qquad
  \mc G:=\operatorname{Ran}Q,
  \qquad
  r_0:=\dim\mc G=k^2qm\ell.
\end{equation}
For $\xi\in\Theta$, write $\Pi_\xi=|\xi\rangle\langle\xi|$ and define
\begin{equation}\label{eq:admissibility-Vxi}
  V_\xi:=\Pi_\xi\otimes R_\xi\in\mc B(\mc E_0).
\end{equation}
For $A\in M_k$, let
\begin{equation}\label{eq:admissibility-Ahat}
  \widehat A
  :=((A\otimes I_k)\otimes I_{m^2})\otimes I_\ell
  \in\mc B(\mc E_0)
\end{equation}
and
\begin{equation}\label{eq:admissibility-Cxi-eta}
  C_{\xi,\eta}(A)
  :=\left.QV_\xi^*\widehat A V_\eta Q\right|_{\mc G}
  \in\mc B(\mc G)\cong M_{r_0}.
\end{equation}

Let
\begin{equation}\label{eq:admissibility-free-group}
  \mb F_\Theta=\langle z_\xi:\xi\in\Theta\rangle
\end{equation}
be the free group on $\Theta$, and write $Z_w$ for the formal group-algebra
basis element associated with $w\in\mb F_\Theta$; in particular,
$Z_\xi:=Z_{z_\xi}$.  For $A=A^*\in M_k$, set
\begin{equation}\label{eq:admissibility-Aw}
  A_w(A)
  :=
  \sum_{\substack{\xi,\eta\in\Theta:\\
       \operatorname{red}(z_\xi^{-1}z_\eta)=w}}
  C_{\xi,\eta}(A),
  \qquad |w|\leq2,
\end{equation}
and define the self-adjoint reduced polynomial
\begin{equation}\label{eq:admissibility-pA}
\begin{aligned}
  \mathfrak p_A(Z)
  &:=\sum_{|w|\leq2}A_w(A)\otimes Z_w\\
  &=\sum_{\xi,\eta\in\Theta}
    C_{\xi,\eta}(A)\otimes Z_\xi^{-1}Z_\eta
  \in M_{r_0}\odot\mb C[\mb F_\Theta].
\end{aligned}
\end{equation}

For $\sigma\in S_N$, denote its standard representation by
\begin{equation}\label{eq:admissibility-standard-representation}
  U_\sigma
  :=P_\sigma\big|_{(\mb C^N)^\circ},
  \qquad
  (\mb C^N)^\circ
  :=\left\{x\in\mb C^N:\sum_{j=1}^N x_j=0\right\},
\end{equation}
and put
\begin{equation}\label{eq:admissibility-finite-free-tuples}
  \boldsymbol U_N
  :=(U_{\sigma_{\xi,N}})_{\xi\in\Theta}.
\end{equation}
Let
\begin{equation}\label{eq:admissibility-free-regular-tuple}
  L_\xi:=\lambda_{\mb F_\Theta}(z_\xi),
  \qquad
  \boldsymbol L:=(L_\xi)_{\xi\in\Theta}.
\end{equation}

\begin{lemma}[Support-polynomial representation]
\label{lem:admissibility-polynomial-representation}
For every $A=A^*\in M_k$,
\begin{align}
  \lambda_{\max}\!\left(\Gamma_N^{\boldsymbol\sigma}(A)\right)
  &=\lambda_{\max}\!\left(\mathfrak p_A(\boldsymbol U_N)\right),
  \label{eq:admissibility-finite-edge}\\
  h_{\mc K_{k,t}}(A)
  &=\lambda_{\max}\!\left(\mathfrak p_A(\boldsymbol L)\right).
  \label{eq:admissibility-free-edge}
\end{align}
Moreover, with
\begin{equation}\label{eq:admissibility-coefficient-constants}
  s:=|\Theta|=k^2m^2,
  \qquad
  R_0:=s\sqrt\ell,
\end{equation}
one has
\begin{equation}\label{eq:admissibility-coefficient-bound}
  \max_{|w|\leq2}\|A_w(A)\|_F\leq R_0
  \qquad
  \text{whenever }\|A\|\leq1.
\end{equation}
\end{lemma}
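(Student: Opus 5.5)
We conjugate by the controlled unitary to move the projection into a fixed position, and then read off the polynomial. Since $P_N=D_NQ_ND_N^*$, the operator $\Gamma_N(A)=P_N\pi_{k,N}(A)P_N$ is unitarily equivalent to $Q_ND_N^*\pi_{k,N}(A)D_NQ_N$. Here $\Gamma_N(A)$ is viewed as an operator on $\operatorname{Ran}P_N$, and the conjugate is viewed on $\operatorname{Ran}Q_N=\mc G\otimes\mc K_N$. Hence the two have the same maximal eigenvalue.

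Reordering tensor factors so that the permutation register comes last, we write
\[
D_N=\sum_\xi V_\xi\otimes S_{\xi,N}.
\]
This uses $\Pi_\xi\otimes S_{\xi,N}\otimes R_\xi$, which corresponds to $(\Pi_\xi\otimes R_\xi)\otimes S_{\xi,N}$. The operator $\pi_{k,N}(A)$ becomes $\widehat A\otimes I$. Expanding then gives
\[
Q_ND_N^*\pi_{k,N}(A)D_NQ_N=\sum_{\xi,\eta}QV_\xi^*\widehat AV_\eta Q\otimes S_{\xi,N}^*S_{\eta,N},
\]
which, restricted to $\mc G\otimes\mc K_N$, is exactly $\mathfrak p_A(\boldsymbol U_N)$. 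Note that $S_{\xi,N}^*S_{\eta,N}=U_{\sigma_\xi}^{-1}U_{\sigma_\eta}$ depends only on the reduced word $z_\xi^{-1}z_\eta$, because $\sigma\mapsto U_\sigma$ is a representation. Grouping terms by reduced word therefore yields the coefficients $A_w(A)$. This proves \eqref{eq:admissibility-finite-edge}.

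For \eqref{eq:admissibility-free-edge}, the same computation applies in $\mc N_\infty$ with $D_\infty$ and $u_\xi=L_\xi$. It shows that $\Gamma_\infty(A)$, restricted to $\operatorname{Ran}P_\infty$, is unitarily equivalent to $\mathfrak p_A(\boldsymbol L)$ acting on $\mc G\otimes\ell_2(\mb F_\Theta)$. Some care is needed here: $\mathfrak p_A(\boldsymbol L)$ lives in $\mc B(\mc G)\otimes C^*_r(\mb F_\Theta)$, whereas $\Gamma_\infty(A)$ lives in the corner $\mc M_\infty\subseteq P_\infty\mc N_\infty P_\infty$. The maximum of the spectrum of an element computed inside a unital $C^*$-subalgebra agrees with its value in the ambient algebra, by spectral permanence. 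Combining this with \Cref{prop:free-model-identification} and \Cref{lem:support-shifted-norm} gives
\[
h_{\mc K_{k,t}}(A)=h_{\Gamma_\infty}(A)=\lambda_{\max}(\Gamma_\infty(A))=\lambda_{\max}(\mathfrak p_A(\boldsymbol L)).
\]
The main subtlety is this bookkeeping of corners and units. The factor $M_\ell$ from $R_\xi$ is harmless, since it is absorbed into $V_\xi$.

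For the coefficient bound, each $C_{\xi,\eta}(A)$ is a compression of $V_\xi^*\widehat AV_\eta$, which has rank at most $\ell$, because $V_\eta=\Pi_\eta\otimes R_\eta$ has rank $\ell$. Its operator norm is at most $\|A\|\leq1$, so $\|C_{\xi,\eta}(A)\|_F\leq\sqrt\ell$. For a fixed $w$ with $|w|\leq2$, we count the pairs $(\xi,\eta)$ with $\operatorname{red}(z_\xi^{-1}z_\eta)=w$. If $w=e$, these are the $s$ pairs with $\xi=\eta$. If $w=z_\xi^{-1}z_\eta$ with $\xi\neq\eta$, the pair is unique. Words of length one cannot occur. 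The triangle inequality then gives $\|A_w(A)\|_F\leq s\sqrt\ell=R_0$. The one step worth checking is the $w=e$ case, where one could even sharpen the bound using the orthogonal supports of the $\Pi_\xi$. The crude bound already suffices.
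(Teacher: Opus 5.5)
Your proposal is correct and follows essentially the same route as the paper. You conjugate by the controlled unitary $D_N=\sum_\xi V_\xi\otimes S_{\xi,N}$ (respectively $D_\infty$ with $L_\xi$) to read off $\mathfrak p_A$, invoke the free-compression identification for the limit, and combine $\|C_{\xi,\eta}(A)\|_F\leq\sqrt\ell$ with the pair count for the words $z_\xi^{-1}z_\eta$; your explicit spectral-permanence remark for the corner $\mc M_\infty$ and your rank-$\ell$ Frobenius estimate (in place of the paper's rank-one factor times $\|R_\xi^*R_\eta\|_F=\sqrt\ell$) are harmless refinements of the same argument.
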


\begin{proof}
After reordering tensor factors, the controlled unitary is
\[
  D_N^{\boldsymbol\sigma}
  =\sum_{\xi\in\Theta}V_\xi\otimes U_{\sigma_{\xi,N}}.
\]
Conjugating the compression by $D_N^{\boldsymbol\sigma}$ gives
\[
  \Gamma_N^{\boldsymbol\sigma}(A)
  \cong
  \sum_{\xi,\eta\in\Theta}
  C_{\xi,\eta}(A)
  \otimes U_{\sigma_{\xi,N}}^*U_{\sigma_{\eta,N}}
  =\mathfrak p_A(\boldsymbol U_N),
\]
which proves \eqref{eq:admissibility-finite-edge}.  Replacing the standard
permutation representations by the left regular generators gives the
infinite controlled unitary of \Cref{sec:finite-dimensional-channel}.
The free-product identification in
\Cref{thm:explicit-matrix-free-product} then identifies the resulting
compression with $A\mapsto pAp$.  Its output body is $\mc K_{k,t}$, so
\eqref{eq:admissibility-free-edge} follows from the support-function formula
\eqref{eq:prelim-support-function}.

Finally, $\Pi_\xi\widehat A\Pi_\eta$ has rank at most one and Frobenius norm
at most $\|A\|$, while $\|R_\xi^*R_\eta\|_F=\sqrt\ell$.  Hence
$\|C_{\xi,\eta}(A)\|_F\leq\sqrt\ell$.  A nontrivial reduced word
$z_\xi^{-1}z_\eta$ determines the ordered pair $(\xi,\eta)$ uniquely; only
the identity coefficient receives $s$ summands.  The triangle inequality
therefore gives \eqref{eq:admissibility-coefficient-bound}.
\end{proof}

For the quantitative argument below, it is convenient to use the positive
shift
\begin{equation}\label{eq:admissibility-qA}
  \mathfrak q_A(Z):=\mathfrak p_A(Z)+I_{\mc G}\otimes Z_e.
\end{equation}
Let $\boldsymbol W=(W_\xi)_{\xi\in\Theta}$ be a unitary tuple on a
Hilbert space $\mathcal H$, and put
\[
  \mc D(\boldsymbol W):=\sum_{\xi\in\Theta}V_\xi\otimes W_\xi.
\]
Since the projections $\Pi_\xi$ are mutually orthogonal and sum to the
identity, $\mc D(\boldsymbol W)$ is unitary.  Moreover,
\begin{equation}\label{eq:admissibility-positive-shift-compression}
\begin{aligned}
  \mathfrak q_A(\boldsymbol W)
  ={}&(Q\otimes I_{\mathcal H})\mc D(\boldsymbol W)^*
  (\widehat{A+I_k}\otimes I_{\mathcal H})
  \mc D(\boldsymbol W)(Q\otimes I_{\mathcal H})
  \big|_{\mc G\otimes\mathcal H}.
\end{aligned}
\end{equation}
Consequently, for $A=A^*$ and $\|A\|\leq1$,
\begin{equation}\label{eq:admissibility-qA-bounds}
  0\leq\mathfrak q_A(\boldsymbol W)\leq2I,
  \qquad
  \|\mathfrak q_A\|_{M_{r_0}\otimes C^*(\mb F_\Theta)}\leq2,
\end{equation}
and
\begin{equation}\label{eq:admissibility-qA-edge-identities}
\begin{aligned}
  \|\mathfrak q_A(\boldsymbol U_N)\|
  &=1+\lambda_{\max}\!\left(\Gamma_N^{\boldsymbol\sigma}(A)\right),\\
  \|\mathfrak q_A(\boldsymbol L)\|
  &=1+h_{\mc K_{k,t}}(A).
\end{aligned}
\end{equation}

The same discrepancy has a direct geometric interpretation.  For
compact convex sets $\mc C,\mc D\subseteq\Dens_k$, define
\begin{equation}\label{eq:admissibility-directed-distance}
  d_{\to,T}(\mc C,\mc D)
  :=\sup_{\rho\in\mc C}\inf_{\sigma\in\mc D}
  \frac12\|\rho-\sigma\|_1.
\end{equation}
Trace-norm/operator-norm duality gives
\begin{equation}\label{eq:admissibility-directed-support-duality}
  2d_{\to,T}(\mc C,\mc D)
  =
  \sup_{\substack{A=A^*\\ \|A\|\leq1}}
  \bigl(h_{\mc C}(A)-h_{\mc D}(A)\bigr).
\end{equation}
Thus, writing
$\mc O_N^{\boldsymbol\sigma}:=
\Phi_N^{\boldsymbol\sigma}(\Dens_{r_N})$,
\begin{equation}\label{eq:admissibility-error-distance-identity}
  \operatorname{err}_N(\boldsymbol\sigma_N)
  =2d_{\to,T}\!\left(
      \mc O_N^{\boldsymbol\sigma},\mc K_{k,t}
    \right).
\end{equation}

Put
\begin{equation}\label{eq:admissibility-entropy-modulus}
  \Omega_k(T):=h_2(T)+T\log(k-1),
  \qquad
  h_2(T):=-T\log T-(1-T)\log(1-T).
\end{equation}

\begin{proposition}[Admissibility implies nonadditivity]
\label{prop:admissibility-entropy-lower-bound}
If $\operatorname{err}_N(\boldsymbol\sigma_N)\leq\varepsilon$ and
$0\leq\varepsilon/2\leq1-1/k$, then
\begin{equation}\label{eq:admissibility-MOE-lower-bound}
  \MOE(\Phi_N^{\boldsymbol\sigma})
  \geq H^*(k,t)-\Omega_k(\varepsilon/2).
\end{equation}
In particular, if
\begin{equation}\label{eq:admissibility-gap-condition}
  2\Omega_k(\varepsilon/2)<\Delta_{k,t},
\end{equation}
then
\begin{equation}\label{eq:admissibility-nonadditivity-conclusion}
  \MOE\!\left(
    \Phi_N^{\boldsymbol\sigma}
    \otimes
    \overline{\Phi_N^{\boldsymbol\sigma}}
  \right)
  <2\MOE(\Phi_N^{\boldsymbol\sigma}).
\end{equation}
\end{proposition}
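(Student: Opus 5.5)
The plan is to convert the one-sided support-function bound into a trace-distance bound, and then apply a continuity bound for entropy. By \eqref{eq:admissibility-error-distance-identity}, the hypothesis $\operatorname{err}_N(\boldsymbol\sigma_N)\leq\varepsilon$ says that $d_{\to,T}(\mc O_N^{\boldsymbol\sigma},\mc K_{k,t})\leq\varepsilon/2$.

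The first step is to justify \eqref{eq:admissibility-directed-support-duality} in the direction we need. Suppose $\rho\in\mc O_N^{\boldsymbol\sigma}$ and $\inf_{\sigma\in\mc K_{k,t}}\tfrac12\|\rho-\sigma\|_1>\varepsilon/2$. Minimax applies, since $\mc K_{k,t}$ is compact convex and the unit ball of self-adjoint $A$ is compact convex, with bilinear pairing. It yields an $A=A^*$ with $\|A\|\leq1$ and $\Tr(A\rho)-h_{\mc K_{k,t}}(A)>\varepsilon$. Here we use $\|X\|_1=\sup_{\|A\|\le1}\Tr(AX)$ for traceless self-adjoint $X$. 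Since $\Tr(A\rho)\leq h_{\mc O_N}(A)=\lambda_{\max}(\Gamma_N^{\boldsymbol\sigma}(A))$ by \eqref{eq:prelim-support-function}, this contradicts the hypothesis. Hence every output $\rho$ of $\Phi_N^{\boldsymbol\sigma}$ has a point $\sigma\in\mc K_{k,t}$, attained by compactness, with $T:=\tfrac12\|\rho-\sigma\|_1\leq\varepsilon/2$.

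The second step is the Audenaert--Fannes inequality. It gives $|H(\rho)-H(\sigma)|\leq h_2(T)+T\log(k-1)=\Omega_k(T)$ for $T\le 1-1/k$. The function $\Omega_k$ is nondecreasing on $[0,1-1/k]$: its derivative is $\log((1-T)/T)+\log(k-1)\geq0$ there. So, using the assumption $\varepsilon/2\leq1-1/k$, we get $H(\rho)\geq H(\sigma)-\Omega_k(\varepsilon/2)\geq \min_{\mc K_{k,t}}H-\Omega_k(\varepsilon/2)$. By \Cref{prop:free-model-identification} the minimum equals $H^*(k,t)$. Taking the infimum over $\rho$ gives \eqref{eq:admissibility-MOE-lower-bound}.

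The third step combines this with the two-copy bound. The identity \eqref{eq:admissibility-exact-Bell} gives $\MOE(\Phi_N^{\boldsymbol\sigma}\otimes\overline{\Phi_N^{\boldsymbol\sigma}})\leq H(t\psi_k^++(1-t)I/k^2)=2H^*(k,t)-\Delta_{k,t}$; this is just the definition \eqref{eq:intro-free-gap} with $\min_{\mc K_{k,t}}H=H^*(k,t)$. The gap condition \eqref{eq:admissibility-gap-condition} then gives $2H^*(k,t)-\Delta_{k,t}<2H^*(k,t)-2\Omega_k(\varepsilon/2)\leq2\MOE(\Phi_N^{\boldsymbol\sigma})$.

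The only delicate point is the first step. The supremum must really be over all self-adjoint contractions, as it is in the definition of $\operatorname{err}_N$. We also need the infimum over $\mc K_{k,t}$ to be attained, so that the nonstrict bound $T\le\varepsilon/2$ holds. Compactness of $\mc K_{k,t}$ handles this. The monotonicity range of $\Omega_k$ explains the hypothesis $\varepsilon/2\le 1-1/k$.
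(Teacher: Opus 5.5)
Your proof is correct and follows the paper's argument: you pass from $\operatorname{err}_N\le\varepsilon$ to the directed trace-distance bound via \eqref{eq:admissibility-error-distance-identity}, apply Audenaert--Fannes, and compare with the exact Bell output through the definition of $\Delta_{k,t}$. The only difference is that you spell out what the paper leaves implicit: the minimax justification of the support-function duality, attainment of the nearest point in the compact set $\mc K_{k,t}$, and the monotonicity of $\Omega_k$ on $[0,1-1/k]$, which explains the hypothesis $\varepsilon/2\le 1-1/k$. (The Audenaert--Fannes bound itself holds for all $T\in[0,1]$, so that restriction is needed only for monotonicity.)
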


\begin{proof}
By \eqref{eq:admissibility-error-distance-identity}, every
$\rho\in\mc O_N^{\boldsymbol\sigma}$ admits
$\sigma\in\mc K_{k,t}$ with
$\frac12\|\rho-\sigma\|_1\leq\varepsilon/2$.  The
Audenaert--Fannes inequality gives
\[
  H(\rho)
  \geq H(\sigma)-\Omega_k(\varepsilon/2)
  \geq H^*(k,t)-\Omega_k(\varepsilon/2).
\]
Taking the infimum proves \eqref{eq:admissibility-MOE-lower-bound}.  On the
other hand, \eqref{eq:admissibility-exact-Bell} gives
\[
  \MOE\!\left(
    \Phi_N^{\boldsymbol\sigma}
    \otimes
    \overline{\Phi_N^{\boldsymbol\sigma}}
  \right)
  \leq
  H\!\left(t\psi_k^++(1-t)\frac{I_{k^2}}{k^2}\right).
\]
The conclusion follows from the definition of $\Delta_{k,t}$ in
\eqref{eq:intro-free-gap}.
\end{proof}

\subsection{Random and deterministic admissibility}
\label{subsec:admissibility-random-deterministic}

The polynomial representation separates the remaining argument into two
standard inputs.  Bordenave--Collins gives random permutation tuples, while
O'Donnell--Wu supplies deterministic tuples that work uniformly for the
entire bounded polynomial class.  We first recall the matrix-coefficient
form of the Bordenave--Collins theorem \cite{BordenaveCollins2019}.

\begin{theorem}[Bordenave--Collins strong convergence]
\label{thm:admissibility-BC-specialized}
Fix $s\geq2$.  For each $N$, let
$\sigma_{1,N},\ldots,\sigma_{s,N}$ be independent uniform permutations of
$[N]$, and let
\[
  U_{j,N}:=P_{\sigma_{j,N}}\big|_{(\mb C^N)^\circ}.
\]
If $L_1,\ldots,L_s$ are the canonical left regular generators of
$\mb F_s$, then for every matrix-valued noncommutative $^*$-polynomial $P$,
\begin{equation}\label{eq:admissibility-BC-strong-convergence}
  \|P(\boldsymbol U_N,\boldsymbol U_N^*)\|
  \longrightarrow
  \|P(\boldsymbol L,\boldsymbol L^*)\|
\end{equation}
in probability.
\end{theorem}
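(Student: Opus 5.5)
This statement is the Bordenave--Collins theorem, quoted from \cite{BordenaveCollins2019}. I would not reprove it from scratch. Instead I would show that the matrix-coefficient form stated here follows from the form proved there. Bordenave and Collins prove that, for every finite set of matrix coefficients $a_0,a_1,\ldots,a_{2s}\in M_D$, the linear pencil
\[
  a_0\otimes 1+\sum_{j=1}^s\bigl(a_j\otimes U_{j,N}+a_{s+j}\otimes U_{j,N}^*\bigr)
\]
has norm converging in probability to the norm of the same pencil with $U_{j,N}$ replaced by $L_j$. More generally, they prove this for noncommutative polynomials with matrix coefficients in the $U_{j,N}$ and $U_{j,N}^*$. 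The required reduction therefore only matches their definitions to ours.

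\textbf{Identifying the restriction.} Their restriction to $\one^\perp$ coincides with our $(\mb C^N)^\circ$: this is the invariant complement of the constant vector, on which $P_\sigma$ acts by the standard representation. I would also note that $P_\sigma^*=P_{\sigma^{-1}}$ preserves this subspace, so $*$-polynomials make sense on it.

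\textbf{Reduction from polynomials to linear pencils.} If the version I cite only covers linear pencils, I would pass to arbitrary $*$-polynomials in one of two ways.

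One route is the standard linearization trick of Haagerup--Thorbj{\o}rnsen \cite{HaagerupThorbjornsen2005}. For self-adjoint $P$, convergence of the spectra of $P(\boldsymbol U_N,\boldsymbol U_N^*)$ follows from invertibility of pencils $\lambda\cdot e-L_P$. For general $P$, apply this to $P^*P$, since $\|P\|^2=\|P^*P\|$.

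The alternative is to cite van Handel's survey \cite{vanhandel2026}. It records the equivalence of linear, polynomial, and matrix-coefficient polynomial formulations of strong convergence for unitary families.

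\textbf{Expected obstacle.} The one nontrivial point in this bookkeeping is the lower bound $\liminf\|P(\boldsymbol U_N)\|\ge\|P(\boldsymbol L)\|$ in probability. I would handle it with weak convergence in expectation of normalized traces of words. For a fixed nonempty reduced word, the expected fixed-point count of a word in random permutations is $1+O(1/N)$, so the trace on $(\mb C^N)^\circ$ is $O(1/N)$. This gives convergence of traces in expectation, and concentration (or a variance bound) upgrades it to convergence in probability. Trace convergence combined with faithfulness of $\tau$ on $C^*_r(\mb F_s)$ then yields the lower norm bound via moments $\tau((P^*P)^p)^{1/2p}$.

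The upper bound is the deep content of Bordenave--Collins, namely the nonbacktracking operator analysis. It would be cited, not reproved.
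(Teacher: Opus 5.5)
Your proposal is correct and matches the paper, which gives no proof and simply recalls this matrix-coefficient form from \cite{BordenaveCollins2019}; of your two reduction routes, prefer the van Handel/Pisier linearization for unitaries, because the Haagerup--Thorbj{\o}rnsen argument as you sketch it needs invertibility of $\lambda e-L_P(\boldsymbol U_N)$ inside spectral gaps, and norm convergence of linear pencils does not directly supply that. One small slip: for a proper power $w=u^q$ the expected number of fixed points of $w(\sigma_{1,N},\ldots,\sigma_{s,N})$ tends to the number of divisors of $q$, not $1+O(1/N)$; it is still $O(1)$, so the normalized trace on $(\mb C^N)^\circ$ is still $O(1/N)$ and your lower-bound argument is unaffected.
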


\begin{theorem}[Random permutation admissibility]
\label{thm:admissibility-random}
If $\boldsymbol\sigma_N=(\sigma_{\xi,N})_{\xi\in\Theta}$ consists of
independent uniform permutations, then
\begin{equation}\label{eq:admissibility-random-uniform-convergence}
  \sup_{\substack{A=A^*\in M_k\\ \|A\|\leq1}}
  \left|
    \lambda_{\max}\!\left(\Gamma_N^{\boldsymbol\sigma}(A)\right)
    -h_{\mc K_{k,t}}(A)
  \right|
  \longrightarrow0
\end{equation}
in probability.  In particular, for every fixed $\varepsilon>0$,
\begin{equation}\label{eq:admissibility-random-probability}
  \Pr\!\left\{
    \boldsymbol\sigma_N\text{ is }\varepsilon\text{-admissible}
  \right\}
  \longrightarrow1.
\end{equation}
\end{theorem}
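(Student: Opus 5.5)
By \eqref{eq:admissibility-qA-edge-identities}, the quantity inside the supremum equals
\[
\|\mathfrak q_A(\boldsymbol U_N)\|-\|\mathfrak q_A(\boldsymbol L)\|.
\]
So it suffices to show that
\[
\sup_{A=A^*,\ \|A\|\le1}\Bigl|\|\mathfrak q_A(\boldsymbol U_N)\|-\|\mathfrak q_A(\boldsymbol L)\|\Bigr|\to0
\]
in probability. Each $\mathfrak q_A$ is a matrix-valued $^*$-polynomial of degree at most two in the $s=k^2m^2$ variables $Z_\xi,Z_\xi^{-1}=Z_\xi^*$, with coefficients in $M_{r_0}$. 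Since $r_0$ does not depend on $N$, \Cref{thm:admissibility-BC-specialized} applies to each fixed $A$ and gives pointwise convergence in probability. The only remaining work is to upgrade pointwise convergence to uniform convergence over the compact unit ball $\mathsf B:=\{A=A^*\in M_k:\|A\|\le1\}$.

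\textbf{Lipschitz bound.} The map $A\mapsto\mathfrak q_A$ is affine. Indeed, $\mathfrak q_A-\mathfrak q_B=\mathfrak p_{A-B}$, and by \eqref{eq:admissibility-positive-shift-compression} (without the shift) this equals a compression of $\widehat{A-B}\otimes I$ by a unitary. Hence for any unitary tuple $\boldsymbol W$,
\[
\|\mathfrak q_A(\boldsymbol W)-\mathfrak q_B(\boldsymbol W)\|\le\|A-B\|.
\]
This holds deterministically, uniformly in $N$, and equally for $\boldsymbol W=\boldsymbol L$. Both functions $A\mapsto\|\mathfrak q_A(\boldsymbol U_N)\|$ and $A\mapsto\|\mathfrak q_A(\boldsymbol L)\|$ are therefore $1$-Lipschitz in operator norm.

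\textbf{Net argument.} Fix $\varepsilon>0$ and choose a finite $\varepsilon/3$-net $\{A_1,\dots,A_M\}\subset\mathsf B$, which exists because $\mathsf B$ is compact in a finite-dimensional space. Applying \Cref{thm:admissibility-BC-specialized} to the finitely many polynomials $\mathfrak q_{A_j}$, a union bound gives
\[
\Pr\Bigl\{\max_j\bigl|\|\mathfrak q_{A_j}(\boldsymbol U_N)\|-\|\mathfrak q_{A_j}(\boldsymbol L)\|\bigr|>\varepsilon/3\Bigr\}\to0.
\]
On the complementary event, the Lipschitz bound yields a uniform error of at most $\varepsilon$ over all of $\mathsf B$. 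This proves \eqref{eq:admissibility-random-uniform-convergence}. Since $\operatorname{err}_N$ is bounded by this two-sided supremum, \eqref{eq:admissibility-random-probability} follows.

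\textbf{Main obstacle.} The step needing care is the exact applicability of \Cref{thm:admissibility-BC-specialized}. The identity \eqref{eq:admissibility-qA-edge-identities} for $\boldsymbol U_N$ relies on \Cref{lem:admissibility-polynomial-representation}, in which the standard representations $U_\sigma$ on $(\mb C^N)^\circ$ appear, matching the finite construction on $\mc K_N$. The free side must be identified with $h_{\mc K_{k,t}}$, which rests on \Cref{thm:explicit-matrix-free-product} and \Cref{prop:free-model-identification}. The twist by $R_\xi$ is harmless because $\mathfrak q_A$ absorbs $R_\xi$ into the coefficients and is then evaluated at $\boldsymbol L$ directly. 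One should also confirm that Bordenave--Collins applies for $s\ge2$, which holds since $s=k^2m^2\ge16$.
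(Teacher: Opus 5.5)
Your proposal is correct and follows the paper's proof essentially step for step. You get pointwise convergence by applying Bordenave--Collins to the shifted polynomial $\mathfrak q_A$ via \eqref{eq:admissibility-qA-edge-identities}, then make it uniform with a finite net on the self-adjoint unit ball and the Lipschitz property of both support functions. The only difference is that you derive the $1$-Lipschitz bound from the unitary-compression form \eqref{eq:admissibility-positive-shift-compression}, whereas the paper simply asserts that both support functions are $1$-Lipschitz.
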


\begin{proof}
Fix $A=A^*$ with $\|A\|\leq1$.  Applying
\Cref{thm:admissibility-BC-specialized} to $\mathfrak q_A$ and using
\eqref{eq:admissibility-qA-edge-identities} gives convergence of the upper
spectral edge.  For uniformity, cover the self-adjoint operator-norm unit ball
of $M_k$ by a finite $\delta$-net.  Each of the two support functions in
\eqref{eq:admissibility-random-uniform-convergence} is $1$-Lipschitz, so their
difference is $2$-Lipschitz.  Convergence on the finite net, followed by
$\delta\downarrow0$, proves the result.
\end{proof}

\begin{corollary}[Random permutation counterexamples]
\label{cor:admissibility-random-nonadditivity}
If $\Delta_{k,t}>0$, then independent uniform permutations satisfy
\begin{equation}
  \Pr\!\left\{
    \MOE\!\left(
      \Phi_N^{\boldsymbol\sigma}
      \otimes
      \overline{\Phi_N^{\boldsymbol\sigma}}
    \right)
    <2\MOE(\Phi_N^{\boldsymbol\sigma})
  \right\}
  \longrightarrow1.
\end{equation}
\end{corollary}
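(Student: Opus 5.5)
The plan is to combine \Cref{thm:admissibility-random} with \Cref{prop:admissibility-entropy-lower-bound}. The whole argument reduces to choosing a fixed tolerance $\varepsilon>0$ that is small enough for the entropy gap condition.

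\emph{Choice of tolerance.} Since $\Delta_{k,t}>0$ by assumption, we first choose $\varepsilon>0$ with $\varepsilon/2\leq1-1/k$ and
\[
  2\Omega_k(\varepsilon/2)<\Delta_{k,t}.
\]
This is possible because $\Omega_k(T)=h_2(T)+T\log(k-1)$ is continuous on $[0,1-1/k]$ and satisfies $\Omega_k(0)=0$. The choice depends only on $k$ and $t$, not on $N$.

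\emph{Deterministic implication.} With this $\varepsilon$ fixed, \Cref{prop:admissibility-entropy-lower-bound} applies to every $N$ and every permutation tuple. If $\operatorname{err}_N(\boldsymbol\sigma_N)\leq\varepsilon$, then
\[
  \MOE\!\left(\Phi_N^{\boldsymbol\sigma}\otimes\overline{\Phi_N^{\boldsymbol\sigma}}\right)<2\MOE(\Phi_N^{\boldsymbol\sigma}).
\]
Hence, as events,
\[
  \{\boldsymbol\sigma_N\text{ is }\varepsilon\text{-admissible}\}
  \subseteq
  \left\{\MOE\!\left(\Phi_N^{\boldsymbol\sigma}\otimes\overline{\Phi_N^{\boldsymbol\sigma}}\right)<2\MOE(\Phi_N^{\boldsymbol\sigma})\right\}.
\]

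\emph{Probability bound.} By \eqref{eq:admissibility-random-probability} in \Cref{thm:admissibility-random}, the probability of the left-hand event tends to $1$. Monotonicity of probability then gives the corollary.

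The only point needing care is that admissibility is measured one-sidedly, whereas \Cref{thm:admissibility-random} controls the two-sided supremum. The two-sided bound dominates $\operatorname{err}_N$, so this causes no difficulty, and no step here is a genuine obstacle.
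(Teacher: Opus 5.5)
Your proposal is correct and follows the paper's proof exactly: fix $\varepsilon>0$ with $2\Omega_k(\varepsilon/2)<\Delta_{k,t}$, then combine the random admissibility theorem with the proposition that admissibility implies nonadditivity. Your explicit requirement $\varepsilon/2\leq1-1/k$ supplies a hypothesis of that proposition which the paper's one-line proof leaves implicit.
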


\begin{proof}
Choose $\varepsilon>0$ with
$2\Omega_k(\varepsilon/2)<\Delta_{k,t}$ and combine
\Cref{thm:admissibility-random,prop:admissibility-entropy-lower-bound}.
\end{proof}

For the deterministic construction we use the following specialization of O'Donnell--Wu.

\begin{theorem}[O'Donnell--Wu uniform polynomial lifts]
\label{thm:admissibility-OW}
Fix integers $s,r,D\geq1$ and constants $R,\delta>0$.  There is a
deterministic algorithm with the following property.  On input $M$, it runs
in time polynomial in $M$ and outputs an integer $N$ and permutations
$\sigma_1,\ldots,\sigma_s\in S_N$ such that
\begin{equation}\label{eq:admissibility-OW-size}
  M\leq N\leq M+o_{s,r,D,R,\delta}(M)
  \qquad(M\to\infty),
\end{equation}
and, simultaneously for every $1\leq r'\leq r$ and every self-adjoint
reduced polynomial
\begin{equation}
  \mathfrak p=\sum_{|w|\leq D}B_w\otimes Z_w,
  \qquad
  B_w\in M_{r'},
  \qquad
  \max_w\|B_w\|_F\leq R,
\end{equation}
one has
\begin{equation}\label{eq:admissibility-OW-spectrum}
  d_{\mathrm H}\!\left(
    \spec\mathfrak p(\boldsymbol U),
    \spec\mathfrak p(\boldsymbol L)
  \right)
  \leq\delta,
\end{equation}
where
$U_j=P_{\sigma_j}|_{(\mb C^N)^\circ}$ and
$\boldsymbol L$ is the left regular generating tuple of $\mb F_s$.
\end{theorem}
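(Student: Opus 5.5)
The plan is to obtain the statement as a repackaging of the main theorem of O'Donnell--Wu \cite{ODonnellWu2020}. Three reductions are needed: from arbitrary degree-$D$ reduced polynomials to the linear (degree-one) matrix polynomials their theorem handles, from one-sided spectral bounds to Hausdorff distance, and from a single polynomial to the whole compact coefficient class. In the form I will use, their result is the following. For fixed $s$ and fixed coefficient dimension, a deterministic $\mathrm{poly}(M)$-time algorithm outputs $N\in[M,M+o(M)]$ and $\sigma_1,\dots,\sigma_s\in S_N$ such that, for every self-adjoint linear polynomial $B_0\otimes 1+\sum_j(B_j\otimes Z_j+B_j^*\otimes Z_j^{-1})$ with coefficients of bounded norm, one has $\spec$ at $\boldsymbol U$ contained in the $\delta'$-neighbourhood of $\spec$ at $\boldsymbol L$. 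Their derandomization certifies a single trace-moment (nonbacktracking-walk) statistic of the lift. That statistic controls all such linear polynomials at once, so one tuple serves the whole class; the size bound \eqref{eq:admissibility-OW-size} and the polynomial running time are then inherited directly.

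\emph{Step 1 (linearization).} Given $\mathfrak p=\sum_{|w|\le D}B_w\otimes Z_w$ with $B_w\in M_{r'}$, $r'\le r$, I would apply the standard self-adjoint linearization trick, as used by Bordenave--Collins and in \cite{ODonnellWu2020}. This produces a linear pencil $\mathfrak l$ with coefficients in $M_{r''}$, where $r''$ depends only on $(s,r,D)$ and the coefficient norms are bounded in terms of $R$. It has the following property: for every unitary tuple $\boldsymbol W$ and every $\lambda$ at distance at least $\delta$ from $\spec\mathfrak p(\boldsymbol L)$, the quantity $\|(\lambda-\mathfrak p(\boldsymbol W))^{-1}\|$ is controlled by $\|(\Lambda_\lambda-\mathfrak l(\boldsymbol W))^{-1}\|$ with explicit constants. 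Using this I would convert the linear-case inclusion into $\spec\mathfrak p(\boldsymbol U)\subseteq\spec\mathfrak p(\boldsymbol L)+[-\delta,\delta]$.

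\emph{Step 2 (reverse inclusion).} For $\spec\mathfrak p(\boldsymbol L)\subseteq\spec\mathfrak p(\boldsymbol U)+[-\delta,\delta]$ I would use convergence of moments rather than norms. The certified lifts have the property that $\operatorname{tr}_N$ of any nontrivial word of length at most $K$ is $o(1)$, since they have few short cycles. Hence the empirical spectral measure of $\mathfrak p(\boldsymbol U)$ converges weakly to the spectral distribution of $\mathfrak p(\boldsymbol L)$ with respect to $\operatorname{tr}\otimes\tau_{\mb F_s}$, uniformly over the bounded class. Because $\tau$ is faithful on $\mc L(\mb F_s)$, that distribution has full support $\spec\mathfrak p(\boldsymbol L)$. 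Weak convergence then forces every point of $\spec\mathfrak p(\boldsymbol L)$ to be $\delta$-close to $\spec\mathfrak p(\boldsymbol U)$ once $M$ is large. If a step of the proof needs the word-trace control in uniform form, I would take $M$ large enough to absorb the finitely many words of length at most $K$.

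\emph{Step 3 (uniformity).} The coefficient set $\{(B_w)_{|w|\le D}:\|B_w\|_F\le R\}$ is compact. Moreover $\|\mathfrak p(\boldsymbol W)-\mathfrak p'(\boldsymbol W)\|\le\sum_w\|B_w-B'_w\|_F$ for any unitary tuple, and Hausdorff distance of spectra of self-adjoint operators is $1$-Lipschitz in operator norm. I would therefore fix a finite $\delta/3$-net of the coefficient set and run Steps 1--2 with accuracy $\delta/3$ on the net elements; this still requires a single output tuple, which the uniform certification in \cite{ODonnellWu2020} provides. The triangle inequality then yields \eqref{eq:admissibility-OW-spectrum} for all $\mathfrak p$ in the class.

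The main obstacle is Step 1: the linearization must preserve the constants uniformly, so that a $\delta'$-accurate linear statement yields a $\delta$-accurate statement for degree-$D$ polynomials with $\delta'$ depending only on $(s,r,D,R,\delta)$. I would first try the resolvent comparison (Schur-complement) argument of Bordenave--Collins, with $\lambda$ restricted to a compact interval since $\|\mathfrak p\|\le R\cdot\#\{|w|\le D\}$.
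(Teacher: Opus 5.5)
\textbf{How your route differs from the paper's.} The paper gives no argument for this statement. It reads it off as the specialization $d=0$, $e=s$ of \cite[Theorem~10.13]{ODonnellWu2020}. That theorem is already stated for self-adjoint polynomials of degree $\le D$ with Frobenius-bounded coefficients in $M_{r'}$, $r'\le r$. It gives a \emph{two-sided} Hausdorff bound on the nontrivial spectrum, uniformly over that class. You instead start from a weaker form (linear pencils, one-sided inclusion) and rebuild the rest.

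Step 1 is the standard Schur-complement linearization and is fine, provided you use the resolvent comparison in both directions. At $\boldsymbol L$ you need it to bound $\operatorname{dist}(0,\spec\mathfrak l_\lambda(\boldsymbol L))$ from below by a constant depending only on $(s,r,D,R,\delta)$. At $\boldsymbol U$ you need it to pull invertibility back to $\lambda-\mathfrak p(\boldsymbol U)$. In Step 3, the single output tuple comes from the uniformity of the black box, not from the net.

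\textbf{The gap is in Step 2.} Nothing in the black box you adopted controls $\tr w(\boldsymbol U)$. The claim that the certified lifts ``have few short cycles'' is asserted, not derived. Without it, the inclusion $\spec\mathfrak p(\boldsymbol L)\subseteq\spec\mathfrak p(\boldsymbol U)+[-\delta,\delta]$ is unsupported.

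This is not a formality. For $s=1$, the choice $\sigma_1=\mathrm{id}$ satisfies every one-sided bound, because evaluating at $U_1=I$ gives $\spec\mathfrak p(I)\subseteq\spec\mathfrak p(L_1)$. Yet for $\mathfrak p=Z+Z^{-1}$ the Hausdorff distance is $4$.

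For $s\ge2$ you can close the gap without opening the construction:
\begin{enumerate}
\item Suppose $\|\mathfrak p(\boldsymbol U)\|\le\|\mathfrak p(\boldsymbol L)\|+\delta''$ for all self-adjoint scalar polynomials of degree $\le D'$ with bounded coefficients.
\item Along any sequence with $D'\to\infty$ and $\delta''\to0$, every limit point of $P\mapsto\tr P(\boldsymbol U)$ is a trace bounded by the reduced norm. It therefore equals $\tau$, by Powers' uniqueness of the trace on $C^*_r(\mathbb F_s)$.
\item By compactness, for each $(K,\eta)$ there are fixed $(D',\delta'')$ forcing $|\tr w(\boldsymbol U)|\le\eta$ for all nontrivial $|w|\le K$.
\end{enumerate}
Running the algorithm on this larger, still fixed, class then makes your faithful-support argument rigorous. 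Uniformity over $\mathfrak p$ comes from compactness of $\{(\mathfrak p,\lambda):\lambda\in\spec\mathfrak p(\boldsymbol L)\}$.

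Simpler still is to cite the two-sided statement of O'Donnell--Wu directly, as the paper does. That makes Steps 1--3 unnecessary.
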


This is the specialization $d=0$, $e=s$ of
\cite[Theorem~10.13]{ODonnellWu2020}.

\begin{theorem}[Deterministic permutation admissibility]
\label{thm:admissibility-deterministic}
Apply \Cref{thm:admissibility-OW} with
\begin{equation}\label{eq:admissibility-OW-parameters}
  s=k^2m^2,
  \qquad
  r=r_0=k^2qm\ell,
  \qquad
  D=2,
  \qquad
  R=R_0=k^2m^2\sqrt\ell.
\end{equation}
For any tolerance $\delta>0$, the resulting tuple, indexed by $\Theta$ in a
fixed order, satisfies
\begin{equation}\label{eq:admissibility-OW-uniform-support}
  \sup_{\substack{A=A^*\in M_k\\ \|A\|\leq1}}
  \left|
    \lambda_{\max}\!\left(\Gamma_N^{\boldsymbol\sigma}(A)\right)
    -h_{\mc K_{k,t}}(A)
  \right|
  \leq\delta.
\end{equation}
In particular, $\boldsymbol\sigma_N$ is $\delta$-admissible.
\end{theorem}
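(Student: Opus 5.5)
The plan is to reduce the uniform support-function estimate to the Hausdorff spectral guarantee of \Cref{thm:admissibility-OW}, applied to the family of reduced polynomials $\mathfrak p_A$ from \Cref{lem:admissibility-polynomial-representation}. Fix $A=A^*\in M_k$ with $\|A\|\leq1$.

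First, I check that $\mathfrak p_A$ lies in the class covered by \Cref{thm:admissibility-OW} with the parameters \eqref{eq:admissibility-OW-parameters}:
\begin{itemize}
\item It is a reduced polynomial $\sum_{|w|\leq2}A_w(A)\otimes Z_w$, so $D=2$.
\item Its coefficients lie in $M_{r_0}$, so $r'=r_0\leq r$.
\item It is self-adjoint. This holds because $C_{\xi,\eta}(A)^*=C_{\eta,\xi}(A)$ for self-adjoint $A$, and $Z_\xi^{-1}Z_\eta$ and $Z_\eta^{-1}Z_\xi$ are adjoint words.
\item By \eqref{eq:admissibility-coefficient-bound}, $\max_w\|A_w(A)\|_F\leq R_0$.
\end{itemize}
Thus the single output tuple $\boldsymbol\sigma_N$ of the algorithm satisfies \eqref{eq:admissibility-OW-spectrum} simultaneously for every such $A$:
\[
 d_{\mathrm H}\bigl(\spec\mathfrak p_A(\boldsymbol U_N),\spec\mathfrak p_A(\boldsymbol L)\bigr)\leq\delta.
\]
The key point is that O'Donnell--Wu is uniform over the whole bounded coefficient class. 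This means no net argument or limit in $N$ is needed, unlike in \Cref{thm:admissibility-random}.

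Second, I pass from Hausdorff distance of spectra to the top edges. For compact nonempty subsets $X,Y\subseteq\mb R$ with $d_{\mathrm H}(X,Y)\leq\delta$, every point of $X$ lies within $\delta$ of $Y$ and vice versa. Hence $\max X\leq\max Y+\delta$ and $\max Y\leq\max X+\delta$, so $|\max X-\max Y|\leq\delta$. Combining this with \eqref{eq:admissibility-finite-edge} and \eqref{eq:admissibility-free-edge} gives
\[
 \bigl|\lambda_{\max}(\Gamma_N^{\boldsymbol\sigma}(A))-h_{\mc K_{k,t}}(A)\bigr|\leq\delta
\]
for every admissible $A$. Taking the supremum yields \eqref{eq:admissibility-OW-uniform-support}, and dropping the absolute value gives $\operatorname{err}_N(\boldsymbol\sigma_N)\leq\delta$.

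The only point needing care is the bookkeeping identification between the $\Theta$-indexed tuple and the tuple $(\sigma_1,\ldots,\sigma_s)$ produced by the algorithm, via a fixed bijection $\Theta\cong[s]$. This relabelling must carry the free generators $z_\xi$ to the generators of $\mb F_s$, so that $\mathfrak p_A(\boldsymbol L)$ is literally the polynomial evaluated on the left regular tuple of $\mb F_s$. It must also preserve the reduced-word expansion and hence the Frobenius bound. Since the bijection is a relabelling of generators, both the reduced structure and the coefficients are unchanged.

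I expect no genuine obstacle beyond confirming that self-adjointness and the coefficient bound hold uniformly in $A$. Those were already established in \Cref{lem:admissibility-polynomial-representation}, so the argument is essentially a direct specialization.
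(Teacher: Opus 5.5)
Your proposal is correct and follows essentially the same route as the paper. Both arguments place the whole family $\{\mathfrak p_A : A=A^*,\ \|A\|\le 1\}$ in the bounded O'Donnell--Wu class via \Cref{lem:admissibility-polynomial-representation}, obtain the Hausdorff spectral bound simultaneously from a single lift, and then use that Hausdorff distance controls the top spectral edge together with the two edge identities. Your explicit checks of self-adjointness ($C_{\xi,\eta}(A)^*=C_{\eta,\xi}(A)$) and of the relabelling $\Theta\cong[s]$ only spell out what the paper leaves implicit.
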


\begin{proof}
By \Cref{lem:admissibility-polynomial-representation}, the family
$\{\mathfrak p_A:A=A^*,\ \|A\|\leq1\}$ lies in the bounded polynomial class
specified by \eqref{eq:admissibility-OW-parameters}.  The same permutation
lift therefore gives
\[
  d_{\mathrm H}\!\left(
    \spec\mathfrak p_A(\boldsymbol U_N),
    \spec\mathfrak p_A(\boldsymbol L)
  \right)
  \leq\delta
\]
simultaneously for every such $A$.  Hausdorff spectral distance controls the
upper endpoint, and
\eqref{eq:admissibility-finite-edge}--
\eqref{eq:admissibility-free-edge} give
\eqref{eq:admissibility-OW-uniform-support}.
\end{proof}

\begin{corollary}[Polynomial-time deterministic counterexamples]
\label{cor:admissibility-deterministic-nonadditivity}
Assume $\Delta_{k,t}>0$ and choose $\delta_0>0$ such that
\[
  2\Omega_k(\delta_0/2)<\Delta_{k,t}.
\]
For every sufficiently large input $M$, the O'Donnell--Wu algorithm, run with
\eqref{eq:admissibility-OW-parameters} and tolerance $\delta_0$, produces in
time polynomial in $M$ a finite-dimensional channel satisfying
\begin{equation}
  \MOE\!\left(
    \Phi_N^{\boldsymbol\sigma}
    \otimes
    \overline{\Phi_N^{\boldsymbol\sigma}}
  \right)
  <2\MOE(\Phi_N^{\boldsymbol\sigma}).
\end{equation}
\end{corollary}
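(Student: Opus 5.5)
The corollary follows by combining \Cref{thm:admissibility-deterministic} with \Cref{prop:admissibility-entropy-lower-bound}. The argument is essentially bookkeeping on parameters and tolerances.

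\textbf{Step 1: fix the parameters once.} Since $\Delta_{k,t}>0$ and $\Omega_k(T)\to0$ as $T\downarrow0$ (both $h_2(T)$ and $T\log(k-1)$ vanish at $0$), a $\delta_0>0$ with $2\Omega_k(\delta_0/2)<\Delta_{k,t}$ exists. Shrinking $\delta_0$ if necessary, we may also assume $\delta_0/2\leq1-1/k$, as required by the proposition; shrinking preserves the strict inequality because $\Omega_k$ is increasing near $0$. With $k,m,q$ fixed, the quantities
\[
\ell=(k^2-1)(m^2-1)+1,\qquad s=k^2m^2,\qquad r_0=k^2qm\ell,\qquad D=2,\qquad R_0=s\sqrt\ell
\]
are constants. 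Hence the O'Donnell--Wu algorithm of \Cref{thm:admissibility-OW}, run with these parameters and $\delta=\delta_0$, is a single fixed algorithm whose running time is polynomial in $M$.

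\textbf{Step 2: admissibility of the output.} On input $M$ the algorithm returns $N\geq M$ and $s$ permutations, which we index by $\Theta$ in the fixed order. \Cref{thm:admissibility-deterministic} then gives $\operatorname{err}_N(\boldsymbol\sigma_N)\leq\delta_0$. The uniform Frobenius bound \eqref{eq:admissibility-coefficient-bound} from \Cref{lem:admissibility-polynomial-representation} guarantees that every $\mathfrak p_A$ with $\|A\|\leq1$ lies in the admissible polynomial class.

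\textbf{Step 3: nonadditivity.} Applying \Cref{prop:admissibility-entropy-lower-bound} with $\varepsilon=\delta_0$ yields the strict inequality. The two-copy bound is exact for every tuple by \Cref{thm:exact-isotropic-Bell-general-R}, so no further dependence on $N$ enters.

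\textbf{Remaining points and main obstacle.} "Sufficiently large $M$" covers whatever threshold O'Donnell--Wu need for their guarantee to hold. One must also check that building the channel itself is polynomial. It is: the matrices $D_N$, $P_N$ and the Kraus or Stinespring data have dimension $O(N)$ with fixed constants, and $N\leq M+o(M)$, so writing the channel down costs polynomially many operations. The only genuine subtlety is matching the specialization ($d=0$, $e=s$, the reduced-word normalization, and degree $D=2$) to \cite[Theorem~10.13]{ODonnellWu2020}; I take this as already settled in \Cref{thm:admissibility-OW}.
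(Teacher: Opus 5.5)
Your proposal is correct and follows the paper's own argument: the paper proves this corollary in one line, by combining the deterministic admissibility theorem (which gives $\operatorname{err}_N(\boldsymbol\sigma_N)\leq\delta_0$ for the O'Donnell--Wu output) with the admissibility-implies-nonadditivity proposition at $\varepsilon=\delta_0$. Your additional bookkeeping is sound and makes the argument more complete. You ensure the side condition $\delta_0/2\leq 1-1/k$ of that proposition, which the paper does not mention; simply taking $\delta_0$ small via $\Omega_k(T)\to0$ suffices. You also note that writing down the channel, whose dimension is $O(N)$, is itself polynomial.
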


\begin{proof}
Combine \Cref{thm:admissibility-deterministic} with
\Cref{prop:admissibility-entropy-lower-bound} at tolerance $\delta_0$.
\end{proof}

\begin{remark}[Asymptotic construction versus a numerical threshold]
\label{rem:admissibility-OW-no-numerical-threshold}
For fixed $s,r,D,R,\delta$, O'Donnell--Wu gives an algorithm whose running
time is polynomial in the input size $M$.  The asymptotic relation
$N=M+o(M)$, however, does not give an explicit threshold as a function of
these fixed parameters.  The theorem therefore proves efficient asymptotic
constructibility, but it does not produce the numerical value of $N$ derived
below.  That value comes from a separate tail estimate for independent
uniform permutations and serves only as an explicit existence bound.
\end{remark}

\subsection{A quantitative existence bound}
\label{subsec:admissibility-quantitative}

To put a concrete scale on the construction, we next derive a sufficient
value of $N$.  The required one-sided estimate follows from the master
inequalities of Chen--Garza-Vargas--Tropp--van Handel
\cite[Theorem~6.1 and Corollary~9.7]{ChenGarzaVargasTroppVanHandel2026}.

\begin{lemma}[Upper tail for a positive permutation polynomial]
\label{lem:admissibility-explicit-tail}
Let $U_{1,N},\ldots,U_{s,N}$ be the standard representations of independent
uniform permutations of $[N]$.  Let
$\mathfrak P\in M_r\odot\mb C[\mb F_s]$ be self-adjoint of degree at most
$D$, and assume that
$\mathfrak P(\boldsymbol W)\geq0$ for every unitary tuple
$\boldsymbol W$.  Put
\begin{equation}
  K:=\|\mathfrak P\|_{M_r\otimes C^*(\mb F_s)},
  \qquad
  \rho:=\|\mathfrak P(\boldsymbol L)\|.
\end{equation}
Then, for every $\beta>0$,
\begin{equation}\label{eq:admissibility-explicit-tail}
  \Pr\!\left\{
    \|\mathfrak P(\boldsymbol U_N)\|>\rho+\beta
  \right\}
  \leq
  \frac{\pi r}{3N}
  \bigl[4D(1+\log s)\bigr]^8
  \left(\frac{150K}{\beta}\right)^9.
\end{equation}
\end{lemma}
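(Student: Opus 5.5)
The plan is to derive the tail bound from the master inequality of Chen--Garza-Vargas--Tropp--van Handel in its permutation form. The starting point is \cite[Theorem~6.1]{ChenGarzaVargasTroppVanHandel2026}. It states that for any self-adjoint $\mathfrak P\in M_r\odot\mb C[\mb F_s]$ of degree at most $D$, and any real polynomial $h$ of degree $q$, the expected normalized trace $\mb E\,\tr\, h(\mathfrak P(\boldsymbol U_N))$ equals $\tau\otimes\tr\,h(\mathfrak P(\boldsymbol L))$ up to an error of order $\frac{(\text{const}\cdot qD(1+\log s))^{c}}{N}\|h\|_{C[-K,K]}$. The key structural input is that the expectation is a rational function of $1/N$ of controlled degree, and that the Markov-type inequalities control its derivatives.

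Positivity and the assumption $\mathfrak P(\boldsymbol W)\ge0$ for all unitary tuples mean that every spectrum involved lies in $[0,K]$. Here $K$ is the universal norm, and it dominates both $\|\mathfrak P(\boldsymbol U_N)\|$ and $\|\mathfrak P(\boldsymbol L)\|=\rho$. Corollary~9.7 of the same paper packages this into a statement of exactly the shape we need. For a smooth test function $\chi$ supported away from $[0,\rho]$, it gives $|\mb E\Tr\chi(\mathfrak P(\boldsymbol U_N))|\lesssim \frac{r}{N}\,[\,D(1+\log s)\,]^{8}\,\|\chi\|_{C^{9}}$, or the equivalent Chebyshev/Jackson-type formulation with a ninth-derivative norm.

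The steps are as follows. First, rescale to the interval $[-1,1]$ by the affine map $x\mapsto 2x/K-1$; this is harmless since the spectra lie in $[0,K]$. Second, choose a nonnegative bump function $\chi$ with $\chi=1$ on $[\rho+\beta,K]$ and $\chi=0$ on $[0,\rho+\beta/2]$ or a similar window. After rescaling, the transition width is about $\beta/K$, so $\|\chi\|_{C^9}\lesssim (c K/\beta)^9$. Third, combine Markov's inequality with the fact that $\Tr\chi(\mathfrak P(\boldsymbol L))$ vanishes: the free side has spectrum in $[0,\rho]$, so the limiting term is zero. Because an eigenvalue above $\rho+\beta$ contributes at least $1$ to $\Tr\chi$, we get
\[
\Pr\{\|\mathfrak P(\boldsymbol U_N)\|>\rho+\beta\}\le \mb E\Tr\chi(\mathfrak P(\boldsymbol U_N)).
\]
Fourth, insert the Corollary~9.7 bound to obtain the factor $\frac{r}{N}[4D(1+\log s)]^8(150K/\beta)^9$ together with the constant $\pi/3$.

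The main obstacle is bookkeeping of the explicit constants. The risk is matching the precise normalization in Corollary~9.7: whether it uses the trace or the normalized trace on $M_r\otimes M_{N-1}$, how it defines the degree of a polynomial in $\boldsymbol U,\boldsymbol U^*$, and which smoothness norm of the test function appears. I would first try to take a $C^\infty$ bump whose ninth derivative is explicitly bounded, for instance a convolution of an indicator with a nine-fold box kernel. That choice gives derivative bounds of the form $(2\cdot 9/\text{width})^9$, and I expect such explicit constants to combine into $150^9$ and $\pi/3$. If the constants do not match exactly, the argument still yields a bound of the same form with a possibly different absolute constant, and only the numerics in \Cref{subsec:admissibility-numerical-certificate} would need adjusting.
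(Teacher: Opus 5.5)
Your outline has the right overall shape: a smooth cutoff above $\rho$, a vanishing free term, and Markov's inequality applied to $\Tr\chi$. The gap is that it omits the step that actually produces the factor $1/N$.

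\textbf{The missing first-order ingredient.} As you state the master inequality, it gives $\mb E\,\tr\,\chi(\mathfrak P(\boldsymbol U_N))=\nu_0(\chi)+O(1/N)$ for the \emph{normalized} trace. Killing only the limiting term $\nu_0(\chi)=\tau\otimes\tr\,\chi(\mathfrak P(\boldsymbol L))$ then gives, after multiplying by the dimension $rN$, only $\mb E\Tr\chi(\mathfrak P(\boldsymbol U_N))=O(r)$. Markov's inequality applied to that bound does not decay in $N$.

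The paper instead applies the master inequality with expansion order two, $\mb E\,\tr\,\chi=\nu_0(\chi)+\nu_1(\chi)/N+O(N^{-2})$; the exponent $8$ and the weights $j^8$ are those of the order-two remainder. It then uses the nontrivial fact from Chen--Garza-Vargas--Tropp--van Handel that the first-order correction $\nu_1$ is also supported in $[-\rho,\rho]$. Since $\chi$ vanishes on $(-\infty,\rho+\beta/2]$, both $\nu_0(\chi)$ and $\nu_1(\chi)$ are zero, so the normalized trace is $O(N^{-2})$. Only then does Markov give $r/N$.

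Your claim that ``Corollary~9.7 packages this'' is a guessed black box that silently contains exactly this missing ingredient. Positivity matters here as well. It makes $\rho=\|\mathfrak P(\boldsymbol L)\|$ the top of the spectrum, so the symmetric support interval $[-\rho,\rho]$ is adapted to the upper edge. It also turns the norm event into a top-eigenvalue event. For the same reason, your affine shift $x\mapsto 2x/K-1$ is harmless only if you transport the support of $\nu_1$ from the unshifted $\mathfrak P$. A symmetric support bound applied directly to $2\mathfrak P/K-1$ can degenerate to all of $[-1,1]$ when the bottom of the spectrum of $\mathfrak P(\boldsymbol L)$ is near $0$.

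\textbf{The explicit constants.} The lemma asserts the specific constants $\pi/3$ and $150$, which you do not derive and concede may differ. The remainder in the master inequality is controlled by $\sum_{j\geq1}j^8|a_j|$ for the Chebyshev expansion $\chi(x)=\sum_j a_jT_j(x/K)$, not by a $C^9$ norm in $x$. The paper therefore builds the cutoff in the angle variable:
\begin{itemize}
\item Set $a=\rho+\beta/2$, $b=\rho+\beta$ and $\delta=\arccos(a/K)-\arccos(b/K)\geq\beta/(2K)$.
\item On each of its two transition intervals, the even periodic function $\theta\mapsto\chi(K\cos\theta)$ is then an affine reparametrization of the smoothstep $S(u)=c\int_0^u t^9(1-t)^9\,dt$, $c=19!/(9!)^2$, which satisfies $\|S^{(10)}\|_{L^1(0,1)}<75^9$.
\item Ten integrations by parts give $|a_j|\leq\frac{2}{\pi}j^{-10}\delta^{-9}\|S^{(10)}\|_{L^1}$.
\item Summing with $\sum_j j^{-2}=\pi^2/6$ yields $\sum_j j^8|a_j|<\frac{\pi}{3}(150K/\beta)^9$.
\end{itemize}
A bump built by box convolutions in the $x$-variable would have to be pushed through $x=K\cos\theta$, and would not give these constants.

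Finally, you need to treat the trivial case $\rho+\beta\geq K$ separately, where the event is empty. You also need the normalization $\frac{1}{rN}\Tr$ of the cited paper, which is what turns the $N^{-2}$ bound into exactly $\pi r/(3N)$.
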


\begin{proof}
The event is empty if $\rho+\beta\geq K$.  Otherwise set
\[
  a:=\rho+\frac\beta2,
  \qquad
  b:=\rho+\beta,
  \qquad
  \delta:=\arccos(a/K)-\arccos(b/K).
\]
Then $\delta\geq\beta/(2K)$.  Define
\[
  S(u):=c\int_0^u t^9(1-t)^9\,dt,
  \qquad
  c:=\frac{19!}{(9!)^2},
  \qquad 0\leq u\leq1,
\]
and extend $S$ by $0$ and $1$ outside $[0,1]$.  Since
\[
  \|S^{(10)}\|_{L^1(0,1)}
  \leq
  c\sum_{j=0}^9
  \binom9j\frac{(9+j)!}{j!(j+1)}
  <75^9,
\]
the cutoff
\[
  \chi(x)
  :=
  \begin{cases}
    0,&x\leq a,\\
    \displaystyle
    S\!\left(
      \frac{\arccos(a/K)-\arccos(x/K)}{\delta}
    \right),&a<x<b,\\
    1,&x\geq b
  \end{cases}
\]
has cosine--Chebyshev coefficients $\chi(x)=\sum_{j\geq0}a_jT_j(x/K)$
satisfying
\begin{equation}\label{eq:admissibility-Chebyshev-sum}
  \sum_{j\geq1}j^8|a_j|
  <\frac\pi3\left(\frac{150K}{\beta}\right)^9.
\end{equation}
Indeed, $\theta\mapsto\chi(K\cos\theta)$ has two transition intervals of
length $\delta$, so its tenth derivative has $L^1$ norm at most
$2\delta^{-9}\|S^{(10)}\|_1$; ten integrations by parts then give
\eqref{eq:admissibility-Chebyshev-sum}.

Apply the master inequality with expansion order two to the Chebyshev
series above.  The limiting distribution and the first correction term are
supported in $[-\rho,\rho]$, whereas $\chi$ vanishes on a neighborhood of
that interval.  Thus both terms vanish.  Using the normalization of
\cite{ChenGarzaVargasTroppVanHandel2026}, in which the trace on the standard
representation is divided by $N$ rather than by $N-1$, we obtain
\[
  \mb E\,\frac{1}{rN}\Tr
  \chi\!\left(\mathfrak P(\boldsymbol U_N)\right)
  \leq
  \frac{\pi}{3N^2}
  \bigl[4D(1+\log s)\bigr]^8
  \left(\frac{150K}{\beta}\right)^9.
\]
If the event in \eqref{eq:admissibility-explicit-tail} occurs, positivity of
$\mathfrak P(\boldsymbol U_N)$ and the definition of $\chi$ imply
$\Tr\chi(\mathfrak P(\boldsymbol U_N))\geq1$.  Multiplying the last display by
$rN$ and applying Markov's inequality gives
\eqref{eq:admissibility-explicit-tail}.
\end{proof}

Applying the lemma to $\mathfrak P=\mathfrak q_A$, using
\eqref{eq:admissibility-qA-bounds}--
\eqref{eq:admissibility-qA-edge-identities}, gives for each fixed
$A=A^*$ with $\|A\|\leq1$,
\begin{equation}\label{eq:admissibility-fixed-A-tail}
\begin{aligned}
  &\Pr\!\left\{
    \lambda_{\max}\!\left(\Gamma_N^{\boldsymbol\sigma}(A)\right)
    >h_{\mc K_{k,t}}(A)+\beta
  \right\}\\
  &\qquad\leq
  \frac{\pi r_0}{3N}
  \bigl[8(1+\log s)\bigr]^8
  \left(\frac{300}{\beta}\right)^9.
\end{aligned}
\end{equation}

To make the estimate uniform without covering the scalar direction, let
\begin{equation}\label{eq:admissibility-quotient-space}
  \mathfrak X_k:=M_k^{\rm sa}/\mb R I_k,
  \qquad
  \|[A]\|_{\rm q}:=\inf_{c\in\mb R}\|A-cI_k\|,
  \qquad
  d_0:=\dim_{\mb R}\mathfrak X_k=k^2-1.
\end{equation}
The function
\[
  f_N([A])
  :=\lambda_{\max}\!\left(\Gamma_N^{\boldsymbol\sigma}(A)\right)
  -h_{\mc K_{k,t}}(A)
\]
is well defined and $2$-Lipschitz with respect to $\|\cdot\|_{\rm q}$.
For $d\geq3$, put
\begin{equation}\label{eq:admissibility-Rogers-density}
  \vartheta_d:=d\bigl(\log d+\log\log d+5\bigr).
\end{equation}
Rogers' translative covering-density estimate \cite{Rogers1957}, followed by
the standard averaging argument, gives the following finite-dimensional
consequence: for every $\eta>0$, the quotient unit ball admits an $\eta$-net
with at most
\begin{equation}\label{eq:admissibility-Rogers-cover}
  \vartheta_{d_0}\left(1+\frac1\eta\right)^{d_0}
\end{equation}
centers, each lying in the quotient ball of radius $1+\eta$.

\begin{proposition}[Explicit sufficient size for admissibility]
\label{prop:admissibility-explicit-N-condition}
Let $\eta,\beta>0$.  If
\begin{equation}\label{eq:admissibility-explicit-N-condition}
\begin{aligned}
  N>{}&
  \frac{\pi r_0}{3}\,
  \vartheta_{d_0}
  \left(1+\frac1\eta\right)^{d_0}
  \bigl[8(1+\log s)\bigr]^8
  \left(\frac{300(1+\eta)}{\beta}\right)^9,
\end{aligned}
\end{equation}
then there exists a tuple $\boldsymbol\sigma_N\in S_N^\Theta$ such that
\begin{equation}\label{eq:admissibility-explicit-error}
  \operatorname{err}_N(\boldsymbol\sigma_N)
  \leq2\eta+\beta.
\end{equation}
Equivalently,
\begin{equation}\label{eq:admissibility-explicit-directed-distance}
  d_{\to,T}\!\left(
    \mc O_N^{\boldsymbol\sigma},\mc K_{k,t}
  \right)
  \leq\eta+\frac\beta2.
\end{equation}
\end{proposition}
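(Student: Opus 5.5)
The plan is a net argument combined with a union bound over the tail estimate \eqref{eq:admissibility-fixed-A-tail}, using independent uniform permutations and showing the failure probability is below one. First we reduce to the quotient space. Both $\lambda_{\max}(\Gamma_N^{\boldsymbol\sigma}(A))$ and $h_{\mc K_{k,t}}(A)$ shift by $c$ under $A\mapsto A+cI_k$, because $\Gamma_N^{\boldsymbol\sigma}$ is unital and $\mc K_{k,t}\subseteq\Dens_k$. Hence $f_N$ is well defined on $\mathfrak X_k$. Each term is $1$-Lipschitz in operator norm, so $f_N$ is $2$-Lipschitz for $\|\cdot\|_{\rm q}$. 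Moreover, $\|A\|\le1$ implies $\|[A]\|_{\rm q}\le1$, so $\operatorname{err}_N(\boldsymbol\sigma_N)\le\sup_{\|[A]\|_{\rm q}\le1}f_N([A])$.

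Next, fix the Rogers $\eta$-net $\{[A_i]\}$ of the quotient unit ball from \eqref{eq:admissibility-Rogers-cover}. It has at most $\vartheta_{d_0}(1+1/\eta)^{d_0}$ centers, each with $\|[A_i]\|_{\rm q}\le1+\eta$. For each center we choose the representative minimizing the quotient norm, so $\|A_i\|\le1+\eta$. We then put $A_i'=A_i/(1+\eta)$, which satisfies $\|A_i'\|\le1$. Applying \eqref{eq:admissibility-fixed-A-tail} to $A_i'$ with deviation $\beta/(1+\eta)$ gives
\[
\Pr\{f_N([A_i'])>\beta/(1+\eta)\}\le \frac{\pi r_0}{3N}[8(1+\log s)]^8\Bigl(\frac{300(1+\eta)}{\beta}\Bigr)^9 .
\]
Both $\lambda_{\max}\circ\Gamma_N^{\boldsymbol\sigma}$ and $h_{\mc K_{k,t}}$ are positively homogeneous, so $f_N([A_i])=(1+\eta)f_N([A_i'])$. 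On the complement of the event above we therefore get $f_N([A_i])\le\beta$. A union bound over the net shows that, under \eqref{eq:admissibility-explicit-N-condition}, the total failure probability is strictly less than $1$. Some tuple $\boldsymbol\sigma_N$ therefore satisfies $f_N([A_i])\le\beta$ for every $i$.

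For this tuple, take an arbitrary $[A]$ with $\|[A]\|_{\rm q}\le1$ and a net point with $\|[A]-[A_i]\|_{\rm q}\le\eta$. The $2$-Lipschitz bound gives $f_N([A])\le\beta+2\eta$, which is \eqref{eq:admissibility-explicit-error}. The equivalent statement \eqref{eq:admissibility-explicit-directed-distance} follows from \eqref{eq:admissibility-error-distance-identity}.

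The main points to get right are the following.
\begin{itemize}
\item The tail lemma requires $\|A\|\le1$, not $\|[A]\|_{\rm q}\le1$. The rescaling by $1+\eta$ is what introduces the factor $(1+\eta)^9$, and homogeneity of $f_N$ is what makes it valid.
\item The Rogers covering count must be stated for the quotient norm's unit ball, which is a symmetric convex body in dimension $d_0=k^2-1$.
\end{itemize}
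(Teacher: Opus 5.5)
Your proposal is correct and matches the paper's proof step for step. Both arguments take an $\eta$-net of the quotient unit ball with norm-minimizing representatives, and both apply the fixed-$A$ tail bound to $A_j/(1+\eta)$ at tolerance $\beta/(1+\eta)$ using positive homogeneity. They then close with a union bound under the size condition and the $2$-Lipschitz property of $f_N$; you also spell out why $f_N$ is well defined and Lipschitz on the quotient, which the paper only asserts.
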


\begin{proof}
Choose an $\eta$-net as above.  Because the quotient is
finite-dimensional, each center has a representative $A_j$ satisfying
$\|A_j\|=\|[A_j]\|_{\rm q}\leq1+\eta$.  Apply
\eqref{eq:admissibility-fixed-A-tail} to $A_j/(1+\eta)$ with tolerance
$\beta/(1+\eta)$.  By positive homogeneity, this is exactly the event
$f_N([A_j])>\beta$.  Under
\eqref{eq:admissibility-explicit-N-condition}, the union bound over all
centers is strictly less than one.  Hence there is a permutation tuple for
which $f_N([A_j])\leq\beta$ at every center.  Since $f_N$ is $2$-Lipschitz,
\[
  \sup_{\|[A]\|_{\rm q}\leq1}f_N([A])\leq2\eta+\beta.
\]
This is \eqref{eq:admissibility-explicit-error}; the directed-distance form
then follows from \eqref{eq:admissibility-error-distance-identity}.
\end{proof}

\subsection{Numerical scale of the finite-dimensional construction}
\label{subsec:admissibility-numerical-certificate}

The Audenaert--Fannes modulus in
\Cref{prop:admissibility-entropy-lower-bound} is adequate for the asymptotic
argument, but it is too coarse for a useful numerical bound.  We therefore
use the exact minimum entropy in a trace-distance neighborhood of
$\mc K_{k,t}$.

Let $\lambda(\sigma)=(\lambda_1,\ldots,\lambda_k)$ be the eigenvalue vector of
$\sigma$, arranged in nonincreasing order.  Its $T$-steepening, denoted
$\lambda^{\sharp,T}(\sigma)$, is obtained by transferring
$\min\{T,1-\lambda_1\}$ units of mass to the first coordinate and removing
the same mass successively from the smallest coordinates.  In particular,
if $T\geq1-\lambda_1$, the steepening is the pure spectrum
$(1,0,\ldots,0)$.  Hanson and Datta show that this vector majorizes every
spectrum in the trace-distance ball of radius $T$ around $\sigma$; it
therefore minimizes the von Neumann entropy, and more generally every
Schur-concave entropy, on that ball \cite{HansonDatta2018}.

Define
\begin{equation}\label{eq:admissibility-local-entropy-erosion}
  \mc E_{k,t}(T)
  :=\min_{\sigma\in\mc K_{k,t}}
  H\!\left(\lambda^{\sharp,T}(\sigma)\right).
\end{equation}
It follows directly that
\begin{equation}\label{eq:admissibility-local-entropy-bound}
  d_{\to,T}\!\left(
    \mc O_N^{\boldsymbol\sigma},\mc K_{k,t}
  \right)\leq T
  \quad\Longrightarrow\quad
  \MOE(\Phi_N^{\boldsymbol\sigma})\geq\mc E_{k,t}(T).
\end{equation}

We specialize to
\begin{equation}\label{eq:admissibility-numerical-parameters}
  k=195,
  \qquad
  m=39,
  \qquad
  q=5,
  \qquad
  t=\frac5{39},
  \qquad
  \ell=197.
\end{equation}
For these parameters, the auxiliary family can be chosen more economically
than the universal clock family used earlier.  Since $197\equiv5\pmod8$, the
element $2$ is not a quadratic residue modulo $197$, and therefore
\[
  \mb F_{197^2}=\mb F_{197}[\vartheta]/(\vartheta^2-2).
\]
For $u=x+y\vartheta$, write
$W(u):=X_{197}^xZ_{197}^y\in M_{197}$.  Identify each class in
$\mb Z_{195}$ with its representative in $\{0,\ldots,194\}\subset\mb F_{197}$,
and similarly identify $\mb Z_{39}$ with
$\{0,\ldots,38\}\subset\mb F_{197}$.  Define
\[
  A_\alpha:=a+b\vartheta
  \quad\text{for }\alpha=(a,b)\in\mb Z_{195}^2,
  \qquad
  B_\beta:=c+d\vartheta
  \quad\text{for }\beta=(c,d)\in\mb Z_{39}^2,
\]
and set
\begin{equation}\label{eq:admissibility-finite-field-phases}
  R_{\alpha,\beta}:=W(A_\alpha B_\beta)\in M_{197}.
\end{equation}
For a nondegenerate rectangle, the corresponding product of four matrices is,
up to a scalar phase, the Weyl matrix indexed by
\[
  (A_{\alpha_1}-A_{\alpha_0})
  (B_{\beta_0}-B_{\beta_1})\neq0.
\]
Its normalized trace vanishes.  Thus this family satisfies the rectangle
trace identity and preserves the exact Bell output
\eqref{eq:admissibility-exact-Bell}.

Set
\begin{equation}\label{eq:admissibility-TNA}
  T_{\rm NA}
  :=\frac{44259}{50000000}
  =0.00088518
\end{equation}
and
\begin{equation}\label{eq:admissibility-HBell}
  H_{\rm Bell}
  :=H\!\left(
    \frac5{39}\psi_{195}^+
    +\frac{34}{39}\frac{I_{195^2}}{195^2}
  \right).
\end{equation}

\paragraph{The local entropy computation.}
The numerical problem is finite-dimensional but delicate because the final
margin is small.  We record the reduction used in the computation, both to
make the origin of the numbers transparent and to separate it from the
analytic existence argument.

At a full-support stationary point, the fidelity constraint is active.  If
$u$ is an eigenvalue that is neither the largest nor an eigenvalue affected
by an order constraint, the KKT equation has the form
\begin{equation}\label{eq:admissibility-middle-KKT}
  -\log u-1+\alpha+\frac{\gamma}{2\sqrt u}=0.
\end{equation}
The left-hand side has at most two positive roots.  Together with the
second-order condition, this reduces the interior stationary points to a
finite collection of three-level and one-exception branches.  Rank-deficient
and order-equality cases give additional boundary branches.

The branch selected by high-precision minimization has spectrum
\begin{equation}\label{eq:admissibility-three-level-spectrum}
  (x,\underbrace{y,\ldots,y}_{193\ \mathrm{times}},z),
\end{equation}
with
\begin{equation}\label{eq:admissibility-xyz-intervals}
\begin{aligned}
  0.1797208952964&<x<0.1797208952966,\\
  0.0042326498177&<y<0.0042326498179,\\
  0.0033776898759&<z<0.0033776898762.
\end{aligned}
\end{equation}
On this branch one may use $z$ as the free variable.  Put
\begin{equation}\label{eq:admissibility-xyz-parametrization}
\begin{aligned}
  B(z)&:=\sqrt{k(1-t)}-\sqrt z,\\
  d(z)&:=\sqrt{\frac{(k-1)(1-z)-B(z)^2}{k-2}},\\
  v(z)&:=\frac{B(z)-d(z)}{k-1},\\
  u(z)&:=B(z)-(k-2)v(z),\\
  x(z)&:=u(z)^2,
  \qquad
  y(z):=v(z)^2.
\end{aligned}
\end{equation}
Then
\[
  x(z)+(k-2)y(z)+z=1,
  \qquad
  \sqrt{x(z)}+(k-2)\sqrt{y(z)}+\sqrt z
  =\sqrt{k(1-t)},
\]
and, because $z>T_{\rm NA}$ on the interval above, the steepened entropy is
\begin{equation}\label{eq:admissibility-one-variable-entropy}
\begin{aligned}
  \mc H_T(z)
  :={}&-(x(z)+T)\log(x(z)+T)
       -(k-2)y(z)\log y(z)\\
     &-(z-T)\log(z-T).
\end{aligned}
\end{equation}
Solving the one-variable stationary equation at high precision gives
\[
\begin{aligned}
  z&=0.0033776898760262259635\ldots,\\
  \mc E_{195,5/39}(T_{\rm NA})
    &=4.788343820071748458\ldots,\\
  H_{\rm Bell}
    &=9.576687612235380095\ldots,
\end{aligned}
\]
and hence the numerical margin
\[
  2\mc E_{195,5/39}(T_{\rm NA})-H_{\rm Bell}
  =2.7908116821\ldots\times10^{-8}.
\]

For the parameters in \eqref{eq:admissibility-numerical-parameters},
\begin{equation}\label{eq:admissibility-numerical-s-r0-d0}
  s=57\,836\,025,
  \qquad
  r_0=1\,460\,730\,375,
  \qquad
  d_0=38\,024.
\end{equation}
Choose
\begin{equation}\label{eq:admissibility-eta-beta}
  \eta:=\frac{88497}{100000000},
  \qquad
  \beta:=\frac{42}{100000000}.
\end{equation}
Then
\begin{equation}\label{eq:admissibility-eta-beta-TNA}
  \eta+\frac\beta2=T_{\rm NA},
  \qquad
  2\eta+\beta=2T_{\rm NA}.
\end{equation}
The elementary estimates
\begin{equation}\label{eq:admissibility-numerical-rounding}
\begin{aligned}
  \frac{\pi r_0}{3}&<1\,529\,673\,402,\\
  \vartheta_{38024}&<680\,695,\\
  8(1+\log s)&<151,\\
  1+\frac1\eta&=\frac{100088497}{88497},\\
  \frac{300(1+\eta)}{\beta}&<714\,917\,836
\end{aligned}
\end{equation}
turn \eqref{eq:admissibility-explicit-N-condition} into the explicit integer
\begin{equation}\label{eq:admissibility-NNA}
\begin{aligned}
  N_{\mathrm{NA}}
  :=1+\Bigg\lceil{}&
  1\,529\,673\,402\cdot680\,695
  \left(\frac{100088497}{88497}\right)^{38024} \times151^8\times(714\,917\,836)^9
  \Bigg\rceil.
\end{aligned}
\end{equation}

\begin{theorem}[Finite-dimensional consequence of the numerical bounds]
\label{thm:admissibility-explicit-certificate}
There exists a tuple
\begin{equation}
  \boldsymbol\sigma_{N_{\mathrm{NA}}}
  \in S_{N_{\mathrm{NA}}}^{57\,836\,025}
\end{equation}
for which
\begin{equation}\label{eq:admissibility-explicit-nonadditivity}
  \MOE\!\left(
    \Phi_{N_{\mathrm{NA}}}^{\boldsymbol\sigma}
    \otimes
    \overline{\Phi_{N_{\mathrm{NA}}}^{\boldsymbol\sigma}}
  \right)
  <2\MOE(\Phi_{N_{\mathrm{NA}}}^{\boldsymbol\sigma}).
\end{equation}
The corresponding dimensions are
\begin{equation}\label{eq:admissibility-explicit-dimensions}
\begin{aligned}
  \dim(\textup{output})
  &=195,\\
  \dim(\textup{environment})
  &=58\,429\,215\,(N_{\mathrm{NA}}-1),\\
  \dim(\textup{input})
  &=1\,460\,730\,375\,(N_{\mathrm{NA}}-1).
\end{aligned}
\end{equation}
\end{theorem}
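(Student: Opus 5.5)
The proof chains the quantitative existence bound of \Cref{prop:admissibility-explicit-N-condition} with the local entropy erosion bound \eqref{eq:admissibility-local-entropy-bound} and the exact Bell identity \eqref{eq:admissibility-exact-Bell}.

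\emph{Step 1: admissibility.} Fix the parameters \eqref{eq:admissibility-numerical-parameters}. Use the finite-field phase family \eqref{eq:admissibility-finite-field-phases} with $\ell=197$, which satisfies the rectangle trace identity. This gives $s=57\,836\,025$, $r_0=k^2qm\ell=1\,460\,730\,375$ and $d_0=k^2-1=38\,024$. With $\eta,\beta$ as in \eqref{eq:admissibility-eta-beta}, the rounding estimates \eqref{eq:admissibility-numerical-rounding} show that the right-hand side of \eqref{eq:admissibility-explicit-N-condition} is at most the quantity inside the ceiling in \eqref{eq:admissibility-NNA}. Hence $N_{\mathrm{NA}}$ strictly exceeds it. The one verification to do here is that every ingredient of \Cref{lem:admissibility-polynomial-representation}, \eqref{eq:admissibility-qA-bounds} and \Cref{lem:admissibility-explicit-tail} used only the rectangle identity, the unitarity of the $R_\xi$, and $\ell$ through $r_0$. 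This holds, since $\mc D(\boldsymbol W)$ is unitary for any unitary family and the Frobenius coefficient bound only enters through $K\le 2$. \Cref{prop:admissibility-explicit-N-condition} then yields a tuple with $d_{\to,T}(\mc O_N^{\boldsymbol\sigma},\mc K_{k,t})\le\eta+\beta/2=T_{\rm NA}$.

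\emph{Step 2: one-copy lower bound.} By \eqref{eq:admissibility-local-entropy-bound}, $\MOE(\Phi_N^{\boldsymbol\sigma})\ge\mc E_{195,5/39}(T_{\rm NA})$. This step rests on the Hanson--Datta majorization, so no further argument is needed.

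\emph{Step 3: two-copy upper bound.} The input $\psi_{r_N}^+$ and the exact identity \eqref{eq:admissibility-exact-Bell} give $\MOE(\Phi_N^{\boldsymbol\sigma}\otimes\overline{\Phi_N^{\boldsymbol\sigma}})\le H_{\rm Bell}$. Combining Steps 2 and 3, nonadditivity follows from $2\mc E_{195,5/39}(T_{\rm NA})-H_{\rm Bell}\approx2.79\times10^{-8}>0$.

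\emph{Step 4: dimensions.} Read these off from \eqref{eq:finite-Hn-vectorized} and \eqref{eq:rn-rank}. The output dimension is $k=195$. The input dimension is $r_N=r_0(N-1)$. The environment dimension is $d_N=km^2\ell(N-1)=195\cdot1521\cdot197\,(N-1)=58\,429\,215\,(N-1)$.

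\emph{Main obstacle.} The real obstacle is certifying that $\mc E_{195,5/39}(T_{\rm NA})$, a minimum over all of $\mc K_{k,t}$, is attained on the three-level branch \eqref{eq:admissibility-three-level-spectrum}. The margin is only about $10^{-8}$, so this must be rigorous. My approach would be as follows. First, use Schur-concavity to reduce to spectra on the fidelity boundary $\sum_i\sqrt{\lambda_i}=\sqrt{k(1-t)}$. Second, apply the KKT analysis \eqref{eq:admissibility-middle-KKT}, under which middle eigenvalues take at most two values. This leaves a finite list of branches (three-level, one-exception, rank-deficient, and order-equality). Third, bound the steepened entropy on each branch from below with interval arithmetic, and check that the minimum over the competing branches exceeds $H_{\rm Bell}/2$. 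On the selected branch, use the one-variable parametrization \eqref{eq:admissibility-xyz-parametrization} with rigorous enclosures of $z$ in the interval \eqref{eq:admissibility-xyz-intervals}. I would lean on the reproducible high-precision computation for the enclosures. The structural reduction to finitely many branches is the part I expect to need the most care.
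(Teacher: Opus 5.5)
Your proposal is correct and follows essentially the same route as the paper: \Cref{prop:admissibility-explicit-N-condition} with $\eta+\beta/2=T_{\rm NA}$, then the Hanson--Datta local entropy bound, then the exact Bell identity, then the numerical margin $2\mc E_{195,5/39}(T_{\rm NA})>H_{\rm Bell}$, with the dimensions read off from $r_N=r_0(N-1)$ and $d_N=km^2\ell(N-1)$. Your added check that the admissibility machinery uses the finite-field phase family only through unitarity (and $\ell$ only through $r_0$), and your branch-by-branch plan for certifying $\mc E_{195,5/39}(T_{\rm NA})$, are consistent with the paper, whose proof simply invokes the certified interval bounds at that step.
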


\begin{proof}
By \Cref{prop:admissibility-explicit-N-condition} and
\eqref{eq:admissibility-eta-beta-TNA}, there is a permutation tuple satisfying
\[
  d_{\to,T}\!\left(
    \mc O_{N_{\mathrm{NA}}}^{\boldsymbol\sigma},
    \mc K_{195,5/39}
  \right)
  \leq T_{\rm NA}.
\]
Equation \eqref{eq:admissibility-local-entropy-bound} therefore yields
\[
  \MOE(\Phi_{N_{\mathrm{NA}}}^{\boldsymbol\sigma})
  \geq\mc E_{195,5/39}(T_{\rm NA}).
\]
The exact Bell identity gives the complementary upper bound
\[
  \MOE\!\left(
    \Phi_{N_{\mathrm{NA}}}^{\boldsymbol\sigma}
    \otimes
    \overline{\Phi_{N_{\mathrm{NA}}}^{\boldsymbol\sigma}}
  \right)
  \leq H_{\rm Bell}.
\]
The certified interval bounds imply
$H_{\rm Bell}<2\mc E_{195,5/39}(T_{\rm NA})$, and the desired strict
inequality follows.  Substituting $k=195$, $m=39$, $q=5$, and $\ell=197$ in
the dimension formulas of \Cref{sec:finite-dimensional-channel} gives
\eqref{eq:admissibility-explicit-dimensions}.
\end{proof}

\begin{remark}[Scale and status of the numerical estimate]
\label{rem:admissibility-practical-size}
\label{rem:numerical-certificate-status}
\label{rem:admissibility-computational-certificate}
Direct evaluation of \eqref{eq:admissibility-NNA} gives
\begin{equation}\label{eq:admissibility-NNA-log}
  116216.7340
  <\log_{10}N_{\mathrm{NA}}
  <116216.7343.
\end{equation}
Thus $N_{\mathrm{NA}}$ has $116\,217$ decimal digits and
\begin{equation}
  N_{\mathrm{NA}}\le5.422\times10^{116216}.
\end{equation}
The input dimension has $116\,226$ decimal digits, with
\begin{equation}
  \dim(\textup{input})
  \le7.920\times10^{116225}.
\end{equation}
This is a sufficient upper bound, a possibly smaller dimension might be possible. 

The O'Donnell--Wu theorem
removes exhaustive search at the asymptotic level, but it does not make the
specific numerical instance practically executable. 
\end{remark}

\bibliography{references}
\end{document}